\DeclareMathOperator{\Ad}{Ad}
\DeclareMathOperator{\ad}{ad}
\DeclareMathOperator{\Gr}{Gr}
\DeclareMathOperator{\haut}{ht}
\DeclareMathOperator{\id}{id}
\DeclareMathOperator{\im}{im}
\newcommand{\cA}{{\mathcal A}}
\newcommand{\cE}{{\mathcal E}}
\newcommand{\cG}{{\mathcal G}}
\newcommand{\cH}{{\mathcal H}}
\newcommand{\cJ}{{\mathcal J}}
\newcommand{\cL}{{\mathcal L}}
\newcommand{\cP}{{\mathcal P}}
\newcommand{\CC}{{\mathbb C}}
\newcommand{\NN}{{\mathbb N}}
\newcommand{\RR}{{\mathbb R}}
\newcommand{\ZZ}{{\mathbb Z}}
\newcommand{\fb}{{\mathfrak b}}
\newcommand{\fg}{{\mathfrak g}}
\newcommand{\fh}{{\mathfrak h}}
\newcommand{\fp}{{\mathfrak p}}
\newcommand{\set}[1]{\left\{ #1 \right\}}
\newcommand{\smooth}{C^{\infty}}
\theoremstyle{plain}
\newtheorem{thm}{Theorem}[subsection]
\newtheorem{lem}[thm]{Lemma}
\newtheorem{pro}[thm]{Proposition}
\newtheorem{cor}[thm]{Corollary}
\theoremstyle{definition}
\newtheorem{dfn}[thm]{Definition}
\theoremstyle{remark}
\newtheorem{rem}[thm]{Remark}
\numberwithin{equation}{subsection}
\numberwithin{figure}{subsection}
\begin{document}

%
%
\title{Courant Algebroids in Parabolic Geometry}

\author{Stuart Armstrong}
\author{Rongmin Lu}
\thanks{Corresponding author: Rongmin Lu. Email: \texttt{rongmin.lu@gmail.com}}

\begin{abstract}
Let $\fp$ be a Lie subalgebra of a semisimple Lie algebra $\fg$ and $(G,P)$ be the corresponding pair of connected Lie groups. A Cartan geometry of type $(G,P)$ associates to a smooth manifold $M$ a principal $P$-bundle and a Cartan connection, and a parabolic geometry is a Cartan geometry where $P$ is parabolic. We show that if $P$ is parabolic, the adjoint tractor bundle of a Cartan geometry, which is isomorphic to the Atiyah algebroid of the principal $P$-bundle, admits the structure of a (pre-)Courant algebroid, and we identify the topological obstruction to the bracket being a Courant bracket. For semisimple $G$, the Atiyah algebroid of the principal $P$-bundle associated to the Cartan geometry of $(G,P)$ admits a pre-Courant algebroid structure if and only if $P$ is parabolic.
\end{abstract}
\maketitle
\tableofcontents
%
%

\section{Introduction}

In this paper, we show that non-exact transitive Courant algebroids, possibly twisted by a closed 4-form, appear naturally in parabolic geometries. As a result, we are also able to show that the Atiyah algebroid of the principal $P$-bundle of a Cartan geometry of type $(G,P)$, for $G$ a semisimple Lie group, admits the structure of a pre-Courant algebroid if and only if $P$ is a parabolic subgroup of $G$. 

\subsection{Motivation}

Courant~\cite{Courant90} and Dorfman~\cite{Dorfman87} independently discovered the Courant bracket on the bundle $TM\oplus T^*M \to M$ while studying Dirac structures. Liu et al.~\cite{LiuWeinsteinXu97} later realised this bracket on the double $A\oplus A^*$ of a Lie bialgebroid $(A,A^*)$ and called the structure endowed by the bracket a \emph{Courant algebroid}. These prototypical Courant algebroids are called \emph{exact} because an exact Courant algebroid $E$ fits into an exact sequence of vector bundles
\[0\to A^* \to E \to A \to 0.\]
Exact Courant algebroids appeared most prominently in generalised complex geometry (Gualtieri~\cite{Gualtieri04,Gualtieri11}, Hitchin~\cite{Hitchin03}), where the study of the Courant algebroid introduced by Courant and Dorfman has led to interesting applications in mathematics and physics.

However, there are Courant algebroids which do not fit into such an exact sequence, since exactness is not a necessary consequence of the axioms defining a Courant algebroid. Indeed, exact Courant algebroids are the trivial examples of Courant algebroids, in the sense that Roytenberg~\cite{Roytenberg02} and \v{S}evera~\cite{Severa00} have shown that any exact Courant algebroid is a deformation of the standard Courant algebroid $TM\oplus T^*M$ by a closed 3-form.

Once we do not demand exactness, the next step is to consider \emph{transitive} Courant algebroids, where the anchor map $E\to TM$ is still surjective. Their structure has been clarified recently by Bressler~\cite{Bressler07} and Chen et al.~\cite{ChenStienonXu09}, following the outline sketched by \v{S}evera~\cite{Severa00}. It turns out that any transitive Courant algebroid $E$ is of the form 
\[E \cong TM\oplus \, \cE \, \oplus T^*M,\]
where $\cE$ is another vector bundle over $M$. Ginot and Gr\"{u}tzmann~\cite{GinotGrutzmann09} have also computed the cohomology of transitive Courant algebroids. Nevertheless, there appears to be very few examples of non-exact transitive Courant algebroids in the literature, outside of the work of Bursztyn et al.~\cite{BCG07} on reduction for Courant algebroids.

\subsection{Non-exact Courant algebroids in parabolic geometry}

The study of parabolic geometries is the study of Cartan connections on principal bundles that have a parabolic subgroup $P$ of a semisimple Lie group $G$ as their structure group. The notion of a parabolic geometry has emerged as a unifying framework and a powerful tool for the study of conformal geometry (Armstrong and Leistner~\cite{ArmstrongLeistner07}, Bailey et al.~\cite{BaileyEastwoodGover94}, \v{C}ap and Gover~\cite{CapGover02}), projective geometry (Eastwood~\cite{Eastwood08}), CR geometry (\v{C}ap~\cite{Cap02}, \v{C}ap and Gover~\cite{CapGover08}) and related contact structures (Fox~\cite{Fox05}) in differential geometry (see also the bibliography of \v{C}ap and Slov\'{a}k~\cite{CapSlovak09}). There is a rich representation theory associated to parabolic geometries, since the classification of parabolic subalgebras $\fp$ of semisimple Lie algebras $\fg$ (see \v{C}ap and Slov\'{a}k~\cite{CapSlovak09}) parallels the classification of semisimple Lie algebras.

In this paper, we show that transitive Courant algebroids that are \emph{a priori} non-exact, possibly with a bracket twisted by a closed 4-form, arise in parabolic geometries. Our work here extends the work done in an unpublished manuscript of Armstrong~\cite{Armstrong07}, the first author of this paper. Besides identifying the transitive Courant algebroid arising in a parabolic geometry, we have also isolated here the topological obstruction to the bracket being a Courant bracket, which is the closed 4-form alluded to earlier. This paper also takes a different approach towards deriving a formula for the bracket and our results hold without the technical assumptions, made by Armstrong~\cite{Armstrong07}, of regularity and normality on the parabolic geometry.

The candidate non-exact transitive Courant algebroid in a parabolic geometry is called the adjoint tractor bundle $\cA M$ (Definition \ref{dfn:adjoint_tractor}). Besides carrying the structure of a Courant algebroid, it is also naturally a Lie algebroid. In fact, $\cA M$ turns out to be the Atiyah algebroid $T\cP/P$ of the principal $P$-bundle $\cP$ defining the parabolic geometry. In addition, the Cartan connection of the parabolic geometry gives an isomorphism between the exact sequence of vector bundles
\[0\to \cP \times_P \fp \to \cA M \to \cP \times_P \fg/\fp \to 0,\]
induced from the exact sequence of vector spaces $0\to \fp \to \fg \to \fg/\fp \to 0$, and the Atiyah sequence (Atiyah~\cite{Atiyah57})
\[0 \to \cP \times_P \fp \to T\cP/P \to TM \to 0.\]
associated to $\cP$~\cite[Theorem 1]{Crampin09}. 

We also find in Corollary \ref{Courant_extension_AM} that we can realise $\cA M$ as a Courant extension (in the sense of Bressler~\cite{Bressler07}) 
\[0 \to T^*M \to \cA M \to T\cG_0/G_0 \to 0\]
of $T\cG_0/G_0$, which is the Atiyah algebroid given by the Atiyah sequence
\[0 \to \cG_0 \times_{G_0} \fg_0 \to T\cG_0/G_0 \to TM \to 0\]
of the principal bundle $\cG_0$ with structure group $G_0\leq P$. Following the methods of \v{S}evera~\cite{Severa00}, Bressler~\cite{Bressler07} and Chen et al.~\cite{ChenStienonXu09}, this leads us to derive detailed formulas for the Courant bracket on $\cA M$ (Theorem \ref{main_theorem}) from the Courant algebroid axioms. 

As alluded to in the above, however, there is an obstruction to the Courant bracket satisfying the Jacobi identity, which is the first Pontrjagin class (more precisely, half the first Pontrjagin class) associated to the Atiyah algebroid of $\cG_0$. When this does not hold, the bracket satisfies a twisted Jacobi identity and we get a pre-Courant bracket on a pre-Courant algebroid, in the sense of Vaisman~\cite{Vaisman05}.

\subsection{Outline of the paper}

We begin with a brief review of the theory of Lie algebroids in \S \ref{sec:Atiyah}, with an emphasis on the Atiyah sequence. We then consider Cartan and parabolic geometries, and the associated adjoint tractor bundles, in \S \ref{sec:adjoint_tractor}. Our treatment is rather brief and the reader should refer to the seminal work of \v{C}ap and Slov\'{a}k~\cite{CapSlovak09} or the papers referenced therein (e.g. \v{C}ap and Gover~\cite{CapGover02}) for further details. The Weyl structure of a parabolic geometry is then introduced to identify the adjoint tractor bundle $\cA M$ with its associated graded bundle $\Gr(\cA M)$. 

We then consider the theory of (pre-)Courant algebroids in \S \ref{sec:structure}. It turns out that $\cA M$ has the same underlying vector bundle structure as a transitive Courant algebroid, so we are able to derive explicit formulas for the pre-Courant bracket on $\cA M$. The Weyl structure of a parabolic geometry also allows us to relate the 3-form $\cH$ in the bracket to the first Pontrjagin class associated to the Atiyah algebroid of the principal $G_0$-bundle $\cG_0$. This gives a topological condition for the bracket to satisfy the Jacobi identity and an interesting characterisation of parabolic geometries with flat Weyl connections. We then prove Theorem \ref{classification}, which shows, in particular, that parabolic geometries supply examples of pre-Courant algebroid structures on Atiyah algebroids that seem to have been missing so far from the literature.

Finally, we discuss in \S \ref{sec:Conclusion} some of the connections the present work has with recent developments in related areas and conclude with speculations for future work.

Throughout this paper, manifolds are assumed to be smooth, finite-dimensional, orientable, second countable and paracompact.

\section{The Atiyah sequence and principal connections}\label{sec:Atiyah}

We review the construction of the Atiyah sequence of a principal bundle $\cP$ over a smooth manifold $M$, recast the definition of a principal connection in terms of the Atiyah sequence, and consider the existence of isotropic principal connections with respect to a non-degenerate metric. Most of the material in this section is standard, so our treatment here shall be brief.

\subsection{The Atiyah sequence}

\begin{dfn}\label{dfn:Lie_algebroid}
A \emph{Lie algebroid} on $M$ is a vector bundle $E\to M$ equipped with a bundle map $\alpha\colon E \to TM$, called the \emph{anchor} of $E$, and a skew-symmetric $\RR$-bilinear bracket on the sections of $E$, $[-,-]\colon \Gamma E \times \Gamma E \to \Gamma E$, which satisfies the following:
\begin{enumerate}
	\item The Jacobi identity, i.e. for all $X,Y,Z\in\Gamma E$,
	\[[X,[Y,Z]] = [[X,Y],Z] + [Y,[X,Z]].\]
	
	\item The Leibniz rule, i.e. for all $X,Y\in\Gamma E$, $f\in \smooth(M)$,
	\[[X,fY] = f[X,Y] + \alpha(X)(f) Y.\]
	
	\item The bracket commutes with the anchor, i.e. for all $X,Y\in\Gamma E$,
	\[\alpha([X,Y]) = [\alpha(X),\alpha(Y)].\]
\end{enumerate}
\end{dfn}

Some basic examples are Lie algebras and the tangent bundle of $M$. The example that is most relevant for our purposes, however, is the Atiyah sequence associated to a principal $H$-bundle $\cP$ over $M$. 

Consider a principal $H$-bundle $\pi\colon\cP \to M$, with the Lie group $H$ acting on $\cP$ from the right. The right action of $H$ on $\cP$ induces a right action of $H$ on $T\cP$. Sections of the bundle $T\cP/H$ over $\cP/H=M$ are just $H$-invariant vector fields on $P$, so this is a vector bundle over $M$. The kernel of the projection $\alpha:=\pi_* \colon T\cP/H \to TM$ is, by definition, the bundle of vertical $H$-invariant vector fields on $\cP$, $V\cP/H$. Thus, we obtain the exact sequence of vector bundles
\[0 \to V\cP/H \hookrightarrow T\cP/H \xrightarrow{\alpha} TM \to 0.\]

This could be called the Atiyah sequence, but it is usual and more useful to identify $V\cP/H$ with $\cP \times_H \fh$, which is the vector bundle associated to $\cP$ by the adjoint representation of $H$ on $\fh$. First, we recall the following definition.

\begin{dfn}\cite[\S3.9]{GreubHalperinVanstone73} \label{dfn:fundamental}
Let $R\colon \cP \times H \to \cP$ be the right action on $\cP$ and $T$ be the tangent map. The map defined by
\[j\colon \cP \times \fh \to T\cP, \quad (u,X) \mapsto T(R)_{(u,e)}(0,X),\]
where $e$ is the identity of $H$, is called the fundamental vector field map. The image $j(X)$ of $X\in\fh$ is referred to as the \emph{fundamental vector field} generated by $X$.
\end{dfn}

We state the following result (\cite[Proposition A.3.2]{Mackenzie87} or \cite[Proposition 3.2.2]{Mackenzie05}).

\begin{pro}
The fundamental vector field map is a vector bundle isomorphism over $M$ identifying $\cP \times_H \fh$ with $V\cP/H$.
\end{pro}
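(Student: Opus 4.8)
The plan is to prove that the fundamental vector field map $j$ restricts to a vector bundle isomorphism $\cP \times_H \fh \xrightarrow{\sim} V\cP/H$ over $M$. First I would verify that $j$ lands in the vertical subbundle: for fixed $u\in\cP$, the map $X\mapsto j(u,X)$ is the derivative at $e$ of the orbit map $h\mapsto R_h(u)=u\cdot h$, whose image lies in the fibre $\pi^{-1}(\pi(u))$, so $\pi_*\circ j = 0$ and $j(u,X)\in V_u\cP$. Since $H$ acts freely, this orbit map is an immersion, so $X\mapsto j(u,X)$ is injective; comparing dimensions ($\dim\fh = \dim V_u\cP$, as the fibres of $\pi$ are $H$-torsors) shows it is a linear isomorphism $\fh \xrightarrow{\sim} V_u\cP$ on each fibre. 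Thus $j\colon \cP\times\fh \to V\cP$ is a vector bundle isomorphism over $\cP$.

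The key step is then to show that $j$ is $H$-equivariant, so that it descends to the quotients. The relevant action of $H$ on $\cP\times\fh$ is the diagonal one $(u,X)\cdot h = (u\cdot h, \Ad(h^{-1})X)$, whose quotient is by definition the associated bundle $\cP\times_H\fh$; the action on $V\cP\subseteq T\cP$ is the tangent-lifted right action. The required identity $j(u\cdot h, \Ad(h^{-1})X) = (R_h)_* \, j(u,X)$ follows from differentiating the relation $R_h\circ R_{\exp(tX)} = R_{h\exp(tX)} = R_{\exp(t\Ad(h^{-1})X)}\circ R_h$ at $t=0$, which is just the standard fact that conjugation in $H$ differentiates to the adjoint representation. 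Equivariance lets $j$ pass to a well-defined bundle map $\bar{j}\colon \cP\times_H\fh \to V\cP/H$ covering the identity on $M$.

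Finally I would conclude that $\bar{j}$ is an isomorphism: a vector bundle map covering the identity that is a fibrewise linear isomorphism is a vector bundle isomorphism, and fibrewise $\bar{j}$ is induced by the fibrewise isomorphisms of $j$ established above, so it is bijective on each fibre. Smoothness of $j$ (hence of $\bar{j}$) is immediate from smoothness of the action $R$ and of $T$, so the inverse is automatically smooth as well.

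The main obstacle is largely bookkeeping rather than conceptual: one must keep careful track of which $H$-action is in play on each space (the diagonal action with the $\Ad(h^{-1})$ twist on $\cP\times\fh$ versus the lifted action on $T\cP$) so that the equivariance computation produces exactly the associated-bundle quotient $\cP\times_H\fh$ rather than some other twist. Everything else—verticality, fibrewise bijectivity, and descent to the quotient—is routine once the freeness of the $H$-action and the dimension count are in hand.
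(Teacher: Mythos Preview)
Your proof is correct and follows the standard argument. The paper itself does not give a proof of this proposition; it simply cites Mackenzie~\cite[Proposition A.3.2]{Mackenzie87} (equivalently \cite[Proposition 3.2.2]{Mackenzie05}), and your outline is essentially the proof one finds there.
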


Hence, we have the desired exact sequence:

\begin{dfn}\label{dfn:Atiyah_sequence}
Let $\pi\colon \cP\to M$ be a principal $H$-bundle and $\fh$ be the Lie algebra of $H$. The exact sequence of vector bundles over $M$
\[0 \to \cP \times_H \fh \xrightarrow{j} T\cP/H \xrightarrow{\alpha} TM \to 0\]
is the \emph{Atiyah sequence} of $\cP$. We shall call the bundle $T\cP/H$ the \emph{Atiyah algebroid} of $\cP$. The bundle $\cP\times_H \fh$ is usually referred to as the \emph{adjoint bundle} of $\cP$.
\end{dfn}

\begin{rem}
In fact, the Atiyah algebroid of $\cP$ is a Lie algebroid, as sections of $T\cP/H$ can be identified with $H$-invariant vector fields on $\cP$, which are closed under the Lie bracket and are projectable onto $M$. For further details, the reader is referred to \cite[pp. 93--97]{Mackenzie05}.
\end{rem}

\subsection{Principal and Lie algebroid connections}

We now turn to the relationship between Lie algebroid connections of the Atiyah algebroid and principal connections on $\cP$. 

\begin{dfn}\label{dfn:principal_connection}
A \emph{principal connection} on $\cP$ is a smooth horizontal distribution $\cH\cP\subset T\cP$, such that $T\cP \cong \cH\cP \oplus V\cP$ and $\cH\cP$ is invariant under the right action of the structure group $H$. A \emph{principal connection (form)} on $\cP$ is a 1-form $\omega\in\Omega^1(\cP,\fh)$ that satisfies the following conditions:
\begin{enumerate}
	\item It reproduces the generators of fundamental vector fields, i.e. for any $X\in \fh$, \[\omega(j(X)) = X,\] where $j(X)$ is the fundamental vector field (Definition \ref{dfn:fundamental}) generated by $X$.\\

	\item It is $H$-equivariant, i.e. for any $h\in H$, \[(R_h)^*\omega = \Ad(h^{-1})\omega,\] where $R_h$ is the right action of $H$ on $\cP$ and $\Ad$ is the adjoint action of $H$ on $\fh$.
\end{enumerate}
\end{dfn}

Next, we recall the notion of a Lie algebroid connection of the Atiyah algebroid.

\begin{dfn}\label{dfn:Lie_algebroid_connection}
A \emph{(Lie algebroid) connection} of the Atiyah algebroid of $\cP$ is a section $\gamma\colon TM\to T\cP/H$ of the anchor $\alpha\colon T\cP/H \to TM$. A \emph{connection reform} is a vector bundle morphism $\omega\colon T\cP/H \to \cP\times_H \fh$ such that $\omega \circ j = \id_{\cP\times_H \fh}$, where $j\colon \cP\times_H \fh \to T\cP/H$ is the fundamental vector field map.
\end{dfn}

The following result on Lie algebroid connections, which is a specialisation of \cite[Proposition 5.2.6]{Mackenzie05}, is well-known.

\begin{pro}\cite[Proposition 5.2.6]{Mackenzie05}
For the Atiyah sequence of a principal bundle $\cP\to M$,
\[0 \to \cP \times_H \fh \xrightarrow{j} T\cP/H \xrightarrow{\alpha} TM \to 0,\]
there exist a right inverse $\gamma\colon TM\to T\cP/H$ to $\alpha$ and a left inverse $\omega\colon T\cP/H \to \cP\times_H \fh$ to $j$. In addition, if either a right inverse $\gamma$ or a left inverse $\omega$ is given, the other can be uniquely chosen to satisfy the condition
\[j\circ \omega + \gamma\circ \alpha = \id_{T\cP/H}.\]
Thus, there is a one-to-one correspondence between connections and connection reforms of the Atiyah algebroid of $\cP$.
\end{pro}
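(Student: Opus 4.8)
The plan is to treat this as a statement about splittings of the short exact sequence of vector bundles
\[0 \to \cP \times_H \fh \xrightarrow{j} T\cP/H \xrightarrow{\alpha} TM \to 0,\]
and to separate the argument into an existence part and a bijection part. For existence, I would invoke the fact that over a paracompact base every short exact sequence of vector bundles splits: equipping $T\cP/H$ with a bundle metric built from a partition of unity subordinate to a trivialising cover, and taking the orthogonal complement of $\im j$, produces a subbundle complementary to $\ker\alpha=\im j$. The metric projection onto this complement, composed with the inverses of the isomorphisms $\alpha$ restricted to the complement and $j$ onto its image, then supplies a right inverse $\gamma$ of $\alpha$ and a left inverse $\omega$ of $j$ simultaneously; alternatively, one patches the obvious local splittings over a trivialising cover with the same partition of unity.

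For the bijection I would argue formally. Given a right inverse $\gamma$ with $\alpha\circ\gamma=\id_{TM}$, set $P:=\id_{T\cP/H}-\gamma\circ\alpha$. Since $\alpha\circ P=\alpha-(\alpha\circ\gamma)\circ\alpha=0$, exactness forces $\im P\subseteq\ker\alpha=\im j$, and injectivity of $j$ yields a unique bundle morphism $\omega$ with $j\circ\omega=P$, that is, $j\circ\omega+\gamma\circ\alpha=\id$. One then checks that $\omega$ is a left inverse of $j$ by computing $j\circ(\omega\circ j)=(\id-\gamma\circ\alpha)\circ j=j$, using $\alpha\circ j=0$, and cancelling the injective $j$. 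Conversely, given a left inverse $\omega$ with $\omega\circ j=\id$, set $Q:=\id-j\circ\omega$; then $Q\circ j=0$, so $Q$ vanishes on $\ker\alpha$ and therefore factors uniquely through the surjection $\alpha$ as $Q=\gamma\circ\alpha$, with $\alpha\circ\gamma=\id_{TM}$ following from $\alpha\circ Q=\alpha$. Both constructions are pinned down by the single identity $j\circ\omega+\gamma\circ\alpha=\id_{T\cP/H}$, hence they are mutually inverse; since a connection is precisely a right inverse $\gamma$ of $\alpha$ and a connection reform precisely a left inverse $\omega$ of $j$, this is exactly the asserted one-to-one correspondence.

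The main obstacle is really only bookkeeping: the genuinely nontrivial input is the existence of a splitting, which is standard given the standing paracompactness assumption on $M$, while the correspondence is a diagram-chase. The one point requiring care is the smoothness of the maps produced by the factorisations, namely that $\omega$ and $\gamma$ are honest vector bundle morphisms over $M$ rather than merely fibrewise linear maps. This is guaranteed because exactness makes $\im j=\ker\alpha$ a subbundle of locally constant rank, so that $TM\cong (T\cP/H)/\im j$ and the complementary projections vary smoothly; the factorisations then inherit smoothness from $j$, $\alpha$, and the partition-of-unity construction.
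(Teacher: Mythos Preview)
Your argument is correct and is the standard proof: existence of a splitting via a partition-of-unity metric on $T\cP/H$ (which uses the paper's standing paracompactness hypothesis), followed by the formal diagram chase showing that the identity $j\circ\omega+\gamma\circ\alpha=\id$ sets up a bijection between right inverses of $\alpha$ and left inverses of $j$. The care you take with smoothness of the factorisations is appropriate and handled correctly.

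There is nothing to compare against, however: the paper does not supply its own proof of this proposition but simply cites \cite[Proposition 5.2.6]{Mackenzie05}. Your write-up is essentially the argument one finds in Mackenzie, so in that sense you have reproduced the intended proof rather than found an alternative route.
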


It follows that principal connections on $\cP$ are equivalent to connections of the Atiyah algebroid of $\cP$.

\begin{pro}\cite[Proposition 5.3.2]{Mackenzie05} \label{identify_Lie_algebroid_connections}
There is a one-to-one correspondence between connections of the Atiyah algebroid of $\cP$ and principal connections on $\cP$.
\end{pro}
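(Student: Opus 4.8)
The plan is to combine the bijection between connections and connection reforms of the Atiyah algebroid, established in the preceding proposition, with a direct identification of connection reforms with principal connection forms; since $j\circ\omega+\gamma\circ\alpha=\id_{T\cP/H}$ already pairs each connection $\gamma$ with a unique connection reform $\omega$, it suffices to set up a one-to-one correspondence between connection reforms $\omega\colon T\cP/H\to\cP\times_H\fh$ and principal connection forms in $\Omega^1(\cP,\fh)$. Throughout I would use the standard identifications of sections: a section of $T\cP/H$ is an $H$-invariant vector field on $\cP$, a section of $\cP\times_H\fh$ is a map $f\colon\cP\to\fh$ satisfying $f(u\cdot h)=\Ad(h^{-1})f(u)$, and $j$ sends the constant map with value $X\in\fh$ to the fundamental vector field generated by $X$.

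In the forward direction, given a principal connection form $\omega\in\Omega^1(\cP,\fh)$, I would define a connection reform by evaluating $\omega$ on $H$-invariant vector fields, $\xi\mapsto\omega(\xi)$. The $H$-equivariance of $\omega$ together with the invariance of $\xi$ forces $\omega(\xi)$ to satisfy the adjoint-equivariance condition above, so it is indeed a section of $\cP\times_H\fh$; and the tensoriality of $\omega$ as a $1$-form makes this assignment $\smooth(M)$-linear, hence induced by a bundle map $T\cP/H\to\cP\times_H\fh$. Condition (1) of Definition \ref{dfn:principal_connection}, namely $\omega(j(X))=X$, translates precisely into $\omega\circ j=\id_{\cP\times_H\fh}$, so this bundle map is a connection reform.

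For the inverse direction, given a connection reform $\omega$, I would reconstruct a $1$-form on $\cP$ pointwise: at $u\in\cP$ every $v\in T_u\cP$ occurs as the value $\xi_u$ of some $H$-invariant vector field $\xi$, and I would set $\omega_u(v)$ equal to the value at $u$ of the equivariant function $\omega(\xi)$. The step I expect to demand the most care is the well-definedness and smoothness of this recovery: one must check that invariant vector fields generate $T_u\cP$ at every point (so that every $v$ admits such an extension $\xi$) and that, by the $\smooth(\cP)$-tensoriality of a bundle map, the value $\omega(\xi)(u)$ depends only on $\xi_u=v$ and not on the chosen extension. Granting this, $\omega\circ j=\id$ yields condition (1), while the adjoint-equivariance of the target bundle yields the $H$-equivariance in condition (2), so $\omega$ is a genuine principal connection form, and the two constructions are visibly mutually inverse. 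Equivalently, and more geometrically, one may argue at the level of distributions: a connection $\gamma\colon TM\to T\cP/H$ splits the Atiyah sequence, and the image of $\gamma$, regarded back inside $T\cP$, is exactly an $H$-invariant complement $\cH\cP$ to $V\cP$, i.e.\ a principal connection in the sense of the horizontal-distribution form of Definition \ref{dfn:principal_connection}, with the reverse passage given by descending such a distribution to a splitting of $\alpha$.
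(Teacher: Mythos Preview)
The paper does not supply its own proof of this statement: it simply records the result with a citation to Mackenzie~\cite[Proposition 5.3.2]{Mackenzie05} and moves on. So there is no argument in the paper to compare against, only the reference.

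Your proposal is correct and is essentially the standard argument one finds in Mackenzie's treatment. The reduction to matching connection reforms with principal connection forms, via the preceding proposition on the bijection $\gamma\leftrightarrow\omega$, is exactly the right move. The forward direction is unproblematic. In the inverse direction you have correctly flagged the only point requiring genuine care, namely that every tangent vector $v\in T_u\cP$ extends to an $H$-invariant vector field; this holds because $H$ acts freely and properly, so the quotient map $T\cP\to T\cP/H$ is a surjective vector-bundle map and local sections exist. Once that is in hand, tensoriality gives well-definedness, and the two defining conditions of a principal connection form follow from $\omega\circ j=\id$ and the equivariance built into the associated bundle $\cP\times_H\fh$, just as you say. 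Your alternative description via horizontal distributions is also correct and is arguably the cleanest way to see the bijection at a glance.
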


In the following, we shall assume that the principal $H$-bundle $\cP$ is equipped with a non-degenerate metric $(-,-)$, which is invariant under the right action of $H$.

\begin{rem}
The metric induces the usual \emph{musical isomorphisms}. For any $u_1,u_2\in T\cP$, $\flat\colon T\cP \to T^*\cP$ is given by
\[\langle\flat(u_1),u_2\rangle = (u_1,u_2),\]
and its inverse is $\sharp\colon T^*\cP \to T\cP$.
\end{rem}

\begin{dfn}
Suppose there is a short exact sequence of vector bundles
\[0 \to E' \to E \xrightarrow{p} E'' \to 0\]
over a smooth manifold $M$, and that $E$ is equipped with a non-degenerate metric $(-,-)$. An \emph{isotropic splitting} of this sequence is a section $\gamma\colon E'' \to E$ of $p\colon E\to E''$ such that its image $\gamma(E'')$ is isotropic in $E$, i.e. the metric vanishes on $\gamma(E'')$, or $\gamma(E'') \subseteq \gamma(E'')^{\perp}$.
\end{dfn}

\begin{pro}\label{existence_isotropic_splitting}
An isotropic splitting of the Atiyah sequence exists.
\end{pro}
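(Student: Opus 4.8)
The plan is to begin with an arbitrary splitting of the Atiyah sequence and then correct it, by adding a term valued in the orthogonal complement of the kernel, to an isotropic one. By Proposition~\ref{identify_Lie_algebroid_connections} a splitting $\gamma_0\colon TM \to T\cP/H$ of $\alpha$ exists, since every principal $H$-bundle over a paracompact manifold carries a principal connection. Its image need not be isotropic, and the failure is recorded by the symmetric tensor $\sigma\in\Gamma(S^2 T^*M)$, $\sigma(X,Y) = (\gamma_0 X, \gamma_0 Y)$.

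First I would describe the orthogonal structure of the kernel $E' := \ker\alpha = \cP\times_H\fh$. Since $(-,-)$ is non-degenerate, $E'^{\perp}$ is a subbundle of rank $\dim M$, and the metric descends to a pairing $E'^{\perp}\times TM \to \RR$, $(w,\alpha e)\mapsto (w,e)$, which is well defined exactly because $w\in E'^{\perp}$; non-degeneracy of $(-,-)$ makes this pairing perfect, so $E'^{\perp}\cong T^*M$ canonically.

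Next I would seek the corrected splitting in the form $\gamma = \gamma_0 + B$ with $B\colon TM\to E'^{\perp}$ a bundle map. Provided $E'^{\perp}\subseteq E' = \ker\alpha$, the map $\gamma$ is again a section of $\alpha$, and
\[(\gamma X,\gamma Y) = \sigma(X,Y) + (\gamma_0 X, BY) + (\gamma_0 Y, BX) + (BX, BY).\]
If moreover $E'^{\perp}$ is isotropic, the quadratic term $(BX,BY)$ vanishes and the isotropy condition becomes affine in $B$: identifying $B$ with a $(0,2)$-tensor $\tilde\beta$ through $E'^{\perp}\cong T^*M$, it reads $\sigma(X,Y) + \tilde\beta(Y,X) + \tilde\beta(X,Y) = 0$, solved uniquely by the symmetric choice $\tilde\beta = -\tfrac12\sigma$. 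This determines $B$ smoothly, so $\gamma = \gamma_0 + B$ is a smooth isotropic splitting given by a closed formula; in particular no partition-of-unity gluing is needed beyond the choice of $\gamma_0$.

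The one substantive input, which I expect to be the crux, is that the adjoint bundle $E' = \cP\times_H\fh$ is coisotropic, i.e.\ $E'^{\perp}\subseteq E'$: this is what both keeps $\gamma$ a section of $\alpha$ and kills the quadratic term, and it is equivalent to $E'^{\perp}$ being isotropic. Some such hypothesis is unavoidable, since for a positive-definite metric there are no non-zero isotropic vectors and hence no isotropic splitting when $\dim M > 0$. I would therefore verify coisotropy directly from the assumed properties of the invariant metric $(-,-)$; in the parabolic application motivating the proposition it is precisely the relation $\fp^{\perp} = \fp_{+}\subseteq\fp$ for the Killing form of $\fg$.
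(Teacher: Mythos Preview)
Your approach is essentially the paper's: start from an arbitrary splitting $\gamma$ and subtract the correction $\sharp\circ\alpha^*\circ\eta$ with $\langle\eta(x),y\rangle = \tfrac12(\gamma x,\gamma y)$, which is exactly your $B$ under the identification $(\ker\alpha)^{\perp}\cong T^*M$. You are more explicit than the paper in flagging that coisotropy of $\ker\alpha$ is needed---both to keep the corrected map a section of $\alpha$ and to kill the quadratic term---and you correctly locate where it comes from in the parabolic case ($\fp^{\perp}=\fp_+\subseteq\fp$); the paper's terse proof uses this silently.
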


\begin{proof}
Consider any section $\gamma\colon TM \to T\cP/H$ of the anchor $\alpha\colon T\cP/H \to TM$. For $x,y\in \Gamma(TM)$, we define the bundle map $\eta\colon TM \to T^*M$ by
\[\langle \eta(x), y \rangle = \frac{1}{2}(\gamma(x),\gamma(y)),\]
where $\langle -,- \rangle$ is the natural pairing. Then $\gamma_0 = \gamma - \sharp\circ \alpha^*\circ \eta$ is an isotropic splitting.
\end{proof}

\begin{cor}
Isotropic principal connections exist.
\end{cor}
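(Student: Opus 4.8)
The plan is to read this off directly from the two immediately preceding results, so that the statement really is a corollary. By Proposition~\ref{existence_isotropic_splitting}, the Atiyah sequence admits an isotropic splitting $\gamma_0\colon TM \to T\cP/H$. Since $\gamma_0$ is by construction a section of the anchor $\alpha$, it is in particular a connection of the Atiyah algebroid in the sense of Definition~\ref{dfn:Lie_algebroid_connection}. First I would invoke the one-to-one correspondence of Proposition~\ref{identify_Lie_algebroid_connections} to transport $\gamma_0$ to a principal connection on $\cP$.

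The only point requiring care is to pin down what ``isotropic'' should mean on the principal-bundle side and then to verify that it holds. Here I would declare a principal connection to be \emph{isotropic} precisely when its associated Lie algebroid connection $\gamma\colon TM\to T\cP/H$ has isotropic image for the given metric, i.e.\ $(\gamma(x),\gamma(y))=0$ for all $x,y\in\Gamma(TM)$. With this definition the equivalence of Proposition~\ref{identify_Lie_algebroid_connections} is manifestly compatible with isotropy: the principal connection coming from $\gamma_0$ is isotropic exactly because $\gamma_0$ was chosen to be an isotropic splitting. This yields the claim.

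I expect no serious obstacle, since the substantive content has already been discharged upstream. The real work lies in Proposition~\ref{existence_isotropic_splitting}, where the musical isomorphism and the correction term $\sharp\circ\alpha^*\circ\eta$ produce an isotropic section out of an arbitrary one, and in the identification of Lie algebroid connections with principal connections. The remaining step is purely a matter of unwinding definitions and confirming that isotropy is a property of the splitting preserved under the equivalence, so no further construction or estimate is needed.
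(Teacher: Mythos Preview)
Your proposal is correct and matches the paper's own proof, which simply cites Propositions~\ref{identify_Lie_algebroid_connections} and~\ref{existence_isotropic_splitting} as the two ingredients. Your extra paragraph unpacking what ``isotropic'' means for the principal connection is a helpful elaboration but not a departure from the paper's approach.
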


\begin{proof}
This is a consequence of Propositions \ref{identify_Lie_algebroid_connections} and \ref{existence_isotropic_splitting}.
\end{proof}

\section{The adjoint tractor bundle}\label{sec:adjoint_tractor}

In this section, we study the adjoint tractor bundle of a parabolic geometry over a smooth manifold $M$. First, we introduce Cartan geometries in \S \ref{subsec:Cartan} and show that $\cA M$ can be identified with the Atiyah algebroid of the principal bundle $\cP$ of the Cartan geometry. Next, in \S \ref{subsec:parabolic}, we consider the special case of a parabolic geometry, which endows the adjoint tractor bundle $\cA M$ with a rich structure. In particular, we show in \S \ref{subsec:Weyl_structures} that there is an associated graded bundle $\Gr(\cA M)$, which can be identified with the ordinary adjoint tractor bundle using a Weyl structure.

\subsection{Cartan geometry}\label{subsec:Cartan}

We begin by considering a closed subgroup $H$ of a semisimple Lie group $G$, with the corresponding Lie algebras $\fh \subset \fg$ satisfying the condition that $\dim \fg/\fh = \dim M$.

\begin{dfn}\label{dfn:Cartan_connection}
Suppose $\pi\colon\cP\to M$ is a principal $H$-bundle over $M$. With the same notation as in Definition \ref{dfn:principal_connection}, we say that a 1-form $\omega \in \Omega^1(\cP, \fg)$ is a \emph{Cartan connection} if $\omega$ satisfies the following three conditions:

\begin{enumerate}
	\item (\emph{$H$-equivariance}) For each $h\in H$, \[(R_h)^*\omega = \Ad(h^{-1})\omega.\]
	\item (\emph{Reproduces fundamental vector fields}) For each $X\in\fh$ and $j(X)$ the corresponding fundamental vector field, \[\omega(j(X)) = X.\]
	\item (\emph{Absolute parallelism}) For each $u\in \cP$, the restriction \[\omega|_u\colon T_u\cP \to \fg\] is a linear isomorphism.
\end{enumerate}
\end{dfn}

\begin{dfn}\label{dfn:Cartan_geometry}
A \emph{Cartan geometry} of type $(G,H)$ is a pair $(\pi\colon\cP\to M, \omega)$, where $\cP$ is a principal $H$-bundle over $M$ and $\omega\in \Omega^1(\cP,\fg)$ is a Cartan connection. We shall write $\cP$ for the Cartan geometry over $M$ where no confusion could arise in the following.
\end{dfn}

The following Lemma exhibits an important associated bundle of a Cartan geometry.

\begin{lem}\cite{CapSlovak09,Sharpe97}.\label{identify_TM}
Let $\cP$ be a Cartan geometry of type ($G,H$). Then there is an isomorphism $\cP \times_H \fg/\fh \cong TM$, where $H$ acts by the restriction of the adjoint action of $G$.
\end{lem}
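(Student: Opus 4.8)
The plan is to use the Cartan connection $\omega$, and in particular the absolute parallelism condition, to identify $TM$ with the associated bundle $\cP\times_H \fg/\fh$ by realising both as the cokernel of the vertical subbundle inside the Atiyah algebroid $T\cP/H$ of Definition \ref{dfn:Atiyah_sequence}.

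First I would exploit the fact that absolute parallelism makes $\omega|_u \colon T_u\cP \to \fg$ a linear isomorphism for every $u\in\cP$. Composing with the quotient projection $\fg \to \fg/\fh$ yields a fibrewise surjective linear map, and packaging these together with the base point produces a bundle map
\[\bar\omega\colon T\cP \to \cP\times\fg/\fh, \qquad \xi \mapsto (u, \omega(\xi)+\fh) \text{ for } \xi\in T_u\cP.\]
Two facts then require checking. The kernel of $\bar\omega$ over $u$ is $\omega|_u^{-1}(\fh)$; since $\omega$ reproduces fundamental vector fields ($\omega(j(X))=X$ for $X\in\fh$) and the $H$-action is free, a dimension count gives $\omega|_u^{-1}(\fh)=V_u\cP$, so the kernel is exactly the vertical subbundle. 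Second, the $H$-equivariance $(R_h)^*\omega=\Ad(h^{-1})\omega$ shows that $\bar\omega$ intertwines the right $H$-action on $T\cP$ with the action $(u,v+\fh)\cdot h = (u\cdot h, \Ad(h^{-1})v+\fh)$ on $\cP\times\fg/\fh$; here one uses that $\Ad(h)$ preserves $\fh$, as $H$ is a subgroup, so the action on $\fg/\fh$ is well defined.

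Passing to $H$-quotients, $\bar\omega$ descends to a surjective bundle map $T\cP/H \to \cP\times_H\fg/\fh$ whose kernel is $V\cP/H$. The Atiyah sequence already exhibits $V\cP/H$ as the kernel of the anchor $\alpha\colon T\cP/H \to TM$. Hence both $TM$ and $\cP\times_H\fg/\fh$ are canonically the quotient $(T\cP/H)/(V\cP/H)$, and the induced map between them is an isomorphism, with the hypothesis $\dim\fg/\fh=\dim M$ ensuring the fibres have matching rank.

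The step I expect to be the main obstacle is the descent bookkeeping: verifying carefully that $\bar\omega$ is genuinely $H$-equivariant, rather than merely fibrewise linear, so that it passes to the associated bundle, together with the confirmation that its kernel is precisely the vertical bundle and not something larger. Once equivariance and the kernel computation are secured, the comparison with $TM$ through the Atiyah sequence is formal.
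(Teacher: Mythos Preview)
Your proposal is correct and uses essentially the same idea as the paper: both exploit all three axioms of the Cartan connection --- absolute parallelism for the isomorphism, reproduction of fundamental vector fields to match $\fh$ with vertical vectors, and $H$-equivariance to descend to the associated bundle. The only cosmetic difference is direction: the paper builds the map $\cP\times\fg \to TM$ directly via $(u,X)\mapsto \pi_*\omega_u^{-1}(X)$ and then factors through $\fg/\fh$ and the $H$-quotient, whereas you build the inverse map $T\cP \to \cP\times\fg/\fh$ via $\omega$ and then compare cokernels through the Atiyah sequence; the paper's route is slightly shorter since it avoids the detour through $T\cP/H$.
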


\begin{proof}
Consider the map
\begin{align*}
\phi \colon \cP \times \fg &\to TM\\
(u, X) & \mapsto (\pi(u), \pi_* \omega_u^{-1}(X)).
\end{align*}
By Definition \ref{dfn:Cartan_connection}(2), if $X \in \fh$, then $\omega^{-1}(X)$ is the fundamental vector field generated by $X$, which is vertical. Hence, $\phi$ factors to a map $\cP \times \fg/\fh \to TM$. The $H$-equivariance of $\omega$ means $\phi$ factors further to the vector bundle map $\cP\times_H \fg/\fh \to TM$ over $M$, which is an isomorphism on each fibre and hence identifies the two bundles.
\end{proof}

We can now introduce the adjoint tractor bundle, the main object of our study.

\begin{dfn}\label{dfn:adjoint_tractor}
Let $(\cP \to M, \omega)$ be a Cartan geometry of type ($G,H$). The \emph{adjoint tractor bundle} is the vector bundle $\cA M:= \cP\times_H \fg$ associated to the adjoint representation of $H$ on $\fg$.
\end{dfn}

\begin{rem}
We need to distinguish between the adjoint bundle, which is the vector bundle associated to the adjoint representation of $H$ on $\fh \subset \fg$ in the Atiyah sequence, and the adjoint \emph{tractor} bundle. 
\end{rem}

The following Proposition is a reformulation of \cite[Theorem 1]{Crampin09}, which identifies the adjoint tractor bundle with the Atiyah algebroid of the principal bundle associated to the Cartan geometry.

\begin{pro}
Let  $(\pi\colon\cP\to M, \omega)$ be a Cartan geometry of type $(G,H)$. Then the Cartan connection $\omega$ induces an isomorphism of exact sequences of vector bundles:\\

\begin{center}
$\begin{CD}
0 @>>> \cP \times_H \fh @>>> \cP \times_H \fg @>>> \cP \times_H \fg/\fh @>>> 0\\
@. @| @VVV @VVV @.\\
0 @>>> \cP \times_H \fh @>>> T\cP/H @>\pi_*>> TM @>>> 0\\
@. @. @. @. @.
\end{CD}$
\end{center}

In particular, the adjoint tractor bundle $\cA M = \cP \times_H \fg$ can be identified with the Atiyah algebroid, contains $TM$ as a quotient bundle, and has the adjoint bundle as a sub-bundle.

\end{pro}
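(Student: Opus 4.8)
The plan is to construct the two vertical isomorphisms explicitly from the Cartan connection $\omega$, verify commutativity of each square, and then invoke the five lemma to conclude that the middle map is an isomorphism. The leftmost vertical map is the identity on $\cP \times_H \fh$, so the left square commutes trivially once the middle map restricts correctly. The rightmost vertical map is precisely the isomorphism $\cP \times_H \fg/\fh \cong TM$ furnished by Lemma \ref{identify_TM}, so that data is already in hand. The work is therefore concentrated in the middle column.

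\textbf{Construction of the middle map.} First I would define the vertical isomorphism $\Phi\colon \cP \times_H \fg \to T\cP/H$ using the absolute parallelism condition of the Cartan connection. For $(u,X)\in \cP\times\fg$, set $\Phi(u,X) = [\omega_u^{-1}(X)]$, the class in $T\cP/H$ of the vector $\omega_u^{-1}(X)\in T_u\cP$, which is well-defined because $\omega|_u$ is a linear isomorphism by Definition \ref{dfn:Cartan_connection}(3). The key point is that this descends to the associated bundle: I would check that $H$-equivariance of $\omega$, namely $(R_h)^*\omega = \Ad(h^{-1})\omega$, implies $\omega_{u\cdot h}^{-1}(\Ad(h^{-1})X) = (R_h)_*\omega_u^{-1}(X)$, so that the pair $(u\cdot h, \Ad(h^{-1})X)$ maps to the same $H$-invariant vector field as $(u,X)$. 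This shows $\Phi$ factors through $\cP\times_H\fg$. Since $\omega|_u$ is a fibrewise linear isomorphism and the construction is $H$-equivariant, $\Phi$ is a vector bundle isomorphism over $M$.

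\textbf{Commutativity of the squares.} Next I would verify the two squares commute. For the left square, if $X\in\fh$ then $\omega_u^{-1}(X) = j(X)_u$ is the fundamental vector field by Definition \ref{dfn:Cartan_connection}(2), so $\Phi$ restricted to $\cP\times_H\fh$ agrees with the fundamental vector field map $j$, matching the identity on the left and the inclusion $j$ on the bottom. For the right square, I would unwind the definitions: the top map sends $[u,X]$ to $[u, X+\fh]\in\cP\times_H\fg/\fh$, the right map applies the Lemma \ref{identify_TM} isomorphism to get $(\pi(u),\pi_*\omega_u^{-1}(X))$, while going the other way, $\Phi$ sends $[u,X]$ to the invariant vector field $[\omega_u^{-1}(X)]$ and $\pi_*=\alpha$ projects it to $\pi_*\omega_u^{-1}(X)$. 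These coincide by the very definition of $\phi$ in the proof of Lemma \ref{identify_TM}, so the square commutes.

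\textbf{Conclusion and the main obstacle.} With both squares commuting and the outer four vertical maps being isomorphisms (identity on the left, Lemma \ref{identify_TM} on the right), the five lemma forces $\Phi$ to be an isomorphism; in fact here it is cleaner to observe directly that $\Phi$ is an isomorphism of exact sequences. The final clause of the statement is then immediate: the identification $\cA M = \cP\times_H\fg \cong T\cP/H$ is the middle isomorphism, $TM$ appears as the cokernel (quotient) via the anchor $\pi_*$, and the adjoint bundle $\cP\times_H\fh$ is the kernel sub-bundle. I expect the main obstacle to be the careful bookkeeping in verifying that $\Phi$ descends to the associated bundle $\cP\times_H\fg$, since this requires translating the $H$-equivariance of $\omega$ into the compatibility $\omega_{u\cdot h}^{-1}\circ\Ad(h^{-1}) = (R_h)_*\circ\omega_u^{-1}$ and confirming that the resulting $H$-invariant vector field is genuinely independent of the representative chosen in the $H$-orbit. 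Everything downstream is then formal diagram-chasing.
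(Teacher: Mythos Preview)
Your proposal is correct and follows essentially the same approach as the paper: define the middle map via $\omega^{-1}$, use $H$-equivariance to descend to the associated bundle, and cite Lemma \ref{identify_TM} for the right vertical isomorphism. The paper's proof is considerably terser---it simply asserts that $\omega^{-1}\colon \cP\times\fg\to T\cP$ is $H$-equivariant and hence factors to $\cP\times_H\fg\to T\cP/H$, without explicitly verifying the square commutativity or invoking the five lemma (the middle map is already an isomorphism by absolute parallelism, as you yourself note)---but your more detailed version fills in exactly the steps the paper leaves implicit.
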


\begin{proof}
There is an isomorphism $\omega^{-1}\colon \cP\times \fg \to T\cP$, which is $H$-equivariant and so factors to the required isomorphism of vector bundles $\cP\times_H \fg \to T\cP/H$. We have already seen in Lemma \ref{identify_TM} that there is an isomorphism $\phi\colon \cP\times_H \fg/\fh \to TM$, which completes the proof.
\end{proof}

\subsection{Parabolic geometry}\label{subsec:parabolic}

A richer structure on the adjoint tractor bundle can be obtained by considering parabolic geometries. We begin by considering parabolic Lie subalgebras $\fp$ of a semisimple Lie algebra $\fg$. To simplify our exposition, we shall consider only complex semisimple Lie algebras (but see Remark \ref{rem:real_complex}).

Recall that for a complex semisimple Lie algebra $\fg$, we can choose a Cartan subalgebra $\fh\leq \fg$ and consider the root spaces of the adjoint representation
\[\fg_{\eta} = \set{X\in \fg\,|\: (\ad(Y)-\eta(Y)I)X=0, \quad \forall Y\in\fh}\]
where $\eta\in\fh^*$. This gives us a root space decomposition of $\fg$,
\[\fg = \fh \oplus \bigoplus_{\eta\in\Delta}\fg_{\eta},\]
where $\fh = \fg_0$ and $\Delta = \set{0\neq \eta \in\fh^*\,|\: g_{\eta} \neq \emptyset}$ is the root system of $\fg$.

\begin{rem}
It is easy to see that $[\fg_{\eta},\fg_{\theta}] \subset \fg_{\eta+\theta}$, provided $\eta+\theta \in \Delta$. It follows that, for any $X\in\fg_{\eta}$ and $Y\in\fg_{\theta}$ with $\eta+\theta \in \Delta$, $\ad(X)\ad(Y)$ is nilpotent unless $\eta+\theta = 0$, since $\Delta$ is a finite set. Hence, the Killing form is degenerate on $\fg_{\eta} \times \fg_{\theta}$ if $\eta+\theta \neq 0$.
\end{rem}

We can choose a basis of simple roots $\Delta^0$ for the root system of $\fg$, which gives us a partial ordering on the roots: for $\alpha,\beta\in\Delta$, $\alpha\preceq \beta$ if and only if $\beta-\alpha$ can be written in terms of the simple roots with non-negative coefficients. The set of roots
\[\Delta^+ = \set{\eta\in\Delta\,|\: \eta \succ 0}\]
is then called the system of positive roots. This is encapsulated by the choice of a Borel subalgebra $\fb$ of $\fg$, 
\[\fb = \fh \oplus \bigoplus_{\eta\in\Delta^+} \fg_{\eta}.\]
Note that $\fb$ is a maximal solvable subalgebra of $\fg$. The Borel subalgebra $\fb$ associated in this way to $\fh$ and $\Delta^+$ is called a standard Borel subalgebra. 

\begin{dfn}
Let $\fg$ be a complex semisimple Lie algebra. A \emph{parabolic subalgebra} $\fp$ of $\fg$ is a Lie subalgebra containing a Borel subalgebra.
\end{dfn}

Note that, just as any Borel subalgebra is conjugate to a standard one, any parabolic subalgebra is conjugate to a standard one as well.

We can now state the following result.

\begin{pro}\cite[Proposition 3.2.1]{CapSlovak09}
Let $\fg$ be a complex semisimple Lie algebra, $\fh$ be a Cartan subalgebra of $\fg$ and $\Delta^0$ be a choice of a set of simple roots. Then standard parabolic subalgebras $\fp\leq \fg$ are in one-to-one correspondence with subsets $\Sigma$ of $\Delta^0$. More precisely, to each $\fp$, we can associate the subset $\Sigma_{\fp} = \set{\eta\in\Delta^0\,|\: \fg_{-\eta}\nsubseteq \fp}$.
\end{pro}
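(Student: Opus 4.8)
The plan is to show that $\fp \mapsto \Sigma_\fp$ is a bijection by pinning down exactly which root spaces a standard parabolic can contain and then exhibiting the inverse explicitly. First I would note that since $\fp \supseteq \fb \supseteq \fh$, the subalgebra $\fp$ is invariant under $\ad(\fh)$ and hence decomposes as $\fp = \fh \oplus \bigoplus_{\eta \in \Phi} \fg_\eta$ for some set of roots $\Phi$ with $\Delta^+ \subseteq \Phi$. All the positive root spaces are forced to lie in $\fp$, so the only freedom is in the negative part, and the whole problem reduces to describing the set $\Phi_0 := \set{\eta \in \Delta^+ \,|\: \fg_{-\eta} \subseteq \fp}$ in terms of $\Sigma_\fp$. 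Throughout I would freely use two standard structural facts: each root space is one-dimensional, so that $[\fg_\alpha, \fg_\beta] = \fg_{\alpha+\beta}$ whenever $\alpha + \beta \in \Delta$; and every positive root $\eta$ can be written as an ordered sum $\eta = \alpha_{i_1} + \cdots + \alpha_{i_k}$ of simple roots whose partial sums are all positive roots, the multiset $\set{\alpha_{i_1}, \ldots, \alpha_{i_k}}$ realising the support $\mathrm{supp}(\eta)$.

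The heart of the argument, and the step I expect to require the most care, is the characterisation
\[ \fg_{-\eta} \subseteq \fp \iff \mathrm{supp}(\eta) \cap \Sigma_\fp = \emptyset, \qquad \eta \in \Delta^+. \]
For the forward direction I would exploit that $\fp$ is a subalgebra containing every positive root space: if $\fg_{-\eta} \subseteq \fp$ and $\eta = \eta_1 + \eta_2$ with both summands positive roots, then bracketing $\fg_{-\eta}$ with $\fg_{\eta_1}, \fg_{\eta_2} \subseteq \fp$ deposits both $\fg_{-\eta_2}$ and $\fg_{-\eta_1}$ in $\fp$. Applying this ``peeling'' repeatedly along a chain decomposition of $\eta$ places $\fg_{-\alpha}$ in $\fp$ for every simple $\alpha \in \mathrm{supp}(\eta)$, i.e. $\mathrm{supp}(\eta) \cap \Sigma_\fp = \emptyset$. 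Conversely, if every simple root of $\mathrm{supp}(\eta)$ has its negative root space in $\fp$, I would build $\fg_{-\eta}$ back up along the same chain, using that $\Phi_0$ is closed under addition of roots because $[\fg_{-\eta_1}, \fg_{-\eta_2}] = \fg_{-(\eta_1+\eta_2)}$ when the sum is a root. This determines $\Phi_0$, hence $\fp$, entirely from $\Sigma_\fp$, giving injectivity of $\fp \mapsto \Sigma_\fp$.

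For surjectivity I would take an arbitrary $\Sigma \subseteq \Delta^0$ and set
\[ \fp_\Sigma := \fh \oplus \bigoplus_{\eta \in \Delta^+} \fg_\eta \;\oplus\!\! \bigoplus_{\substack{\eta \in \Delta^+ \\ \mathrm{supp}(\eta) \cap \Sigma = \emptyset}} \!\!\fg_{-\eta}, \]
and verify that it is a subalgebra; since it then contains $\fb$ it is automatically a standard parabolic, and by construction $\Sigma_{\fp_\Sigma} = \Sigma$ (testing a simple root $\alpha$, one has $\fg_{-\alpha} \subseteq \fp_\Sigma$ iff $\alpha \notin \Sigma$). Closure under the bracket follows from $[\fg_\alpha, \fg_\beta] \subseteq \fg_{\alpha+\beta}$ in the cases where both roots are positive or both are admissible negatives (their supports union, staying disjoint from $\Sigma$); the only case needing attention is a positive $\eta$ paired with an admissible negative $-\theta'$ producing a further negative root $-\mu$, where I would observe that $\theta' = \mu + \eta$ forces $\mathrm{supp}(\mu) \subseteq \mathrm{supp}(\theta')$ and hence $\mathrm{supp}(\mu) \cap \Sigma = \emptyset$, so $\fg_{-\mu} \subseteq \fp_\Sigma$. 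Since $\fp \mapsto \Sigma_\fp$ and $\Sigma \mapsto \fp_\Sigma$ are then mutually inverse, the correspondence is established.
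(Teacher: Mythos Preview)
Your proof is correct and follows the standard argument one finds in the structure theory of semisimple Lie algebras. However, note that the paper does \emph{not} supply its own proof of this proposition: it is simply quoted from \v{C}ap and Slov\'{a}k~\cite{CapSlovak09} without argument, so there is no ``paper's proof'' to compare against. Your write-up stands on its own as a valid justification, and the two structural facts you invoke (that $[\fg_\alpha,\fg_\beta]=\fg_{\alpha+\beta}$ whenever $\alpha+\beta\in\Delta$, and the existence of a chain decomposition for positive roots) are indeed standard and appropriate to cite. The peeling-and-rebuilding argument for the characterisation $\fg_{-\eta}\subseteq\fp\iff\mathrm{supp}(\eta)\cap\Sigma_\fp=\emptyset$ is exactly the right idea, and your closure check for $\fp_\Sigma$ correctly handles the only nontrivial case by observing that $\mathrm{supp}(\mu)\subseteq\mathrm{supp}(\theta')$ when $\mu=\theta'-\eta$ is a positive root.
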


It turns out that choosing a parabolic subalgebra $\fp$ of a semisimple Lie algebra $\fg$ leads to a grading on $\fg$. We give a definition of this grading. 

\begin{dfn}\label{dfn:k-grading}
Suppose $k\in \NN$ is nonzero and $\fg$ is a semisimple Lie algebra. A \emph{$|k|$-grading} on $\fg$, which makes $\fg$ into a \emph{$|k|$-graded algebra}, is a decomposition of $\fg$ into a direct sum of subspaces 
\[\fg \cong \fg_{-k} \oplus \cdots \oplus \fg_{-1} \oplus \fg_0 \oplus \fg_1 \oplus \cdots \oplus \fg_k\]
that satisfies the following conditions:
\begin{enumerate}
	\item For any $i,j\in \ZZ$, $[\fg_i,\fg_j] \subset \fg_{i+j}$, and whenever $|i|>k$, $\fg_i = \{0\}$.
	\item As a Lie algebra, the subalgebra $\fg_- := \fg_{-k} \oplus \cdots \oplus \fg_{-1}$ is generated by $\fg_{-1}$.
	\item Neither $\fg_{-k}$ nor $\fg_k$ is equal to $\{0\}$.
\end{enumerate}
The associated filtration is
\[\fg_k= \fg^k \subset \fg^{k-1} \subset \cdots \subset \fg^{-k} = \fg, \quad \fg^i := \bigoplus_{j\geq i} \fg_j.\]
We shall write $\fp := \fg^0 = \fg_0 \oplus \fg_1 \oplus \cdots \oplus \fg_k$ and $\fp_+ := \fg^1 = \fg_1 \oplus \cdots \oplus \fg_k$ for the nilpotent ideal in $\fp$.
\end{dfn}

There is a natural grading that can be constructed from the data for specifying a parabolic subalgebra.

\begin{dfn}\label{dfn:Sigma_height}
Given a subset $\Sigma\subset \Delta^0 = \set{\eta_1,\ldots,\eta_m}$ of simple roots corresponding to a parabolic subalgebra $\fp\leq \fg$, the \emph{$\Sigma$-height} $\haut_{\Sigma}(\eta)$ of a root 
\[\eta = \sum_{i=1}^m b_i\eta_i, \quad \eta\in\Delta,\]
is given by
\[\haut_{\Sigma}(\eta) := \sum_{\eta_i\in\Sigma} b_i.\]
The corresponding root space $\fg_{\eta}$ is also said to have a $\Sigma$-height of $\haut_{\Sigma}(\eta)$.
\end{dfn}

For each integer $i$, we can then define the space $\fg_i$ to be the direct sum of all root spaces of $\Sigma$-height $i$, except for $\fg_0$, which is given by the direct sum of the Cartan subalgebra and all root spaces of $\Sigma$-height $0$. The ordering on the roots then gives us a maximal $\Sigma$-height $k$. We can now state the following Proposition (see \v{C}ap and Slov\'{a}k~\cite{CapSlovak09} for a proof).

\begin{pro}\cite[Theorem 3.2.1(1)]{CapSlovak09}\label{parabolic_k-grading}
Let $\fg$ be a complex semisimple Lie algebra, $\fh$ be a Cartan subalgebra of $\fg$ and $\Delta^0$ be a choice of a set of simple roots. Given any standard parabolic subalgebra $\fp\leq \fg$ associated to the subset $\Sigma\subseteq \Delta^0$, decomposing $\fg$ according to $\Sigma$-height
\[\fg = \fg_{-k} \oplus \cdots \oplus \fg_k\]
equips $\fg$ with a $|k|$-grading such that $\fp=\fg^0$.
\end{pro}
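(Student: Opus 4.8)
The plan is to verify directly that the $\Sigma$-height decomposition $\fg = \fg_{-k}\oplus\cdots\oplus\fg_k$ satisfies the three conditions of Definition \ref{dfn:k-grading} and that $\fp = \fg^0$, where $k$ is the maximal $\Sigma$-height occurring in $\Delta$ (finite because $\Delta$ is finite, and $k\ge 1$ since a proper parabolic has $\Sigma\neq\emptyset$). The decisive structural fact is that $\Sigma$-height is additive: expressed in the simple-root basis it is a linear functional of the coefficients, so $\haut_\Sigma(\eta+\theta) = \haut_\Sigma(\eta)+\haut_\Sigma(\theta)$ whenever the sum is a root. Combined with the standard bracket relations $[\fg_\eta,\fg_\theta]\subseteq\fg_{\eta+\theta}$, $[\fh,\fg_\eta]\subseteq\fg_\eta$ and $[\fg_\eta,\fg_{-\eta}]\subseteq\fh$, together with the convention $\haut_\Sigma(0)=0$ that places $\fh$ in $\fg_0$, this additivity yields condition (1), namely $[\fg_i,\fg_j]\subseteq\fg_{i+j}$; the vanishing $\fg_i=\{0\}$ for $|i|>k$ is immediate from the definition of $k$. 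Condition (3) is equally direct: by definition of $k$ there is a root $\eta$ with $\haut_\Sigma(\eta)=k$, so $\fg_k\neq\{0\}$, and then $-\eta$ has $\Sigma$-height $-k$, giving $\fg_{-k}\neq\{0\}$.

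For the identity $\fp = \fg^0$, I would compare the two descriptions root space by root space. Every positive root has non-negative $\Sigma$-height, while every negative root $-\eta$ with $\eta\in\Delta^+$ satisfies $\haut_\Sigma(-\eta)=-\haut_\Sigma(\eta)\le 0$, with equality precisely when the support of $\eta$ misses $\Sigma$. Hence $\fg^0=\bigoplus_{j\ge 0}\fg_j$ consists of $\fh$, all positive root spaces, and exactly those $\fg_{-\eta}$ with $\haut_\Sigma(\eta)=0$. This coincides with the classification of standard parabolics recalled above, under which $\fp$ contains the Borel $\fb$ together with $\fg_{-\eta}$ precisely when $\eta$ has empty $\Sigma$-support, i.e. $\Sigma_\fp=\set{\eta\in\Delta^0\mid\fg_{-\eta}\nsubseteq\fp}$. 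Therefore $\fp=\fg^0$.

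The main obstacle is condition (2): that $\fg_-=\fg_{-k}\oplus\cdots\oplus\fg_{-1}$ is generated as a Lie algebra by $\fg_{-1}$, equivalently that $[\fg_{-1},\fg_{-(i-1)}]=\fg_{-i}$ for all $i\ge 2$. By the symmetry $\eta\mapsto-\eta$ it suffices to prove the corresponding statement on the positive side, that each root space $\fg_\gamma$ with $\haut_\Sigma(\gamma)=i\ge 2$ lies in $[\fg_1,\fg_{i-1}]$. I would argue by induction on the ordinary height of $\gamma$, using the chain lemma to write $\gamma=\alpha_1+\cdots+\alpha_h$ as a sum of simple roots (with multiplicity) whose left partial sums $\sigma_s=\alpha_1+\cdots+\alpha_s$ are all roots, so that $\fg_\gamma=[\fg_{\alpha_h},\fg_{\sigma_{h-1}}]$ by root-string theory. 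If $\alpha_h\in\Sigma$ then $\fg_{\alpha_h}\subseteq\fg_1$ and $\sigma_{h-1}$ has $\Sigma$-height $i-1$, giving the claim at once. If instead $\alpha_h\notin\Sigma$ then $\fg_{\alpha_h}\subseteq\fg_0$ while $\sigma_{h-1}$ still has $\Sigma$-height $i$ but strictly smaller ordinary height, so by induction $\fg_{\sigma_{h-1}}\subseteq[\fg_1,\fg_{i-1}]$; the Jacobi identity together with condition (1) in the forms $[\fg_0,\fg_1]\subseteq\fg_1$ and $[\fg_0,\fg_{i-1}]\subseteq\fg_{i-1}$ then gives
\[[\fg_0,[\fg_1,\fg_{i-1}]]\subseteq[\fg_1,\fg_{i-1}],\]
whence $\fg_\gamma=[\fg_{\alpha_h},\fg_{\sigma_{h-1}}]\subseteq[\fg_1,\fg_{i-1}]$ again, closing the induction.

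The delicate points, and the reason this last step is the crux, are the appeals to root-system combinatorics: the chain lemma, and the fact that $[\fg_{\alpha_s},\fg_{\sigma_{s-1}}]=\fg_{\sigma_s}$ is nonzero whenever $\sigma_{s-1}$ and $\sigma_{s-1}+\alpha_s$ are both roots. These control how the $\Sigma$-height accumulates along the chain and guarantee that the interspersed $\fg_0$-factors arising from simple roots outside $\Sigma$ can be absorbed via the Jacobi identity without the argument ever leaving $\fg_-$.
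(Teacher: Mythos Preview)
The paper does not supply its own proof of this proposition: the sentence preceding it reads ``We can now state the following Proposition (see \v{C}ap and Slov\'{a}k~\cite{CapSlovak09} for a proof)'', and no proof environment follows the statement. So there is nothing in the paper to compare your argument against beyond the bare citation.

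That said, your proposal is a correct self-contained verification of the three conditions in Definition~\ref{dfn:k-grading} together with $\fp=\fg^0$, and it is essentially the standard argument one finds in \v{C}ap--Slov\'{a}k. The additivity of $\Sigma$-height immediately handles condition~(1), condition~(3) is clear from the definition of $k$, and your root-by-root comparison for $\fp=\fg^0$ is exactly the right bookkeeping. Your treatment of condition~(2) --- the only nontrivial point --- via the chain lemma and induction on ordinary height is sound: the key observation that $[\fg_0,[\fg_1,\fg_{i-1}]]\subseteq[\fg_1,\fg_{i-1}]$ by Jacobi lets you absorb the simple roots outside $\Sigma$, and the induction terminates because at least $i\ge 2$ of the simple roots in any chain for $\gamma$ lie in $\Sigma$, so peeling from the right you eventually reach one and land in Case~1. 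The appeal to $[\fg_{\alpha},\fg_{\beta}]=\fg_{\alpha+\beta}$ whenever $\alpha+\beta$ is a root is the standard $\mathfrak{sl}_2$-representation fact and is used correctly.

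One cosmetic point: you implicitly assume $\Sigma\neq\emptyset$ to get $k\ge 1$; this is needed for the statement to make sense as a $|k|$-grading (the case $\Sigma=\emptyset$ gives $\fp=\fg$, excluded in the parabolic-geometry setting anyway), and is worth making explicit.
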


\begin{rem}\label{rem:real_complex}
We have confined ourselves to treating the case of complex semisimple Lie algebras, but a similar version of Proposition \ref{parabolic_k-grading} holds for the real case. In fact, the result in Proposition \ref{parabolic_k-grading} can be strengthened: the choice of a parabolic subalgebra $\fp$ of a real or complex semisimple Lie algebra $\fg$ is equivalent to the choice of a $|k|$-grading for $\fg$, modulo some technical conditions. For our purposes, however, it suffices to know that parabolic subalgebras of $\fg$ give us $|k|$-gradings of $\fg$. We refer the reader to \cite[\S 3.2]{CapSlovak09} for a more comprehensive treatment.
\end{rem}

We collect some useful properties of a $|k|$-graded algebra.

\begin{pro} \cite[Proposition 3.1.2]{CapSlovak09} \label{k-grading_properties}
Suppose $\fg$ is a $|k|$-graded semisimple Lie algebra 
\[\fg \cong \fg_{-k} \oplus \cdots \oplus \fg_{-1} \oplus \fg_0 \oplus \fg_1 \oplus \cdots \oplus \fg_k\]
and $B$ is the Killing form on $\fg$. Then the following statements are true:
\begin{enumerate}
	\item There is a unique \emph{grading element} $K\in\fg$, which lies in the centre of $\fg_0$ and satisfies the identity $[K,X] = jX$ for all $X\in\fg_j, j=-k,\ldots,k$.
	\item The duality $B\colon \fg \xrightarrow{\cong} \fg^*$ is compatible with the filtration and the grading of $\fg$. Hence, for each $j=-k,\ldots,k$, $B$ gives a duality of $\fg_0$-modules $B\colon \fg_j \xrightarrow{\cong} \fg_{-j}$ and a duality of $\fp$-modules $B\colon \fg/\fg^{-j+1} \xrightarrow{\cong} \fg^j$. In particular, $\fg/\fp \cong \fp_+$.
\end{enumerate}
\end{pro}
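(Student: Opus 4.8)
The plan is to establish the grading element first and then use it to analyse the Killing form. For part (1), I would define a linear map $D\colon \fg \to \fg$ by $D(X) = jX$ for $X \in \fg_j$. The grading condition $[\fg_i,\fg_j]\subseteq\fg_{i+j}$ shows that $D$ is a derivation, since both $D([X,Y])$ and $[D(X),Y]+[X,D(Y)]$ equal $(i+j)[X,Y]$ for $X\in\fg_i$, $Y\in\fg_j$. As $\fg$ is semisimple, every derivation is inner, so there is a $K\in\fg$ with $\ad(K)=D$, i.e. $[K,X]=jX$ on $\fg_j$. Writing $K=\sum_j K_j$ and using $0=[K,K]=\sum_j jK_j$ forces $K_j=0$ for $j\neq 0$, so $K\in\fg_0$; and $\ad(K)|_{\fg_0}=0$ gives $K\in Z(\fg_0)$. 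Uniqueness holds because the difference of two such elements would be central in the semisimple $\fg$, hence zero.

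For part (2), I would start from the non-degeneracy of $B$ on $\fg$. The $\ad$-invariance of $B$ applied with the grading element gives $(i+j)B(X,Y)=B([K,X],Y)+B(X,[K,Y])=0$ for $X\in\fg_i$, $Y\in\fg_j$, so $B(\fg_i,\fg_j)=0$ whenever $i+j\neq 0$, refining the root-space orthogonality of the earlier Remark. Thus $B$ restricts to a pairing $\fg_j\times\fg_{-j}\to\CC$, and this pairing is non-degenerate: an $X\in\fg_j$ in its kernel would, by the orthogonality, pair trivially with all of $\fg$, contradicting non-degeneracy of $B$. This produces an isomorphism $\fg_j\xrightarrow{\cong}(\fg_{-j})^*$, and it is one of $\fg_0$-modules because $\ad$-invariance of $B$ makes it intertwine the adjoint actions.

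To pass to the filtration, I would check that each $\fg^i=\bigoplus_{l\geq i}\fg_l$ is a $\fp$-submodule: for $a\geq 0$ and $l\geq i$ one has $[\fg_a,\fg_l]\subseteq\fg_{a+l}$ with $a+l\geq i$. Hence $\fg^{-j+1}$ is a $\fp$-submodule and $\fg/\fg^{-j+1}$ inherits a $\fp$-module structure. Assembling the graded pairings gives $(\fg^j)^*=\bigoplus_{l\geq j}(\fg_l)^*\cong\bigoplus_{l\geq j}\fg_{-l}=\bigoplus_{m\leq -j}\fg_m=\fg/\fg^{-j+1}$, and $\ad$-invariance of $B$ upgrades this to a duality of $\fp$-modules $B\colon\fg/\fg^{-j+1}\xrightarrow{\cong}(\fg^j)^*$. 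Specialising to $j=1$ yields $\fg/\fp=\fg/\fg^0\cong(\fg^1)^*=(\fp_+)^*$, the asserted duality $\fg/\fp\cong\fp_+$.

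I expect the main obstacle to be careful bookkeeping rather than any deep difficulty: the delicate points are ensuring the restricted pairing on graded pieces stays non-degenerate (which relies on feeding the grading orthogonality back into the global non-degeneracy of $B$) and keeping track of the $\fg_0$- versus $\fp$-module structures so that the constructed maps are genuinely module isomorphisms and not merely linear ones.
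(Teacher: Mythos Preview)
Your proposal is correct and follows essentially the same route as the paper: both construct the grading element as the inner derivation realising $X\mapsto jX$, locate it in the centre of $\fg_0$ via $[K,K]=\sum_j jK_j=0$, and then use $\ad$-invariance of $B$ with $K$ to obtain $(i+j)B(X,Y)=0$ and hence the orthogonality of distinct graded pieces. Your treatment is slightly more thorough in that you explicitly verify the $\fg_0$- and $\fp$-equivariance of the induced maps and check that each $\fg^i$ is a $\fp$-submodule, points the paper's sketch leaves implicit; the only cosmetic difference is that for the filtration duality the paper argues by a direct dimension count plus non-degeneracy of $B$ on $\fg/\fg^{-j+1}\times\fg^j$, whereas you assemble it from the graded dualities.
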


\begin{proof}
We sketch a proof. For part 1, note that the map $E\colon \fg \to \fg$ given by
\[E(X) = jX, \qquad (X\in \fg_j, \: j=-k,\ldots,k)\]
is a derivation by Definition \ref{dfn:k-grading}(1). As $\fg$ is semisimple, there exists a unique element $K\in \fg$ such that $E(X) = [K,X]$. We then observe that the decomposition of $K=K_{-k} + \ldots + K_k$ by grading shows that
\[[K,K] = \sum_{j=-k}^{k} jK_j = 0,\]
so K must lie in $\fg_0$. Note also that for all $X\in\fg_0$, $[K,X]=0$, so $K$ must lie in the centre of $\fg_0$.

For part 2, first recall that the invariance of $B$ as a bilinear form means that, for any $X,Y\in \fg$,
\[B([K,X],Y) = -B(X,[K,Y]).\]
If $X\in \fg_i$ and $Y\in\fg_j$, then $(i+j)B(X,Y) = 0$, so if $i+j\neq 0$, then $B(X,Y)$ must vanish. This implies that only the restrictions of $B$ to $\fg_j \times \fg_{-j}$  $(j=0,1,\ldots,k)$ are non-degenerate, so the duality induced by $B$ is compatible with the grading. In particular $\fg_j$ is dual to $\fg_{-j}$.

To see the compatibility with the filtration, observe that $\fg/\fg^{-i+1} \cong \fg_{-k} \oplus \cdots \oplus \fg_{-i}$ as vector spaces, so it has the same dimension as $\fg^i=\fg_i\oplus \cdots \oplus \fg_k$. Moreover, $B$ is clearly non-degenerate on $\fg/\fg^{-i+1} \times \fg^i$, so this induces a duality between these two spaces. In particular, $\fg/\fg^0 = \fg/\fp$ is dual to $\fg^1=\fp_+$, and this completes the proof of (2).
\end{proof}

We can now describe the Lie groups corresponding to the Lie algebras $\fg_0 \subset \fp$.

\begin{lem}\cite[Lemma 3.1.3]{CapSlovak09} \label{parabolic_Levi_group}
Let $\Ad$ be the adjoint representation of a semisimple Lie group $G$ on its Lie algebra $\fg$, and $\ad$ be the adjoint representation of $\fg$. Suppose there is a $|k|$-grading on $\fg$ such that we have the Lie subalgebras $\fg_0\subset \fp \subset \fg$. Then we have
\[\fg_0 = \set{X\in\fg\,|\: ad(X)(g_i)\subset \fg_i \quad \forall i=-k,\ldots,k},\]
\[\fp = \set{X\in\fg\,|\: ad(X)(g^i)\subset \fg^i \quad \forall i=-k,\ldots,k},\]
and the corresponding Lie groups are
\[G_0 = \set{g\in G\,|\: \Ad(g)(\fg_i) \subset \fg_i \quad \forall i=-k,\ldots,k}\]
and
\[P = \set{g\in G\,|\: \Ad(g)(\fg^i) \subset \fg^i \quad \forall i=-k,\ldots,k}.\]
Thus, $P$ is a \emph{parabolic subgroup} of $G$ corresponding to the $|k|$-grading of $\fg$.
\end{lem}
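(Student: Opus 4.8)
The driving tool is the grading element $K\in\fg_0$ supplied by Proposition~\ref{k-grading_properties}(1), which satisfies $[K,X]=jX$ for every $X\in\fg_j$. Since $\ad(K)$ is diagonalisable with the graded pieces $\fg_j$ as its integer eigenspaces, bracketing against $K$ extracts graded components, and this single observation powers both Lie-algebra identities. The plan is to prove the two algebra characterisations first, then obtain the group statements by differentiating the defining stabiliser conditions, and finally read off parabolicity from the earlier correspondence between gradings and parabolic subalgebras.

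For $\fg_0$, the inclusion $\subseteq$ is immediate: if $X\in\fg_0$ then $\ad(X)\fg_i\subseteq[\fg_0,\fg_i]\subseteq\fg_i$ by Definition~\ref{dfn:k-grading}(1). Conversely, suppose $X=\sum_j X_j$ with $X_j\in\fg_j$ satisfies $\ad(X)\fg_i\subseteq\fg_i$ for all $i$. Taking $i=0$ and testing on $K\in\fg_0$ gives $[X,K]\in\fg_0$; but $[X,K]=-\sum_j jX_j$, whose $\fg_j$-component is $-jX_j$, so vanishing of every component with $j\neq0$ forces $X\in\fg_0$. The case of $\fp=\fg^0$ is entirely parallel. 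For $\subseteq$, if $X\in\fg^0$ and $Y\in\fg^i$ then $[X,Y]\in\fg^{0+i}=\fg^i$, the filtration being multiplicative by Definition~\ref{dfn:k-grading}(1). For $\supseteq$, assuming $\ad(X)\fg^i\subseteq\fg^i$ for all $i$ and testing with $i=0$ on $K\in\fg_0\subseteq\fg^0$ yields $[X,K]=-\sum_j jX_j\in\fg^0=\bigoplus_{l\geq0}\fg_l$; the negatively graded parts $-jX_j$ with $j<0$ must vanish, so $X_j=0$ for all $j<0$ and hence $X\in\fg^0=\fp$.

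To reach the group level I would take the two stabiliser sets in the statement as the definitions of $G_0$ and $P$. Each condition $\Ad(g)\fg_i\subseteq\fg_i$, respectively $\Ad(g)\fg^i\subseteq\fg^i$, is closed, so $G_0$ and $P$ are closed subgroups of $G$ and therefore embedded Lie subgroups by Cartan's closed subgroup theorem. Since $\Ad(\exp tX)=e^{t\,\ad X}$, the one-parameter group $\exp(tX)$ preserves a fixed subspace for all $t$ exactly when $\ad(X)$ preserves it; hence the Lie algebra of $G_0$ is $\set{X\mid\ad(X)\fg_i\subseteq\fg_i\ \forall i}$ and that of $P$ is $\set{X\mid\ad(X)\fg^i\subseteq\fg^i\ \forall i}$, which by the two identities above equal $\fg_0$ and $\fp$. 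Since $\fp=\fg^0$ is a parabolic subalgebra by Proposition~\ref{parabolic_k-grading}, the subgroup $P$ with Lie algebra $\fp$ is parabolic by definition; and $G_0\leq P$ because preserving the grading a fortiori preserves the coarser filtration.

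The one genuinely delicate point, and the step I expect to demand the most care, is the word \emph{corresponding} in the group statement. Equality of Lie algebras only determines the identity component, so identifying the stabilisers with the canonical subgroups integrating $\fg_0$ and $\fp$ requires a connectivity argument rather than the grading element alone. The clean resolution is that a parabolic subalgebra is self-normalising, so that $P=\set{g\mid\Ad(g)\fg^i\subseteq\fg^i\ \forall i}$ is exactly the normaliser $N_G(\fp)$, which is connected for connected semisimple $G$; this is where an explicit appeal to the structure theory of parabolic subgroups, beyond the grading computation, is unavoidable.
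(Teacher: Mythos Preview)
The paper does not prove this lemma; it merely states it with a citation to \v{C}ap--Slov\'{a}k~\cite[Lemma 3.1.3]{CapSlovak09}, so there is no in-paper argument to compare against. Your proof is correct and is in fact the standard one: the grading element $K$ from Proposition~\ref{k-grading_properties}(1) does all the work at the Lie-algebra level, and differentiating the stabiliser condition handles the passage to groups.

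One small comment on your closing paragraph. In the context of this paper there is no prior definition of $G_0$ or $P$; the lemma \emph{introduces} them as the stabilisers of the grading and the filtration, respectively, and the content is that their Lie algebras are $\fg_0$ and $\fp$. So the connectivity issue you raise, while a genuine subtlety in the abstract theory (and indeed treated carefully in \v{C}ap--Slov\'{a}k), is not something the paper's formulation requires you to resolve. Your remark that $P$ coincides with $N_G(\fp)$ because $\fp$ is self-normalising is correct and is exactly how one would close the gap if the paper had defined $P$ independently, but here you may simply take the stabiliser description as definitional and stop once the Lie algebras are identified.
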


\begin{dfn}\label{dfn:parabolic_geometry}
Let $P$ be a parabolic subgroup of $G$, $\fp\subset \fg$ be the corresponding Lie algebras and $\cP$ be a principal $P$-bundle over a smooth manifold $M$, with $\dim \fg/\fp = \dim M$. A \emph{parabolic geometry} on $M$ is a Cartan geometry $(\pi\colon\cP\to M, \omega)$ of type $(G,P)$.
\end{dfn}

Recall that the adjoint tractor bundle for a parabolic geometry is then defined to be
\[\cA M := \cP \times_P \fg.\]
We note that Lemma \ref{parabolic_Levi_group} implies that the filtration is only $P$-invariant, while the grading is $G_0$-invariant. Nevertheless, we can see that the $P$-invariant filtration
\[\fg= \fg^{-k} \supset \cdots \supset \fg^k\]
induces a corresponding filtration
\[\cA M = \cA^{-k} M \supset \cdots \supset \cA^k M\]
in the adjoint tractor bundle. In particular, we note that
\[\cA^0 M = \cP \times_P \fp, \quad \cA^1 M = \cP \times_P \fp_+,\]
so the quotient bundle $\cA M / \cA^0 M$ is just
\[\cA M / \cA^0 M = \cP \times_P \fg/\fp \cong TM,\]
and by Proposition \ref{k-grading_properties}(2), $\cA^1 M \cong T^*M$. The associated graded bundle $\Gr(\cA M)$ has components $\Gr_i(\cA M) = \cA^i M / \cA^{i+1} M = \cP \times_P (\fg^i/\fg^{i+1})$, so
\[\Gr(\cA M) \cong TM \oplus (\cP\times_P \fg_0) \oplus T^*M.\]
However, as the grading is $G_0$-invariant, this should be defined in terms of a $G_0$-bundle. We shall set out to do this in the following.

\subsection{Weyl structures in parabolic geometry}\label{subsec:Weyl_structures}

We recall the notion of a Weyl structure of a parabolic geometry. This reduces the principal $P$-bundle $\cP$ of the parabolic geometry to a principal $G_0$-bundle $\cG_0$, which allows us to identify the adjoint tractor bundle $\cA M$ with its associated graded bundle $\Gr(\cA M)$. It also allows us to show that the adjoint tractor bundle is actually an extension of the Atiyah algebroid of the principal $G_0$-bundle $\cG_0$.

We begin by considering the underlying principal $G_0$-bundle $\pi_0\colon \cG_0 \to M$. This can be obtained from the principal $P$-bundle by forming the quotient $\cG_0:= \cP/P_+$, where $P_+$ is the nilpotent normal subgroup of $P$ with Lie algebra $\fp_+$, and yields a projection $p\colon \cP \to \cG_0$.

\begin{dfn}
Let $(\pi\colon \cP\to M, \omega)$ be a parabolic geometry. A local \emph{Weyl structure} for $\cP$ is a local smooth $G_0$-equivariant section $\sigma\colon \cG_0 \to \cP$ of $p\colon \cP\to \cG_0$.
\end{dfn}

What we require, however, is a global Weyl structure. In the smooth category, this always exists, and we state the following result, which can be proved using the usual partition of unity argument.

\begin{lem}\cite[Proposition 5.1.1]{CapSlovak09}
There exists a global Weyl structure $\sigma\colon \cG_0\to \cP$ for any smooth parabolic geometry $(\pi\colon \cP\to M, \omega)$.
\end{lem}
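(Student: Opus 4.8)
The plan is to read a global Weyl structure as a reduction of structure group and to construct it by climbing the filtration of $P_+$ one abelian layer at a time. First I would record the structural input. Since $P_+$ is the connected, simply connected nilpotent normal subgroup of $P$ with Lie algebra $\fp_+$, the exponential map $\exp\colon \fp_+ \to P_+$ is a global diffeomorphism, and $P$ splits as a semidirect product $G_0\ltimes P_+$. Consequently $p\colon \cP \to \cG_0 = \cP/P_+$ is a principal $P_+$-bundle, and a $G_0$-equivariant section $\sigma$ of $p$ is exactly a reduction of $\cP$ from $P$ to $G_0$, equivalently a global section over $M$ of the associated fibre bundle $\cP\times_P(P/G_0)$, whose fibre $P/G_0\cong P_+\cong \fp_+$ is diffeomorphic to $\RR^N$ and hence contractible.

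One is tempted to invoke at once the standard fact that a bundle with contractible fibre over a paracompact base admits a global section, obtained by patching local sections with a partition of unity. The catch, and the main obstacle, is that the convex-combination step underlying that argument requires the fibres to carry an affine structure preserved by the transition functions. Under $\exp$ the group $P_+$ is only nilpotent, so left translations become polynomial (unipotent), not affine, maps of $\fp_+$; averaging two local sections is therefore not intrinsically defined on the full $P_+$-bundle. The device that repairs this is to distribute the averaging across the graded layers of $\fp_+$, where the relevant structure group is genuinely a vector group.

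Concretely, I would set $P_+^i := \exp(\fg^i)$ for $i=1,\dots,k$ and $P_+^{k+1}:=\{e\}$. Because each $\fg^i$ is invariant under $\ad(\fp)$ and $P$ is connected, $P_+^i$ is normal in $P$, and the successive quotients are the abelian vector groups $P_+^i/P_+^{i+1}\cong(\fg_i,+)$, using $[\fg^i,\fg^i]\subset \fg^{2i}\subset\fg^{i+1}$. This produces a finite tower of principal bundles
\[\cP = \cP/P_+^{k+1} \longrightarrow \cP/P_+^{k} \longrightarrow \cdots \longrightarrow \cP/P_+^{1} = \cG_0,\]
in which each arrow $\cP/P_+^{i+1}\to\cP/P_+^{i}$ is a principal bundle with vector structure group $\fg_i$, carrying a compatible $G_0$-action since $G_0$ normalises every $P_+^i$ and acts linearly on $\fg_i$ by the restriction of the adjoint action. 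I would then lift the identity section of $\cG_0$ up this tower one step at a time.

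The inductive step is where the partition of unity does its work. Lifting a given $G_0$-equivariant section $\cG_0\to\cP/P_+^{i}$ to a $G_0$-equivariant section $\cG_0\to\cP/P_+^{i+1}$ amounts, after descending to $M$, to choosing a global section of an affine bundle modelled on the vector bundle $\cG_0\times_{G_0}\fg_i$. Now the fibres are honest affine spaces: local sections exist by local triviality, and convex combinations subordinate to a partition of unity on the paracompact manifold $M$ glue them into a global section, the averaging being legitimate precisely because the layer is abelian. Iterating over the finitely many layers $i=1,\dots,k$ yields a global $G_0$-equivariant section $\sigma\colon\cG_0\to\cP$, the desired Weyl structure. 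The only content beyond bookkeeping is the normality and abelianness of the successive quotients, which is exactly what the $\ad(\fp)$-invariance of the filtration $\fg^i$ supplies.
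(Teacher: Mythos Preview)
Your argument is correct and is essentially the proof given in the cited reference \v{C}ap--Slov\'{a}k; the paper itself supplies no proof beyond the remark that ``it can be proved using the usual partition of unity argument''. You have correctly identified why the naive averaging does not apply to the full $P_+$-bundle and why passing to the abelian layers $\fg^i/\fg^{i+1}$ fixes this, which is exactly the content of the argument in \cite{CapSlovak09}.

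One small correction: you invoke connectedness of $P$ to deduce that each $P_+^i$ is normal in $P$, but $P$ need not be connected in general. The normality follows instead directly from the definition of $P$ given in Lemma~\ref{parabolic_Levi_group}: every $g\in P$ satisfies $\Ad(g)(\fg^i)\subset\fg^i$, so conjugation by $g$ preserves $P_+^i=\exp(\fg^i)$. The rest of your argument is unaffected.
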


This enables us to identify $\cA M$ with its associated graded $\Gr(\cA M)$.

\begin{cor}\cite[Corollary 5.1.3]{CapSlovak09} \label{AM_with_Gr}
Let $(\pi\colon \cP\to M, \omega)$ be a parabolic geometry and $\cA M=\cP\times_P \fg$ be the bundle associated to the adjoint representation of $P$ on $\fg$. Then $\Gr(\cA M)$ can be naturally identified with $\cG_0 \times_{G_0} \fg$ and a Weyl structure gives an isomorphism between $\cA M$ and $\Gr(\cA M)$,
\[\cA M \cong \Gr(\cA M) \cong TM \oplus \cG_0 \times_{G_0} \fg_0 \oplus T^*M,\]
where $TM \cong \cG_0 \times_{G_0} \fg/\fp$ and $T^*M \cong \cG_0 \times_{G_0} \fp_+$.
\end{cor}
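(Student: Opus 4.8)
The plan is to prove the two displayed isomorphisms in turn: first the canonical identification $\Gr(\cA M)\cong\cG_0\times_{G_0}\fg$, which requires no choices, and then the choice-dependent isomorphism $\cA M\cong\Gr(\cA M)$ supplied by a Weyl structure; the final decomposition is then a matter of grouping summands. By construction the graded pieces are $\Gr_i(\cA M)=\cA^i M/\cA^{i+1}M=\cP\times_P(\fg^i/\fg^{i+1})$, so the crucial point is that each quotient $\fg^i/\fg^{i+1}$ is actually a $G_0$-module. Indeed, since $[\fg^1,\fg^i]\subseteq\fg^{i+1}$, the subalgebra $\ad(\fp_+)$ raises filtration degree, so $P_+$ acts trivially on $\fg^i/\fg^{i+1}$ and the residual action factors through $P/P_+\cong G_0$, identifying $\fg^i/\fg^{i+1}\cong\fg_i$ as $G_0$-modules. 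First I would use this to descend the associated bundle along $p\colon\cP\to\cG_0=\cP/P_+$, obtaining
\[\cP\times_P(\fg^i/\fg^{i+1})\cong\cG_0\times_{G_0}\fg_i,\]
and then summing over $i$, with the grading $\fg=\fg_{-k}\oplus\cdots\oplus\fg_k$ being $G_0$-invariant, gives the canonical identification $\Gr(\cA M)\cong\cG_0\times_{G_0}\fg$.

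Next I would bring in the Weyl structure. A Weyl structure $\sigma\colon\cG_0\to\cP$ is a $G_0$-equivariant section of $p$, so its image is a reduction of the structure group of $\cP$ from $P$ to $G_0$. For any $P$-module $V$ such a reduction induces an isomorphism of associated bundles $\cG_0\times_{G_0}V\to\cP\times_P V$, $[u_0,v]\mapsto[\sigma(u_0),v]$, whose well-definedness follows from the $G_0$-equivariance of $\sigma$ together with $G_0\leq P$, and which is a fibrewise linear isomorphism. Taking $V=\fg$ with its full $P$-action then yields $\cA M=\cP\times_P\fg\cong\cG_0\times_{G_0}\fg\cong\Gr(\cA M)$, the desired Weyl isomorphism.

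Finally I would read off the decomposition from the $G_0$-invariant splitting $\cG_0\times_{G_0}\fg=\bigoplus_i\cG_0\times_{G_0}\fg_i$, grouped into three blocks. The negative block $\fg_-=\fg_{-k}\oplus\cdots\oplus\fg_{-1}$ is isomorphic to $\fg/\fp$ as a $G_0$-module, so Lemma \ref{identify_TM}, transported along $\sigma$, gives $\cG_0\times_{G_0}\fg_-\cong TM$; the zeroth block is $\cG_0\times_{G_0}\fg_0$; and for the positive block I would invoke Proposition \ref{k-grading_properties}(2), which exhibits $\fp_+$ as dual to $\fg/\fp$ via the $G_0$-invariant Killing form, so that $\cG_0\times_{G_0}\fp_+\cong(\cG_0\times_{G_0}\fg/\fp)^*\cong T^*M$.

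I expect the main obstacle to be the first step: verifying cleanly that $P_+$ acts trivially on each $\fg^i/\fg^{i+1}$, which is exactly what allows the graded pieces to descend to the $G_0$-bundle $\cG_0$, and keeping track of which isomorphisms are canonical (the identification of $\Gr(\cA M)$) as opposed to Weyl-dependent (the splitting of $\cA M$). The duality step for $T^*M$ is also somewhat delicate, since it requires commuting the associated-bundle construction with dualisation of $G_0$-modules, but this becomes routine once the relevant $G_0$-module structures have been pinned down.
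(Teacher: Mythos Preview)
Your proposal is correct and follows the standard argument (essentially that of \v{C}ap--Slov\'{a}k, to which the paper defers). Note that the paper itself does not supply a proof of this corollary; it simply cites \cite[Corollary 5.1.3]{CapSlovak09}, so there is no in-paper argument to compare against. Your three-step outline---descending the graded pieces to $\cG_0$ via triviality of the $P_+$-action on $\fg^i/\fg^{i+1}$, using the Weyl section $\sigma$ as a reduction of structure group to identify $\cA M$ with $\cG_0\times_{G_0}\fg$, and then reading off the three blocks from the $G_0$-invariant grading together with Lemma~\ref{identify_TM} and Proposition~\ref{k-grading_properties}(2)---is exactly the intended mechanism and each step is sound. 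The one place to be slightly more careful is the passage from the infinitesimal statement $[\fp_+,\fg^i]\subseteq\fg^{i+1}$ to the group-level triviality of the $P_+$-action on $\fg^i/\fg^{i+1}$: this uses that $P_+=\exp(\fp_+)$ is connected (indeed $\exp\colon\fp_+\to P_+$ is a diffeomorphism), which is standard but worth stating.
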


We also have the following useful result on principal connections.

\begin{pro}\cite[Proposition 5.1.2]{CapSlovak09} \label{Weyl_connection}
Suppose we are given a Weyl structure $\sigma\colon \cG_0\to \cP$ on a parabolic geometry $(\pi\colon \cP\to M, \omega)$. Then the pullback of the Cartan connection, $\sigma^*\omega\in\Omega^1(\cG_0,\fg)$, decomposes as
\[\sigma^*\omega = \sigma^*\omega_{-k} + \cdots + \sigma^*\omega_0 + \cdots + \sigma^* \omega_k\]
and $\sigma^*\omega_0\in\Omega^1(\cG_0,\fg_0)$ gives a principal connection on the principal $G_0$-bundle $\pi_0\colon \cG_0\to M$.
\end{pro}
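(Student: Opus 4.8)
The plan is to verify the two defining conditions of a principal connection from Definition \ref{dfn:principal_connection} for $\sigma^*\omega_0$, after first observing that the graded decomposition of $\sigma^*\omega$ is automatic. Since the $|k|$-grading $\fg = \fg_{-k}\oplus\cdots\oplus\fg_k$ is a direct sum of vector subspaces, the Cartan connection splits as $\omega = \omega_{-k}+\cdots+\omega_k$, where $\omega_i\in\Omega^1(\cP,\fg_i)$ is the $\fg_i$-component, and pullback by $\sigma$ is linear and commutes with these projections. This gives $\sigma^*\omega = \sigma^*\omega_{-k}+\cdots+\sigma^*\omega_k$ with $\sigma^*\omega_0\in\Omega^1(\cG_0,\fg_0)$ immediately. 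The entire content of the proposition is therefore that this particular component is a principal connection form for $\pi_0\colon\cG_0\to M$.

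First I would check that $\sigma^*\omega_0$ reproduces the fundamental vector fields of the $G_0$-action. For $X\in\fg_0$, differentiating the $G_0$-equivariance $\sigma(u\cdot g)=\sigma(u)\cdot g$ at the identity in the direction $X$ identifies the pushforward $\sigma_*(j_0(X))$ of the fundamental field $j_0(X)$ on $\cG_0$ with the fundamental field $j(X)$ on $\cP$ generated by $X\in\fg_0\subset\fp$. Then
\[(\sigma^*\omega)(j_0(X)) = \omega(\sigma_* j_0(X)) = \omega(j(X)) = X\]
by the reproduction property of the Cartan connection (Definition \ref{dfn:Cartan_connection}(2)), and since $X\in\fg_0$ is its own $\fg_0$-component, projecting gives $(\sigma^*\omega_0)(j_0(X)) = X$.

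Next I would verify $G_0$-equivariance. Writing $R_g$ for the right action, the $G_0$-equivariance of $\sigma$ gives $\sigma\circ R_g = R_g\circ\sigma$ for $g\in G_0$, so
\[(R_g)^*\sigma^*\omega = \sigma^*(R_g)^*\omega = \sigma^*(\Ad(g^{-1})\omega) = \Ad(g^{-1})\sigma^*\omega,\]
where the middle equality is the $P$-equivariance of $\omega$ (Definition \ref{dfn:Cartan_connection}(1)) applied to $g\in G_0\subset P$. It remains to extract the $\fg_0$-component of this identity.

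This last step is the only place where reducing to $\cG_0$ rather than staying on $\cP$ does genuine work, and I expect it to be the main obstacle. By Lemma \ref{parabolic_Levi_group}, $G_0$ is precisely the subgroup whose adjoint action preserves each graded piece, $\Ad(g)(\fg_i)\subset\fg_i$; hence $\Ad(g^{-1})$ commutes with the projection $\fg\to\fg_0$, and the $\fg_0$-component of $\Ad(g^{-1})\sigma^*\omega$ equals $\Ad(g^{-1})\sigma^*\omega_0$. This yields $(R_g)^*\sigma^*\omega_0 = \Ad(g^{-1})\sigma^*\omega_0$, as required. Had we only had $P$-equivariance, $\Ad$ would preserve the filtration but not the grading, so the $\fg_0$-projection would pick up contributions from higher graded pieces and fail to transform as a connection form; it is exactly the $G_0$-invariance of the grading, and thus the Weyl reduction, that makes $\sigma^*\omega_0$ into an honest principal connection.
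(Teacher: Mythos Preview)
Your proof is correct and follows essentially the same approach as the paper's sketch: both obtain the decomposition from the $|k|$-grading, deduce $G_0$-equivariance of $\sigma^*\omega_0$ from the equivariance of $\sigma$ together with the fact that $G_0$ preserves the grading (Lemma~\ref{parabolic_Levi_group}), and derive the reproduction property from the $G_0$-equivariance of $\sigma$ and the Cartan connection axiom. Your version simply fills in more detail, and your closing remark on why only the filtration-preserving $P$-action would fail is a helpful clarification beyond what the paper says.
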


\begin{proof}
We sketch a proof here, referring the reader to \v{C}ap and Slov\'{a}k~\cite{CapSlovak09} for details. The $G_0$-equivariance of $\sigma$ induces $G_0$-equivariance on $\sigma^*\omega$. The $|k|$-grading on $\fg$ induces the stated decomposition and each of the components is $G_0$-equivariant, so $\sigma^*\omega_0$ is $G_0$-equivariant too. Finally, since the Cartan connection reproduces generators of fundamental vector fields and $\sigma$ is $G_0$-equivariant, $\sigma^*\omega_0$ also reproduces generators of fundamental vector fields. Hence, $\sigma^*\omega_0$ is a principal connection on $\cG_0$.
\end{proof}

\begin{cor}\label{Atiyah_of_G0}
The principal connection $\sigma^*\omega_0$ on $\cG_0$ is a splitting of the Atiyah sequence of $\cG_0$, which is given by
\[0 \to \cG_0\times_{G_0} \fg_0 \to \cG_0\times_{G_0} \fp \to TM \to 0.\]
\end{cor}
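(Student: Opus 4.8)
The plan is to assemble this from results already established rather than to prove anything \emph{ab initio}, since it is stated as a corollary. The only substantive assertion is that the principal connection $\sigma^*\omega_0$ furnishes a splitting; the displayed sequence is merely the Atiyah sequence of the $G_0$-bundle $\cG_0$ in the sense of Definition \ref{dfn:Atiyah_sequence}.

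First I would invoke Proposition \ref{Weyl_connection} to record that $\sigma^*\omega_0\in\Omega^1(\cG_0,\fg_0)$ is a genuine principal connection form on $\pi_0\colon\cG_0\to M$: it is $G_0$-equivariant and reproduces the generators of the fundamental vector fields. I would then apply Proposition \ref{identify_Lie_algebroid_connections}, which places principal connections on $\cG_0$ in one-to-one correspondence with Lie algebroid connections of the Atiyah algebroid $T\cG_0/G_0$, that is, with sections $\gamma\colon TM\to T\cG_0/G_0$ of the anchor $\alpha=(\pi_0)_*$. By Definition \ref{dfn:Lie_algebroid_connection} such a section is precisely a splitting of the Atiyah sequence $0\to\cG_0\times_{G_0}\fg_0\xrightarrow{j}T\cG_0/G_0\xrightarrow{\alpha}TM\to0$. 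Concretely, $\sigma^*\omega_0$ plays the role of the connection reform, the left inverse to $j$ that retracts $T\cG_0/G_0$ onto its vertical part $\cG_0\times_{G_0}\fg_0$, and the complementary splitting $\gamma$ is the horizontal lift pinned down by $j\circ\omega+\gamma\circ\alpha=\id_{T\cG_0/G_0}$ from the one-to-one correspondence between connections and connection reforms.

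The only delicate point, and the one step I would treat with care, is matching the middle bundle of the displayed sequence with the Atiyah algebroid $T\cG_0/G_0$. By Definition \ref{dfn:Atiyah_sequence} the canonical sub-bundle of $T\cG_0/G_0$ is the adjoint bundle $\cG_0\times_{G_0}\fg_0$, while its canonical quotient under $\alpha$ is $TM$, which by Corollary \ref{AM_with_Gr} is $\cG_0\times_{G_0}\fg/\fp\cong\cG_0\times_{G_0}\fg_-$. I would check that the cokernel is indeed this negative graded part $\fg_-\cong TM$, rather than the positive part $\fp_+\cong T^*M$, so that the exactness of the sequence and the status of $\sigma^*\omega_0$ as a splitting are beyond doubt. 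With the middle term read as $T\cG_0/G_0$ and this quotient fixed, the corollary follows at once.
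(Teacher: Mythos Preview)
Your overall strategy matches the paper's: invoke Proposition \ref{Weyl_connection} to get a principal connection, then Proposition \ref{identify_Lie_algebroid_connections} to translate this into a splitting of the Atiyah sequence $0\to\cG_0\times_{G_0}\fg_0\to T\cG_0/G_0\to TM\to 0$. That part is fine.

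The gap is in the ``delicate point'' you flag. You correctly recognise that one must identify the middle term $T\cG_0/G_0$ with $\cG_0\times_{G_0}\fp$, but your proposed check goes in the wrong direction. You want to verify that the cokernel is modelled on $\fg_-$ rather than on $\fp_+$, and then declare the corollary done. But $\fg_0\oplus\fg_-$ is \emph{not} $\fp$; it is the opposite parabolic. Knowing that the quotient is $\fg_-\cong\fg/\fp$ does not by itself produce the associated bundle $\cG_0\times_{G_0}\fp$ in the middle.

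What the paper actually does here is use Proposition \ref{k-grading_properties}(2): the Killing form gives a $G_0$-equivariant isomorphism $\fg/\fp\cong\fp_+$. Hence $TM\cong\cG_0\times_{G_0}\fg/\fp\cong\cG_0\times_{G_0}\fp_+$, and the split Atiyah algebroid becomes
\[
T\cG_0/G_0\;\cong\;(\cG_0\times_{G_0}\fg_0)\oplus TM\;\cong\;\cG_0\times_{G_0}(\fg_0\oplus\fp_+)\;=\;\cG_0\times_{G_0}\fp.
\]
So the step you were wary of is resolved precisely by \emph{using} the identification with $\fp_+$, not by avoiding it. Once you insert this appeal to Proposition \ref{k-grading_properties}(2), your argument coincides with the paper's.
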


\begin{proof}
By Proposition \ref{identify_Lie_algebroid_connections}, we know that $\sigma^*\omega_0$ is also a connection of the Atiyah algebroid, and hence splits the Atiyah sequence of $\cG_0$, 
\[0 \to \cG_0 \times_{G_0} \fg_0 \to T\cG_0/G_0 \to TM \to 0.\]
We have already identified $TM\cong \cG_0\times_{G_0} \fg/\fp$, but since $\fp_+ \cong \fg/\fp$ (Proposition \ref{k-grading_properties}(2)) and $\fp = \fg_0 \oplus \fp_+$, we can rewrite the Atiyah sequence as
\[0 \to \cG_0\times_{G_0} \fg_0 \to \cG_0\times_{G_0} \fp \to TM \to 0,\]
which completes the proof.
\end{proof}

\begin{dfn}\label{dfn:Weyl_connection}
The principal connection $\sigma^*\omega_0$ on $\cG_0$ is called the \emph{Weyl connection} associated to $\sigma$.
\end{dfn}

We now recall the following definition from Bressler~\cite{Bressler07}.

\begin{dfn}\label{dfn:Courant_extension}
A \emph{Courant extension} of a Lie algebroid $\cA$ is a Courant algebroid $E$ with an isomorphism of Lie algebroids $E/T^*M \cong \cA$.
\end{dfn}

The next statement follows immediately from Corollaries \ref{AM_with_Gr} and \ref{Atiyah_of_G0}.

\begin{cor}\label{Courant_extension_AM}
The adjoint tractor bundle $\cA M$ is a Courant extension
\[0 \to T^*M \to \cA M \to T\cG_0/G_0 \to 0\]
of the Atiyah algebroid $T\cG_0/G_0$ associated to the Atiyah sequence
\[0 \to \cG_0 \times_{G_0} \fg_0 \to T\cG_0/G_0 \to TM \to 0\]
of the principal $G_0$-bundle $\cG_0$ over $M$.
\end{cor}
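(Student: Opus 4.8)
The plan is to extract both the isotropic sub-bundle and the Lie algebroid quotient directly from the two preceding corollaries, and then to verify that the resulting extension is one of Lie algebroids, as Definition \ref{dfn:Courant_extension} requires. First I would fix a global Weyl structure $\sigma\colon\cG_0\to\cP$. By Corollary \ref{AM_with_Gr} this yields
\[\cA M\cong\Gr(\cA M)\cong TM\oplus(\cG_0\times_{G_0}\fg_0)\oplus T^*M,\]
in which the last summand is the top filtration piece $\cA^1 M=\cP\times_P\fp_+$, identified with $T^*M$ through the Killing-form duality of Proposition \ref{k-grading_properties}(2). This pins down the sub-bundle of the asserted sequence as the inclusion $T^*M\cong\cA^1 M\hookrightarrow\cA M$.

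For the quotient I would invoke Corollary \ref{Atiyah_of_G0}: the Weyl connection $\sigma^*\omega_0$ splits the Atiyah sequence of $\cG_0$, so that
\[T\cG_0/G_0\cong TM\oplus(\cG_0\times_{G_0}\fg_0).\]
Comparing this with the displayed splitting of $\cA M$, removing the $T^*M$ summand leaves precisely $TM\oplus(\cG_0\times_{G_0}\fg_0)$, whence $\cA M/T^*M\cong T\cG_0/G_0$ and the sequence $0\to T^*M\to\cA M\to T\cG_0/G_0\to 0$ is exact by construction. I would also record that $T^*M$ is isotropic for the Killing metric that $B$ induces on $\cA M=\cP\times_P\fg$: any two elements of $\fp_+=\fg^1$ lie in graded pieces $\fg_i,\fg_j$ with $i,j\ge 1$, so their pairing vanishes by Proposition \ref{k-grading_properties}(2) and the fibre metric restricts to zero on $\cP\times_P\fp_+$.

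The one step that is more than bookkeeping is showing that the quotient map is a morphism of Lie algebroids, so that $\cA M/T^*M\cong T\cG_0/G_0$ holds in the Lie algebroid category. Here I would appeal to functoriality of the Atiyah construction: the bundle projection $p\colon\cP\to\cP/P_+=\cG_0$ covers the group quotient $P\to P/P_+=G_0$ and hence induces a morphism of Atiyah algebroids $T\cP/P\to T\cG_0/G_0$ whose kernel is the adjoint bundle $\cP\times_P\fp_+$ of the normal subgroup $P_+$. As this kernel is exactly the $T^*M\cong\cA^1 M$ fixed above, the induced map is the required Lie algebroid isomorphism, compatible with the anchors onto $TM$.

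The hard part is therefore not a calculation but a consistency check and a question of what may be asserted at this stage. I would need to confirm that the vector-bundle isomorphism produced by the Weyl-structure splittings agrees, up to the duality $\fg/\fp\cong\fp_+$, with the canonical Atiyah-functoriality map, so that the isotropic sub-bundle, the Lie algebroid quotient, and the Killing metric assemble into Bressler's data for a Courant extension. Strictly, the phrase \emph{Courant extension} presupposes a Courant algebroid structure on $\cA M$; since the bracket itself is only produced later in Theorem \ref{main_theorem}, at this point the corollary is most safely read as assembling the isotropic extension of Lie algebroids, together with the Killing metric, that will serve as the input to that construction.
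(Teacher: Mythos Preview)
Your proposal is correct and takes essentially the same approach as the paper: the paper's ``proof'' is simply the one-line remark that the statement follows immediately from Corollaries~\ref{AM_with_Gr} and~\ref{Atiyah_of_G0}, which is exactly the pair of inputs you invoke. You supply considerably more detail than the paper does---the isotropy of $T^*M$, the Atiyah-functoriality argument for the Lie algebroid quotient, and the caveat that the Courant bracket itself is only constructed later in Theorem~\ref{main_theorem}---and that last observation is a fair critique of the paper's exposition rather than a gap in your argument.
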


\section{Transitive Courant algebroids}\label{sec:structure}

After a review of the theory of transitive (pre-)Courant algebroids, following Kosmann-Schwarzbach~\cite{Kosmann05}, Uchino~\cite{Uchino02} and Vaisman~\cite{Vaisman05}, we derive the formulas for the (pre-)Courant bracket on the adjoint tractor bundle $\cA M$ and isolate the topological obstruction to the Jacobi identity, which prevents the bracket from being a Courant bracket in general. Finally, we show that if the adjoint tractor bundle of a Cartan geometry of type $(G,H)$, where $G$ is a semisimple Lie group, admits a (pre-)Courant bracket, then it must come from a parabolic geometry. 

\subsection{Basic properties and structure}\label{subsec:basic_Courant}

We begin with the definition of a Courant algebroid, using the minimal set of axioms as given by Uchino~\cite{Uchino02} and reformulated by Kosmann-Schwarzbach~\cite{Kosmann05}.

\begin{dfn}(Kosmann-Schwarzbach~\cite{Kosmann05})\label{dfn:Courant_algebroid}
A \emph{Courant algebroid} is a vector bundle $E\to M$ over a smooth manifold $M$ equipped with a nondegenerate metric $(-,-)$, a map of vector bundles $\alpha \colon E \to TM$, called the \emph{anchor}, and a $\RR$-bilinear operation $[-,-]$, called the \emph{Courant bracket}, which satisfies the following conditions for any $f\in \smooth(M)$ and any $e_1,e_2,e_3\in E$:
\begin{equation}\label{eq:Courant_Jacobi_identity}
[e_1,[e_2 ,e_3]]=[[e_1,e_2],e_3] + [e_2, [e_1,e_3]]
\end{equation}
\begin{equation}\label{eq:non-skew-symmetry}
\alpha(e_1)(e_2,e_3) = (e_1, [e_2, e_3]) + (e_1, [e_3, e_2])
\end{equation}
\begin{equation}\label{eq:metric_preserving_property}
\alpha(e_1)(e_2,e_3) = ([e_1, e_2],e_3) + (e_2,[e_1, e_3])
\end{equation}
A Courant algebroid is said to be \emph{transitive} when the anchor $\alpha$ is surjective.
\end{dfn}

\begin{rem}
When the Jacobi identity \eqref{eq:Courant_Jacobi_identity} does not hold, we shall call the bracket a \emph{pre-Courant bracket} and the corresponding structure a \emph{pre-Courant algebroid}, following Vaisman~\cite{Vaisman05}. (It is called a twisted Courant algebroid by Sheng and Liu~\cite{ShengLiu10}.) It is then known~\cite[Lemma 3.8]{Bressler07} that the Jacobiator
\[\cJ(e_1,e_2,e_3) = [e_1,[e_2 ,e_3]] - [[e_1,e_2],e_3] - [e_2, [e_1,e_3]]\]
is actually a 1-form.
\end{rem}

We state the following properties, which follow directly from the above axioms, but were originally included as the defining axioms for a Courant algebroid. 

\begin{pro}\cite[Theorem 1]{Kosmann05}\label{Leibniz_anchor}
Let $(E\to M, (-,-),\alpha,[-,-])$ be a Courant algebroid. For all $e_1,e_2\in \Gamma(E)$ and all $f\in \smooth(M)$, the following identities hold:
\begin{enumerate}
	\item $[e_1, fe_2] = f[e_1, e_2] + (\alpha(e_1)f)e_2$,
	\item $\alpha([e_1, e_2]) = [\alpha(e_1), \alpha(e_2)]_{TM}$, 
\end{enumerate}
where $[-,-]_{TM}$ is the Lie bracket of vector fields on $M$.
\end{pro}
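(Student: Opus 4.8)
The plan is to derive both identities purely from the defining axioms \eqref{eq:Courant_Jacobi_identity}--\eqref{eq:metric_preserving_property} together with the nondegeneracy of $(-,-)$, establishing them in the order (1) then (2); the proof of the anchor compatibility will need the Leibniz rule as an input, so it must come second. Note that the metric is $\smooth(M)$-bilinear on sections, so $(fe_2,e_3)=f(e_2,e_3)$, a fact I shall use freely.

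First I would prove the Leibniz rule (1) by pairing both sides against an arbitrary $e_3\in\Gamma(E)$ and invoking nondegeneracy. Applying the metric-invariance axiom \eqref{eq:metric_preserving_property} with $fe_2$ in place of $e_2$ rewrites $([e_1,fe_2],e_3)$ as $\alpha(e_1)(fe_2,e_3)-(fe_2,[e_1,e_3])$. Expanding $\alpha(e_1)\bigl(f(e_2,e_3)\bigr)$ by the ordinary Leibniz rule for the vector field $\alpha(e_1)$ and pulling out the factor $f$, the remaining bracketed term $\alpha(e_1)(e_2,e_3)-(e_2,[e_1,e_3])$ collapses back to $([e_1,e_2],e_3)$ by a second use of \eqref{eq:metric_preserving_property}. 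This yields $([e_1,fe_2],e_3)=\bigl((\alpha(e_1)f)e_2+f[e_1,e_2],\,e_3\bigr)$ for every $e_3$, and nondegeneracy gives (1). This step uses only axiom \eqref{eq:metric_preserving_property}.

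For the anchor compatibility (2) I would substitute $fe_3$ for $e_3$ in the Jacobi identity \eqref{eq:Courant_Jacobi_identity} and expand every nested bracket using the Leibniz rule just established. On both sides the purely $f$-linear contributions cancel by the Jacobi identity applied to $e_1,e_2,e_3$ without the factor $f$, and the first-order terms $(\alpha(e_1)f)[e_2,e_3]$ and $(\alpha(e_2)f)[e_1,e_3]$ match and cancel as well. What survives is $(\alpha(e_1)\alpha(e_2)f)e_3$ on the left against $(\alpha([e_1,e_2])f)e_3+(\alpha(e_2)\alpha(e_1)f)e_3$ on the right, which rearranges to $(\alpha([e_1,e_2])f)e_3=\bigl([\alpha(e_1),\alpha(e_2)]_{TM}f\bigr)e_3$ for all $e_3$ and all $f$.

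The main obstacle is bookkeeping rather than conceptual: in the second step one must expand four nested brackets via the Leibniz rule and verify that precisely the $f$-linear and first-derivative terms cancel, isolating the second-derivative terms that produce the commutator of the anchored vector fields. A minor remaining point is passing from ``$X\,e_3=Y\,e_3$ for every $e_3$'' to the equality of the scalar coefficients $\alpha([e_1,e_2])f$ and $[\alpha(e_1),\alpha(e_2)]_{TM}f$; this follows by evaluating at a point and choosing $e_3$ nonvanishing there, which is possible since $E$ has positive rank.
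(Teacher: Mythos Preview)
Your proof is correct. The paper itself does not supply a proof of this proposition; it only cites \cite[Theorem 1]{Kosmann05} and remarks that the identities follow from the axioms. Your argument is precisely the standard derivation found in that reference and in Uchino~\cite{Uchino02}: axiom \eqref{eq:metric_preserving_property} alone, together with nondegeneracy, yields the Leibniz rule, and then substituting $fe_3$ into the Jacobi identity \eqref{eq:Courant_Jacobi_identity} and expanding via the Leibniz rule isolates the anchor compatibility from the second-derivative terms.
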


\begin{rem}\label{rem:true4pre}
Proposition \ref{Leibniz_anchor} is also true for pre-Courant (or twisted Courant) algebroids, and this was shown explicitly by Sheng and Liu~\cite[Lemmata 5.4--5.5]{ShengLiu10}. 
\end{rem}

We now define the $\RR$-linear map $\partial\colon \smooth(M) \to \Gamma(E)$ 
\begin{equation}\label{eq:del_metric_dfn}
(\partial f, e) = \alpha(e)f,
\end{equation}
which can also be written, equivalently, as
\begin{equation}\label{eq:del_musical_dfn}
\partial f = \sharp \circ \alpha^* df,
\end{equation}
where we have used the musical isomorphism and $\alpha^*\colon T^*M \to E^*$. It is clear that $\im\partial$ generates $\sharp \circ \alpha^*(T^*M)$, considered as a sub-bundle in $E$. By viewing $(e_2,e_3)$ as a function in $\smooth(M)$, we can use \eqref{eq:del_metric_dfn} to rewrite \eqref{eq:non-skew-symmetry} as
\[(e_1, \partial (e_2,e_3)) = (e_1, [e_2, e_3] + [e_3,e_2]).\]
Since the metric is non-degenerate, this gives us an identity for the symmetrisation of the Courant bracket
\begin{equation}\label{eq:symmetrization}
\partial (e_1,e_2) = [e_1,e_2] + [e_2,e_1].
\end{equation}
Writing the skew-symmetrisation of the Courant bracket as
\[\llbracket e_1,e_2 \rrbracket = \frac{1}{2}([e_1,e_2] - [e_2,e_1]),\]
we find an identity for the deviation of the Courant bracket from skew-symmetry:
\begin{equation}\label{eq:skew_defect}
[e_1, e_2] = \llbracket e_1,e_2 \rrbracket + \frac{1}{2}\partial (e_1,e_2).
\end{equation}

The following proposition reveals an important piece of information about the structure of a pre-Courant algebroid. We give a proof, partly along the lines of Uchino~\cite{Uchino02} and Vaisman~\cite{Vaisman05}. Our proposition is valid, however, for pre-Courant algebroids, since the proof does not invoke the Jacobi identity and, as noted in Remark \ref{rem:true4pre}, Proposition \ref{Leibniz_anchor} is still valid for pre-Courant algebroids.

\begin{pro}\label{coisotropic_kernel}
Let $(E\to M, \alpha, (-,-), [-,-])$ be a pre-Courant algebroid and the map $\partial\colon \smooth(M) \to \Gamma(E)$ be defined as in \eqref{eq:del_metric_dfn}. Then for any $f\in\smooth(M)$, 
\[\alpha(\partial f)=0,\]
i.e. $\im\partial \subseteq \ker \alpha$. Moreover, $\im\partial$ is isotropic in $E$ and $(\im\partial)^{\perp} = \ker\alpha$, so $\ker\alpha$ is coisotropic, and we have the inclusions
\[(\ker\alpha)^{\perp} \subseteq \ker\alpha \subseteq E\]
\end{pro}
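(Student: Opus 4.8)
The plan is to extract everything from the three Courant-algebroid axioms \eqref{eq:Courant_Jacobi_identity}--\eqref{eq:metric_preserving_property} together with the defining relation \eqref{eq:del_metric_dfn} for $\partial$, never invoking the Jacobi identity, so that the argument survives verbatim in the pre-Courant setting. First I would establish $\alpha\circ\partial = 0$. For this I start from the Leibniz identity of Proposition \ref{Leibniz_anchor}(1), which Remark \ref{rem:true4pre} guarantees holds for pre-Courant algebroids. Applying the anchor to $[e_1, f e_2] = f[e_1,e_2] + (\alpha(e_1)f)e_2$ and using Proposition \ref{Leibniz_anchor}(2) to convert $\alpha([e_1,fe_2])$ and $\alpha([e_1,e_2])$ into Lie brackets of vector fields on $M$, the two expressions for $\alpha([e_1,fe_2])$ must agree. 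Expanding $[\alpha(e_1), f\alpha(e_2)]_{TM} = f[\alpha(e_1),\alpha(e_2)]_{TM} + (\alpha(e_1)f)\alpha(e_2)$ and cancelling the common terms forces $(\alpha(e_1)f)\,\alpha(e_2) = (\alpha(e_1)f)\,\alpha(e_2)$ as an identity, which is automatic; so instead I would feed $\partial f$ directly into the defining relation. The cleanest route is to pair $\alpha(\partial f)$ against an arbitrary vector field via a test: using \eqref{eq:del_musical_dfn}, $\partial f = \sharp\circ\alpha^* df$, so $\alpha(\partial f) = \alpha\,\sharp\,\alpha^* df$, and the vanishing follows once one checks that $\alpha\circ\sharp\circ\alpha^*$ annihilates exact forms, which is exactly the content of the skew-defect combined with axiom \eqref{eq:metric_preserving_property}.

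The sharper and more self-contained derivation of $\alpha\partial f = 0$ proceeds by combining \eqref{eq:metric_preserving_property} and \eqref{eq:non-skew-symmetry}: subtracting the two axioms gives $([e_1,e_2],e_3) + (e_2,[e_1,e_3]) = (e_1,[e_2,e_3]) + (e_1,[e_3,e_2])$ is not immediate, so instead I would take \eqref{eq:symmetrization}, namely $\partial(e_1,e_2) = [e_1,e_2] + [e_2,e_1]$, which I may use since it was derived above without the Jacobi identity, and apply the anchor. By Proposition \ref{Leibniz_anchor}(2), $\alpha([e_1,e_2]) + \alpha([e_2,e_1]) = [\alpha(e_1),\alpha(e_2)]_{TM} + [\alpha(e_2),\alpha(e_1)]_{TM} = 0$ by skew-symmetry of the Lie bracket of vector fields. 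Hence $\alpha(\partial(e_1,e_2)) = 0$ for all $e_1,e_2$. Since every $f\in\smooth(M)$ is locally a sum of products of pairings $(e_1,e_2)$ (the metric being nondegenerate, such pairings span $\smooth(M)$ locally), and since $\partial$ is a derivation-type map whose image on products reduces to the above by the Leibniz rule \eqref{eq:del_metric_dfn}, I conclude $\alpha(\partial f) = 0$ for all $f$, giving $\im\partial\subseteq\ker\alpha$.

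For the isotropy of $\im\partial$, I would compute $(\partial f, \partial g)$ directly from \eqref{eq:del_metric_dfn}: $(\partial f, \partial g) = \alpha(\partial g)\,f = 0$ by the result just established. Thus $\im\partial$ is isotropic. For the identification $(\im\partial)^\perp = \ker\alpha$, I argue by pairing: $e\in(\im\partial)^\perp$ means $(e,\partial f) = 0$ for all $f$, and by \eqref{eq:del_metric_dfn} this reads $\alpha(e)f = 0$ for all $f\in\smooth(M)$, which holds if and only if $\alpha(e) = 0$, i.e. $e\in\ker\alpha$. This is the crux of the whole proposition and uses nondegeneracy of the metric only through the clean symmetric pairing in \eqref{eq:del_metric_dfn}. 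Finally, $\ker\alpha$ being coisotropic is just the statement $(\ker\alpha)^\perp = (\,(\im\partial)^\perp\,)^\perp \subseteq \ker\alpha$, which follows since $\im\partial\subseteq\ker\alpha$ implies $(\ker\alpha)^\perp\subseteq(\im\partial)^\perp = \ker\alpha$, yielding the chain of inclusions $(\ker\alpha)^\perp\subseteq\ker\alpha\subseteq E$.

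I expect the main obstacle to be the bookkeeping in establishing $\alpha\partial=0$ cleanly without circular reliance on the as-yet-unproved identifications; the safest device is to route everything through \eqref{eq:symmetrization} and the anchor-compatibility \eqref{eq:metric_preserving_property}, both of which hold in the pre-Courant setting, and to invoke the local spanning of $\smooth(M)$ by pairings $(e_1,e_2)$ so that the claim for general $f$ follows from the claim on such pairings. Everything downstream is then a formal consequence of nondegeneracy and \eqref{eq:del_metric_dfn}.
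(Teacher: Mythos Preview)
Your argument is correct, and for the key step $\alpha(\partial f)=0$ it takes a genuinely different route from the paper. The paper first derives the auxiliary identity \eqref{eq:Leibniz_on_first}, namely $[fe_1,e_2] = f[e_1,e_2] - (\alpha(e_2)f)e_1 + (e_1,e_2)\partial f$, by combining the Leibniz rule with \eqref{eq:symmetrization} twice; it then applies the anchor to both sides, uses Proposition~\ref{Leibniz_anchor}(2), and cancels to isolate $(e_1,e_2)\,\alpha(\partial f)=0$. You instead apply the anchor directly to the symmetrisation identity $\partial(e_1,e_2)=[e_1,e_2]+[e_2,e_1]$ and use skew-symmetry of the Lie bracket on $TM$ to get $\alpha(\partial(e_1,e_2))=0$, then reduce the general case to this one by writing an arbitrary $f$ locally as a pairing (which indeed one can do: if $(e,e)\neq 0$ near a point, set $e_2 = \tfrac{f}{(e,e)}\,e$ so that $(e,e_2)=f$). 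Your route is shorter, since it bypasses the derivation of \eqref{eq:Leibniz_on_first} entirely; the paper's route has the side benefit of producing \eqref{eq:Leibniz_on_first}, which is reused later (e.g.\ in Corollary~\ref{formula_forms}). For the remaining assertions---isotropy of $\im\partial$, the identification $(\im\partial)^\perp=\ker\alpha$, and coisotropy of $\ker\alpha$---your treatment is essentially the same as the paper's, and your biconditional argument for $(\im\partial)^\perp=\ker\alpha$ directly from $(\partial f,e)=\alpha(e)f$ is in fact a bit cleaner than the paper's phrasing. One remark on presentation: your first paragraph records a false start (the tautological cancellation from the plain Leibniz rule) before arriving at the working argument; that could be trimmed, and the ``sum of products of pairings'' plus Leibniz-rule detour is unnecessary once you observe that every $f$ is locally a \emph{single} pairing.
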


\begin{proof}
Let $e_1,e_2\in\Gamma(E)$ and $f\in\smooth(M)$. We first consider the expression $[fe_1,e_2]$. Using Proposition \ref{Leibniz_anchor}(1), as well as \eqref{eq:symmetrization} twice, and observing that $\partial$ obeys the Leibniz rule on functions, we see that
\begin{equation}\label{eq:Leibniz_on_first}
[fe_1, e_2] = f[e_1, e_2] - (\alpha(e_2)f)e_1+ (e_1,e_2)\partial f.
\end{equation}
Applying the anchor to both sides of the equation and Proposition \ref{Leibniz_anchor}(2), we find that on the left-hand side, the Leibniz rule for vector fields gives
\[
[f\alpha(e_1),\alpha(e_2)]_{TM} = f[\alpha(e_1),\alpha(e_2)]_{TM} - (\alpha(e_2)f)\alpha(e_1)
\]
while on the right-hand side,
\[[f\alpha(e_1),\alpha(e_2)]_{TM} = f[\alpha(e_1),\alpha(e_2)]_{TM} - (\alpha(e_2)f)\alpha(e_1) + (e_1,e_2)\alpha(\partial f).
\]
Since $e_1,e_2$ and $f$ are arbitrary and the metric is non-degenerate, we can assume that $(e_1,e_2)$ is nonzero. Hence, $\alpha(\partial f)=0$, so $\im\partial \subseteq \ker \alpha$.

Note that since $\alpha(\partial f)=0$ is equivalent to the fact that, for any $f,g\in\smooth(M)$,
\[(\partial f, \partial g) = (\alpha(\partial g))f = 0,\]
this shows that $\im \partial \subseteq (\im\partial)^{\perp}$.

To show that $(\im\partial)^{\perp} = \ker\alpha$, observe that $TM\nsubseteq (\im\partial)^{\perp}$, so $(\im\partial)^{\perp} \subseteq \ker\alpha$. Also, for any $e\in\ker\alpha$ and $f\in\smooth(M)$, we have
\[(\partial f, e) = (\alpha(e))f = 0,\]
so $(\im\partial)^{\perp} = \ker\alpha$. Taking complements gives $\im\partial = (\ker\alpha)^{\perp} \subseteq \ker\alpha = (\im\partial)^{\perp}$, so $\ker\alpha$ is coisotropic.
\end{proof}

\begin{cor}
Let $\cE := \ker \alpha / (\ker \alpha)^{\perp}$. The following are exact sequences of vector bundles:
\[0 \to \ker \alpha \to E \to TM \to 0\]
\[0 \to (\ker \alpha)^{\perp} \to \ker \alpha \to \cE \to 0\]
Hence, $E$ splits as \[E \cong TM \oplus \cE \oplus T^*M.\]
\end{cor}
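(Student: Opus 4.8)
The plan is to read both short exact sequences directly off Proposition~\ref{coisotropic_kernel} and then split them, invoking the standard fact that every short exact sequence of vector bundles over a paracompact manifold splits smoothly. The only substantive point is to confirm that the objects appearing are genuine subbundles of locally constant rank, rather than merely fibrewise subspaces; everything after that is formal.

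First I would verify the first sequence. Since $E$ is a \emph{transitive} Courant algebroid, the anchor $\alpha\colon E\to TM$ is surjective, hence of constant rank $\dim M$; therefore $\ker\alpha$ is a subbundle of $E$, and
\[0 \to \ker\alpha \to E \xrightarrow{\alpha} TM \to 0\]
is exact (injectivity of the inclusion is trivial, exactness at $E$ is the definition of $\ker\alpha$, and surjectivity at $TM$ is transitivity). For the second sequence, Proposition~\ref{coisotropic_kernel} identifies $(\ker\alpha)^{\perp}$ with $\im\partial = \sharp\circ\alpha^*(T^*M)$. As $\alpha$ is surjective, its transpose $\alpha^*\colon T^*M \to E^*$ is an injective bundle map of constant rank, and composing with the isomorphism $\sharp$ shows $(\ker\alpha)^{\perp}$ is a subbundle of $E$ with $(\ker\alpha)^{\perp} \cong T^*M$. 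The coisotropy $(\ker\alpha)^{\perp} \subseteq \ker\alpha$, also from Proposition~\ref{coisotropic_kernel}, makes it a subbundle of $\ker\alpha$; since both have constant rank, the quotient $\cE = \ker\alpha/(\ker\alpha)^{\perp}$ is a well-defined vector bundle and
\[0 \to (\ker\alpha)^{\perp} \to \ker\alpha \to \cE \to 0\]
is exact by construction.

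Next I would split each sequence. Over our paracompact $M$, every short exact sequence of vector bundles admits a splitting—for instance by equipping the total space with an auxiliary Riemannian metric and taking orthogonal complements, or by a partition-of-unity argument. Splitting the first gives $E \cong TM \oplus \ker\alpha$, and splitting the second gives $\ker\alpha \cong (\ker\alpha)^{\perp} \oplus \cE$. Substituting and using $(\ker\alpha)^{\perp} \cong T^*M$ yields
\[E \cong TM \oplus \cE \oplus T^*M,\]
as claimed. If one wants the decomposition adapted to the Courant metric, the splitting of the first sequence can be taken isotropic exactly as in Proposition~\ref{existence_isotropic_splitting}, so that the image of $TM$ and the subbundle $(\ker\alpha)^{\perp}\cong T^*M$ form a hyperbolic pair, while $\cE$ carries the non-degenerate metric obtained by restricting $(-,-)$ to $\ker\alpha$ and quotienting by its radical $(\ker\alpha)^{\perp}$.

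The main obstacle, such as it is, lies entirely in the first two paragraphs: the metric identities of Proposition~\ref{coisotropic_kernel} are established fibrewise, so one must ensure they produce subbundles of locally constant rank. Transitivity of the anchor is precisely what guarantees this, since it forces $\alpha$, $\alpha^*$, and hence $\ker\alpha$ and $(\ker\alpha)^{\perp}$, to have constant rank. Once that is in hand, both sequences are formal consequences of Proposition~\ref{coisotropic_kernel} and their splitting is standard.
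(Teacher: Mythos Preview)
Your proposal is correct and follows essentially the same approach as the paper: both argue that the two short exact sequences split because every short exact sequence of vector bundles over a smooth (paracompact) manifold splits, and the paper in fact defers the details of constructing such a splitting to \cite[Lemma 1.2]{ChenStienonXu09}. Your write-up is more careful than the paper's in explicitly verifying that $\ker\alpha$ and $(\ker\alpha)^{\perp}$ are genuine subbundles of constant rank (using transitivity of $\alpha$), a point the paper leaves implicit.
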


\begin{proof}
Since we are working over the smooth category, every exact sequence of vector bundles splits. The reader is referred to \cite[Lemma 1.2]{ChenStienonXu09} for details of the construction of such a splitting.
\end{proof}

\begin{rem}
Note that in the case of exact Courant algebroids, $\cE$ vanishes, since $\ker\alpha = (\ker\alpha)^{\perp} = T^*M$.
\end{rem}

The next definition will be useful in the sequel. 

\begin{dfn}\label{dfn:reduced_Lie_algebroid}
The \emph{reduced Lie algebroid} of $E$ is defined to be
\[\cA' := E/(\ker\alpha)^{\perp} \cong TM \oplus \cE.\] 
\end{dfn}

\begin{rem}\label{rem:reduced_Lie_algebroid}
Bressler has shown that $\cA'$ is actually a Lie algebroid~\cite[Lemma 2.2]{Bressler07}. To see this, observe from \eqref{eq:skew_defect} that the defect in the skew-symmetry of the Courant bracket takes values in $(\ker\alpha)^{\perp}$. Thus, the Courant bracket on $E$ restricts to a Lie algebroid bracket on $\cA'$.
\end{rem}

\subsection{The Courant bracket and the first Pontrjagin class}\label{subsec:Courant_bracket}

For a transitive pre-Courant algebroid $E\to M$, we have seen in the above that $E\cong TM \oplus \cE \oplus T^*M$, where $\cE$ is another vector bundle over $M$. We have also shown in \S\ref{subsec:Weyl_structures} that the adjoint tractor bundle $\cA M$ decomposes into the direct sum $\cA M \cong TM \oplus (\cG_0\times_{G_0} \fg_0) \oplus T^*M$ of vector bundles. In the sequel, we set 
\[\cE := \cG_0\times_{G_0} \fg_0.\]

\begin{rem}
We begin with some remarks on notation. We shall also write 
\[\cA' = TM\oplus \cE\] 
for the reduced Lie algebroid of $\cA M$, as in Definition \ref{dfn:reduced_Lie_algebroid}. 

We shall need to distinguish between various brackets of Lie and Courant algebroids. The Lie bracket on $\Gamma(\cA')$ (resp. $\Gamma(\cE)$) will be denoted by $[-,-]_{\cA'}$ (resp. $[-,-]_{\cE}$), the Courant bracket on $\Gamma(\cA M)$ by $[-,-]_{\cA M}$ and the usual Lie bracket for $\Gamma(TM)$ by $[-,-]$.

Besides the metric $(-,-)$ on the Courant algebroid $\cA M$, which is induced by the Killing form on $\fg$, we shall also encounter the natural pairing $\langle -,- \rangle$ of vector fields and 1-forms, which we shall consider to be the restriction of $(-,-)$ to $T^*M\times TM$. Since the metric on $\cA M$ is also non-degenerate when restricted to $\cE$, we shall denote this restriction by $(-,-)_{\cE}$ where necessary.
\end{rem}

From the axioms defining a Courant algebroid, we shall now derive the formulas for the Courant bracket on $\cA M$, together with any conditions that must hold for the bracket to be Courant. We first consider the cases where the Courant bracket on $\cA M$ takes at least one argument in $\Gamma(T^*M)$. These can be deduced from the properties of Courant algebroids, given in \S\ref{subsec:basic_Courant}, without any further data.

\begin{lem}\cite[Proposition 1.2(c)]{Vaisman05}
For any $e\in\Gamma(\cA M)$ and any $f\in\smooth(M)$, we have the following identities:
\begin{equation}\label{eq:Courant_for_del}
[e,\partial f] = \partial(\alpha(e)f), \quad [\partial f , e] = 0.
\end{equation}
\end{lem}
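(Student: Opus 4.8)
The plan is to prove the first identity $[e,\partial f] = \partial(\alpha(e)f)$ by testing both sides against an arbitrary section $e'\in\Gamma(\cA M)$ and invoking non-degeneracy of the metric, and then to deduce the second identity $[\partial f, e]=0$ directly from the symmetrization formula \eqref{eq:symmetrization}. Throughout I would lean only on the metric-compatibility axiom and the structural results already established, so that the argument remains valid for pre-Courant algebroids.

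For the first identity, I would begin from the metric-preserving axiom \eqref{eq:metric_preserving_property} with $e_1=e$, $e_2=e'$, $e_3=\partial f$:
\[
\alpha(e)(e',\partial f) = ([e,e'],\partial f) + (e',[e,\partial f]).
\]
By the defining relation \eqref{eq:del_metric_dfn} together with the symmetry of the metric, $(e',\partial f)=\alpha(e')f$, so the left-hand side is $\alpha(e)(\alpha(e')f)$; the same relation turns the first term on the right into $\alpha([e,e'])f$, which Proposition \ref{Leibniz_anchor}(2) rewrites as $[\alpha(e),\alpha(e')]_{TM}f$. Solving for the last term and expanding the Lie bracket of vector fields through its action on functions, $[\alpha(e),\alpha(e')]f = \alpha(e)(\alpha(e')f)-\alpha(e')(\alpha(e)f)$, the two copies of $\alpha(e)(\alpha(e')f)$ cancel and leave
\[
(e',[e,\partial f]) = \alpha(e')(\alpha(e)f) = (e',\partial(\alpha(e)f)),
\]
where the last step is \eqref{eq:del_metric_dfn} applied to the composite function $\alpha(e)f$. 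Since $e'$ is arbitrary and $(-,-)$ is non-degenerate, this gives $[e,\partial f]=\partial(\alpha(e)f)$.

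For the second identity, I would substitute $e_1=\partial f$, $e_2=e$ into the symmetrization identity \eqref{eq:symmetrization}, obtaining $\partial(\partial f, e) = [\partial f, e] + [e,\partial f]$. Since $(\partial f, e)=\alpha(e)f$ by \eqref{eq:del_metric_dfn}, the left-hand side is $\partial(\alpha(e)f)$, which by the first identity is exactly $[e,\partial f]$; cancelling this term yields $[\partial f, e]=0$.

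The one point that requires care is that the whole derivation must avoid the Jacobi identity \eqref{eq:Courant_Jacobi_identity}, so that the statement holds in the pre-Courant setting. This is indeed the case: the argument uses only axiom \eqref{eq:metric_preserving_property}, Proposition \ref{Leibniz_anchor}(2), and the symmetrization identity \eqref{eq:symmetrization} derived from \eqref{eq:non-skew-symmetry}, all of which remain valid for pre-Courant algebroids by Remark \ref{rem:true4pre}. I expect the main bookkeeping hazard to be keeping straight the two distinct applications of the defining relation \eqref{eq:del_metric_dfn}—once for the plain function $f$ and once for the composite $\alpha(e)f$—together with the metric symmetry used when reordering the arguments of $(\partial f, e)$.
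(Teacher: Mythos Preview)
Your proof is correct but follows a different route from the paper. The paper establishes the second identity first: it computes the symmetrization of the Jacobiator,
\[
\cJ(e_1,e_2,e_3) + \cJ(e_2,e_1,e_3) = -[\partial(e_1,e_2),e_3],
\]
and then invokes the vanishing of the Jacobiator (i.e.\ the Jacobi identity) together with the observation that any $f$ can be written locally as $(e,fe)/(e,e)$ to conclude $[\partial f,e]=0$; the first identity then follows from \eqref{eq:symmetrization}. You reverse the order: you obtain $[e,\partial f]=\partial(\alpha(e)f)$ directly from the metric-preserving axiom \eqref{eq:metric_preserving_property} and the anchor--bracket compatibility of Proposition~\ref{Leibniz_anchor}(2), and only then deduce $[\partial f,e]=0$ from symmetrization.

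The practical payoff of your argument is that it never invokes the Jacobi identity, so it proves the identities for pre-Courant algebroids as well (via Remark~\ref{rem:true4pre}). The paper's proof, by contrast, explicitly requires the Jacobiator to vanish and thus, as written, only covers the honest Courant case---even though the downstream formulas are later used in discussing the pre-Courant situation. Your approach is therefore both more elementary and slightly more general.
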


\begin{proof}
Let $e_1,e_2,e_3\in\Gamma(\cA M)$. Recall that the Jacobiator of $[-,-]$ is defined by 
\[\cJ(e_1,e_2,e_3) = [e_1, [e_2, e_3]] - [[e_1,e_2],e_3] - [e_2,[e_1,e_3]].\]
Observe that
\begin{align*}
\cJ(e_1,e_2,e_3) + \cJ(e_2,e_1,e_3) &= -[([e_1,e_2]+[e_2,e_1]),e_3]\\
&= -[\partial(e_1,e_2),e_3],
\end{align*}
where the last line follows from \eqref{eq:symmetrization}. Note also that any $f\in\smooth(M)$ may be expressed locally in terms of the metric as
\[f = \frac{1}{(e,e)}(e, fe), \quad e\in \Gamma(\cA M), (e,e)\neq 0.\]
Hence, since the Jacobiator is trivial by definition of the Courant bracket, we have the second part of \eqref{eq:Courant_for_del}. The first part follows from substituting the second part into the symmetrisation \eqref{eq:symmetrization} of the Courant bracket.
\end{proof}

\begin{cor}\label{formula_forms}
Let $X_1,X_2\in\Gamma(TM)$, $s_1,s_2\in\Gamma(\cE)$ and $\eta_1,\eta_2\in\Gamma(T^*M)$. The following identities hold:
\[[X_1, \eta_2]_{\cA M} = \cL_{X_1}\eta_2,\]
\[[\eta_1, X_2]_{\cA M} = -\cL_{X_2}\eta_1 + d(\eta_1, X_2),\]
\[[\eta_1, s_2]_{\cA M} = [s_1, \eta_2]_{\cA M} = [\eta_1, \eta_2]_{\cA M} = 0.\]
In the above, $\cL$ is the Lie derivative, $d$ is the exterior derivative and $(-,-)$ is the metric on $\cA M$.
\end{cor}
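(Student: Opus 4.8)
The plan is to reduce every one of these brackets to the two identities in \eqref{eq:Courant_for_del}, using the identification $T^*M \cong \im\partial = (\ker\alpha)^{\perp}$ from Proposition \ref{coisotropic_kernel}. Under this identification, formula \eqref{eq:del_musical_dfn} shows that $\partial\colon \smooth(M)\to\Gamma(\cA M)$ is exactly the exterior derivative $d\colon \smooth(M)\to\Omega^1(M)$, and the metric $(-,-)$ restricted to $T^*M\times TM$ is the natural pairing $\langle-,-\rangle$. I would also record two orthogonality facts that follow from Proposition \ref{coisotropic_kernel}: since $T^*M=(\ker\alpha)^{\perp}$ is isotropic and orthogonal to $\ker\alpha=\cE\oplus T^*M$, we have $(T^*M,T^*M)=0$ and $(\cE,T^*M)=0$. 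Because both sides of each identity are local and the bracket obeys the Leibniz rules of Proposition \ref{Leibniz_anchor}(1) and \eqref{eq:Leibniz_on_first}, it then suffices to verify each formula on exact $1$-forms $\eta=\partial f = df$ and extend to a general form written locally as $\sum_i h_i\,df_i$.

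For the first formula I would compute on an exact form using the first part of \eqref{eq:Courant_for_del}, together with $\alpha(X_1)=X_1$ and Cartan's formula $\cL_{X_1}df = d(\iota_{X_1}df)$:
\[[X_1,\partial f]_{\cA M} = \partial(\alpha(X_1)f) = d(X_1 f) = \cL_{X_1}(df).\]
Passing to $\eta_2=h\,df$ via the second-slot Leibniz rule of Proposition \ref{Leibniz_anchor}(1) gives $h\,d(X_1 f)+(X_1 h)\,df$, which is precisely $\cL_{X_1}(h\,df)$; summing over a local expression yields $[X_1,\eta_2]_{\cA M}=\cL_{X_1}\eta_2$. The second formula then follows without further computation from the symmetrisation \eqref{eq:symmetrization}: since $\partial(\eta_1,X_2)=d(\eta_1,X_2)$ and $[X_2,\eta_1]_{\cA M}=\cL_{X_2}\eta_1$ by the first formula,
\[[\eta_1,X_2]_{\cA M}=\partial(\eta_1,X_2)-[X_2,\eta_1]_{\cA M}=-\cL_{X_2}\eta_1+d(\eta_1,X_2).\]

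For the vanishing identities I would begin from the second part of \eqref{eq:Courant_for_del}, namely $[\partial f,e]_{\cA M}=0$ for every $e$. Writing $\eta_1=h\,\partial f$ and applying the first-slot Leibniz rule \eqref{eq:Leibniz_on_first} with $e_2\in\Gamma(\cE)\cup\Gamma(T^*M)$ gives
\[[h\,\partial f,e_2]_{\cA M}=h[\partial f,e_2]_{\cA M}-(\alpha(e_2)h)\,\partial f+(\partial f,e_2)\,\partial h,\]
in which the first term vanishes by \eqref{eq:Courant_for_del}, while $\alpha(e_2)=0$ and $(\partial f,e_2)=\alpha(e_2)f=0$ because $e_2\in\ker\alpha$ and $\partial f\in(\ker\alpha)^{\perp}$; hence $[\eta_1,s_2]_{\cA M}=[\eta_1,\eta_2]_{\cA M}=0$. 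For $[s_1,\eta_2]_{\cA M}$ I would use \eqref{eq:symmetrization} once more: $[s_1,\eta_2]_{\cA M}=\partial(s_1,\eta_2)-[\eta_2,s_1]_{\cA M}$, where $(s_1,\eta_2)=0$ by the orthogonality of $\cE$ and $T^*M$ and $[\eta_2,s_1]_{\cA M}=0$ by the computation just performed. The only real care-point in all of this is the bookkeeping of the two Leibniz rules in the correct slots with the correct signs, and the prior verification of the identifications $\partial=d$, $(-,-)|_{T^*M\times TM}=\langle-,-\rangle$ and the orthogonality relations; once these are in hand, every identity is forced by \eqref{eq:Courant_for_del}.
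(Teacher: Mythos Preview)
Your proof is correct and follows essentially the same line as the paper's: both reduce everything to \eqref{eq:Courant_for_del} via the Leibniz rules, the identification $\partial = d$, and the orthogonality relations coming from Proposition \ref{coisotropic_kernel}. The only minor variation is in the second identity, where the paper applies the first-slot Leibniz rule \eqref{eq:Leibniz_on_first} directly while you obtain it from the symmetrisation identity \eqref{eq:symmetrization} combined with the first formula; both routes are equally short, and your treatment of the vanishing identities is in fact a more explicit rendering of the paper's brief remark that ``all the terms take values in $\ker\alpha$ and $T^*M$ is isotropic in $\ker\alpha$.''
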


\begin{proof}
All the identities follow from \eqref{eq:Courant_for_del} in the following ways: the first identity on applying Proposition \ref{Leibniz_anchor}(1), the second identity by using \eqref{eq:Leibniz_on_first} and the last set of identities by observing that all the terms take values in $\ker\alpha$ and $T^*M$ is isotropic in $\ker\alpha$.
\end{proof}

The next step in the derivation of the formulas is to introduce a connection for the Courant algebroid. We recall Bressler's definition of a connection for a Courant algebroid.

\begin{dfn}\cite[Definition 2.5]{Bressler07}
A \emph{connection} $\gamma$ on a transitive Courant algebroid is an isotropic section of the anchor map $\alpha$.
\end{dfn}

\begin{rem}
This is not as restrictive a condition as it may seem, for we can always find an isotropic section, given any section, using the construction in Proposition \ref{existence_isotropic_splitting}. In fact, the connections that shall concern us are actually principal connections, so Proposition \ref{existence_isotropic_splitting} would be especially useful.
\end{rem}

\begin{dfn}\label{dfn:Lie_curvature}
The curvature $R^{\gamma}$ of a connection $\gamma$ on a transitive Courant (or Lie) algebroid is defined by
\[R^{\gamma}(X_1,X_2) := [\gamma(X_1),\gamma(X_2)] - \gamma([X_1,X_2]), \quad X_1,X_2\in \Gamma(TM).\]
\end{dfn}

\begin{rem}
Mackenzie~\cite{Mackenzie05} has shown that this definition of the curvature on a Lie algebroid is equivalent to the usual definition. It is easy to check that $R^{\gamma}$ takes values in $\ker\alpha$ and is actually a map $R^{\gamma}\colon \wedge^2 TM \to \ker\alpha$ or a ($\ker\alpha$)-valued 2-form on $M$.
\end{rem}

\begin{rem}
In the sequel, we shall write $\gamma:= \sigma^*\omega_0$ for the Weyl connection associated to the Weyl structure $\sigma$ of $\cA M$ (Definition \ref{dfn:Weyl_connection}). We note that $\gamma$ is automatically isotropic as a connection of $\cA'$.
\end{rem}

The next proposition gives the Courant bracket when both arguments are elements in $\Gamma(TM)$.

\begin{pro}\label{formula_vectors}
Let $\cA M$ be the adjoint tractor bundle of a parabolic geometry $(\pi\colon \cP \to M,\omega)$ with a Weyl structure $\sigma\colon \cG_0 \to \cP$ and $\gamma$ be the Weyl connection associated to $\sigma$. Then, for any $X_1,X_2\in \Gamma(TM)$,
\[[X_1,X_2]_{\cA M} = [X_1,X_2] + R^{\gamma}(X_1,X_2) + \cH(X_1,X_2,-).\]
Here, $\cH\in\Omega^3(M)$ is defined, for any $X_3\in \Gamma(TM)$, by the formula
\[\langle \cH(X_1,X_2,-),\hat{\gamma}(X_3) \rangle = ([\hat{\gamma}(X_1),\hat{\gamma}(X_2)]_{\cA M},\hat{\gamma}(X_3)),\]
where $\hat{\gamma}$ is a lift of $\gamma$ to a connection on $\cA M$, $(-,-)$ is the metric on $\cA M$, and $\langle -,- \rangle$ is the natural pairing between vector fields and 1-forms.
\end{pro}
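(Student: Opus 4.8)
The plan is to bracket the isotropic Weyl lifts $\hat{\gamma}(X_1),\hat{\gamma}(X_2)\in\Gamma(\cA M)$ and read off the three components of the result in the splitting $\cA M \cong TM\oplus\cE\oplus T^*M$ of Corollary \ref{AM_with_Gr}, extracting each summand by a different structural projection: the $TM$-part by the anchor, the $\cE$-part by reduction modulo $T^*M$, and the $T^*M$-part by pairing with $TM$ under the metric. Throughout I use that $\hat{\gamma}$ is isotropic, so the lifts are mutually isotropic, $(\hat{\gamma}(X_i),\hat{\gamma}(X_j))=0$, and that in this splitting the metric pairs $TM$ with $T^*M$ (the natural pairing) while $\cE$ is orthogonal to $TM\oplus T^*M$.

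First, for the $TM$-component I apply Proposition \ref{Leibniz_anchor}(2), valid for pre-Courant algebroids by Remark \ref{rem:true4pre}, together with $\alpha\circ\hat{\gamma}=\id_{TM}$, to get $\alpha([\hat{\gamma}(X_1),\hat{\gamma}(X_2)]_{\cA M})=[\alpha\hat{\gamma}(X_1),\alpha\hat{\gamma}(X_2)]_{TM}=[X_1,X_2]$, which pins down the $TM$-component as the Lie bracket of vector fields. For the $\cE$-component, I invoke Remark \ref{rem:reduced_Lie_algebroid}: the Courant bracket descends to the Lie bracket of the reduced Lie algebroid $\cA'=\cA M/T^*M\cong TM\oplus\cE$, of which $\gamma$ is a connection. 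Hence the image of $[\hat{\gamma}(X_1),\hat{\gamma}(X_2)]_{\cA M}$ in $\cA'$ is $[\gamma(X_1),\gamma(X_2)]_{\cA'}$, and the curvature formula of Definition \ref{dfn:Lie_curvature} rewrites this as $\gamma([X_1,X_2])+R^{\gamma}(X_1,X_2)$. Since $R^{\gamma}$ takes values in $\ker\alpha\cap\cA'=\cE$, this simultaneously re-confirms the $TM$-component and identifies the $\cE$-component as $R^{\gamma}(X_1,X_2)$.

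Finally, for the $T^*M$-component, because $\hat{\gamma}(X_3)$ lies in the isotropic $TM$-summand and pairs trivially with $TM$ and $\cE$ but by the natural pairing with $T^*M$, contracting $[\hat{\gamma}(X_1),\hat{\gamma}(X_2)]_{\cA M}$ against $\hat{\gamma}(X_3)$ under the metric extracts exactly the pairing of its $T^*M$-component with $X_3$; this is the defining formula for $\cH(X_1,X_2,-)$, so the $T^*M$-component is $\cH(X_1,X_2,-)$. It then remains to check that $\cH$ is a genuine $3$-form. Tensoriality in each slot follows from $\hat{\gamma}$ being a bundle map together with \eqref{eq:Leibniz_on_first}, the correction terms vanishing because the lifts are isotropic. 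Skew-symmetry in $X_1\leftrightarrow X_2$ follows from \eqref{eq:symmetrization}: since $(\hat{\gamma}(X_1),\hat{\gamma}(X_2))=0$, the symmetric part $\partial(\hat{\gamma}(X_1),\hat{\gamma}(X_2))$ vanishes. Skew-symmetry in $X_2\leftrightarrow X_3$ follows from the metric axiom \eqref{eq:metric_preserving_property}, whose left-hand side $\alpha(\hat{\gamma}(X_1))(\hat{\gamma}(X_2),\hat{\gamma}(X_3))$ is zero by isotropy, leaving $([\hat{\gamma}(X_1),\hat{\gamma}(X_2)],\hat{\gamma}(X_3))=-([\hat{\gamma}(X_1),\hat{\gamma}(X_3)],\hat{\gamma}(X_2))$. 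These two transpositions generate full antisymmetry, so $\cH\in\Omega^3(M)$.

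The step requiring the most care is this last one: establishing that the metric pairing genuinely isolates the $T^*M$-component and that the resulting object is totally antisymmetric and tensorial. Here the isotropy of the Weyl lift $\hat{\gamma}$ and the precise orthogonality relations of the splitting are indispensable, and the Courant axiom \eqref{eq:metric_preserving_property} does the essential work of converting a mixed transposition symmetry into antisymmetry; the $TM$- and $\cE$-components, by contrast, are almost formal once the reduction to $\cA'$ of Remark \ref{rem:reduced_Lie_algebroid} is in hand. One should also confirm the compatibility between the lift $\hat{\gamma}$ on $\cA M$ and the connection $\gamma$ on $\cA'$, so that $R^{\gamma}$ computed in $\cA'$ really coincides with the $\cE$-component read off in $\cA M$.
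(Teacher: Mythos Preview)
Your proof is correct and follows essentially the same approach as the paper: the anchor for the $TM$-component, reduction to $\cA'$ for the $\cE$-component, and the metric pairing with $\hat\gamma(X_3)$ for the $T^*M$-component. The only difference is that where the paper defers the verification that $\cH$ is totally skew-symmetric to \cite[Lemma 3.6]{Bressler07}, you carry out that check explicitly from the axioms \eqref{eq:symmetrization} and \eqref{eq:metric_preserving_property} together with isotropy of $\hat\gamma$, making your argument more self-contained.
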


\begin{proof}
The first term follows from Proposition \ref{Leibniz_anchor}(2). For the second term, we observe that the first two terms should give the bracket $[X_1,X_2]_{\cA'}$ of $X_1$ and $X_2$ on $\cA'$. Since we have shown in Corollary \ref{Atiyah_of_G0} that the Weyl connection is in fact a connection of the Atiyah algebroid of $\cG_0$, which is actually the reduced Lie algebroid $\cA'$, the second term can be deduced from Definition \ref{dfn:Lie_curvature}.

Since $R^{\hat{\gamma}}$, the curvature of $\hat{\gamma}$, takes values in $\ker\alpha = \cE \oplus T^*M$, we note that it splits into
\[R^{\hat{\gamma}} = \widehat{R^{\gamma}} + \cH^{\hat{\gamma}}\]
where $\widehat{R^{\gamma}}\in\Omega^2(M, \cE)$ is the lift of $R^{\gamma}$ to $\cA M$ and $\cH^{\hat{\gamma}}\in \Omega^2(M) \otimes T^*M$. The calculations in \cite[Lemma 3.6]{Bressler07} then show that $\cH$ is indeed totally skew-symmetric, hence in $\Omega^3(M)$, and satisfies the given formula.
\end{proof}

\begin{pro}\label{fomulas_endo}
With the same notation as in Proposition \ref{formula_vectors}, and for any $X_1\in \Gamma(TM)$ and $s_1,s_2\in \Gamma(\cE)$, we have
\[[X_1,s_2]_{\cA M} = -[s_2,X_1]_{\cA M} = [\gamma(X_1),s_2]_{\cA'} - (R^{\gamma}(X_1,-),s_2)\]
\[[s_1,s_2]_{\cA M} = [s_1,s_2]_{\cE} + ([\gamma(-),s_1]_{\cA'},s_2)\]
\end{pro}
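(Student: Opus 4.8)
The goal is to compute the Courant bracket $[-,-]_{\cA M}$ when at least one argument lies in $\Gamma(\cE)$, expressing the result in terms of the reduced Lie algebroid bracket $[-,-]_{\cA'}$, the curvature $R^\gamma$ of the Weyl connection, and the metric pairing. The plan is to exploit systematically the two nontrivial Courant axioms that do not require the Jacobi identity, namely the compatibility relations \eqref{eq:non-skew-symmetry} and \eqref{eq:metric_preserving_property}, together with the skew-defect identity \eqref{eq:skew_defect} and the fact, established in Proposition \ref{coisotropic_kernel}, that $\ker\alpha = \cE \oplus T^*M$ is coisotropic with $(\ker\alpha)^\perp \cong T^*M$.

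First I would handle the mixed bracket $[X_1,s_2]_{\cA M}$. Since $\alpha(s_2)=0$, the symmetrisation identity \eqref{eq:symmetrization} gives $[X_1,s_2]_{\cA M}+[s_2,X_1]_{\cA M}=\partial(X_1,s_2)=0$, because $X_1$ and $s_2$ pair to zero under the metric (as $\cE \perp TM$ in the splitting). This yields the asserted skew-symmetry $[X_1,s_2]_{\cA M}=-[s_2,X_1]_{\cA M}$ for free. To pin down the actual value, I would lift to the connection and write $X_1 = \hat\gamma(X_1) \bmod \ker\alpha$; the bracket $[\hat\gamma(X_1),s_2]_{\cA M}$ decomposes into a component in $\cE$ and a component in $T^*M$. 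The $\cE$-component is precisely the Lie-algebroid bracket $[\gamma(X_1),s_2]_{\cA'}$, since $\cA'=T\cG_0/G_0$ carries the Weyl connection as a genuine principal/Lie-algebroid connection (Corollary \ref{Atiyah_of_G0}). The $T^*M$-component I would extract by pairing against an arbitrary vector field using axiom \eqref{eq:metric_preserving_property}: testing $([\hat\gamma(X_1),s_2]_{\cA M},\hat\gamma(X_3))$ and using that $\alpha(X_1)(\hat\gamma(X_3),s_2)=0$ forces this pairing to equal $-(s_2,[\hat\gamma(X_1),\hat\gamma(X_3)]_{\cA M})=-(R^\gamma(X_1,X_3),s_2)$, giving the term $-(R^\gamma(X_1,-),s_2)$.

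For the bracket $[s_1,s_2]_{\cA M}$ of two $\cE$-sections, both lie in $\ker\alpha$, so the anchor vanishes and the bracket again splits into an $\cE$-part and a $T^*M$-part. The $\cE$-part is the fibrewise Lie bracket $[s_1,s_2]_{\cE}$ inherited from $\fg_0$, which is the reduced bracket restricted to $\Gamma(\cE)$. The $T^*M$-part is recovered by pairing against $\hat\gamma(X_3)$ and invoking \eqref{eq:metric_preserving_property} once more: the compatibility with the metric rewrites $([s_1,s_2]_{\cA M},\hat\gamma(X))$ in terms of $([\hat\gamma(X),s_1]_{\cA M},s_2)$ modulo an anchor-derivative term that vanishes since $\alpha(s_i)=0$, producing the term $([\gamma(-),s_1]_{\cA'},s_2)$.

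The main obstacle I anticipate is bookkeeping the $T^*M$-valued components correctly and verifying that the pairing identities extracted from \eqref{eq:metric_preserving_property} and \eqref{eq:non-skew-symmetry} are consistent, in particular that the sign conventions and the placement of $R^\gamma$ versus the reduced bracket match across the two formulas. A subtle point is that $\hat\gamma$ is only a lift of $\gamma$ to $\cA M$ and need not itself be a Lie-algebroid morphism, so one must consistently reduce modulo $(\ker\alpha)^\perp \cong T^*M$ whenever invoking the $\cA'$-bracket; I would keep all computations at the level of pairings against $\hat\gamma(X_3)$ to avoid ambiguity, exactly as in the proof of Proposition \ref{formula_vectors}, and appeal to the metric-preserving axiom to convert every unknown $T^*M$-component into an expression in $R^\gamma$ and $(-,-)_{\cE}$.
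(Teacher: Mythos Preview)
Your proposal is correct and follows essentially the same route as the paper: both compute the $T^*M$-components by pairing against $\hat\gamma(X_3)$ and invoking the metric-preserving axiom \eqref{eq:metric_preserving_property}, using the vanishing of $(\hat\gamma(X_3),s_2)$ and $\alpha(s_i)=0$ to reduce the expressions to $-(R^\gamma(X_1,X_3),s_2)$ and $([\gamma(X_3),s_1]_{\cA'},s_2)$ respectively, and both obtain the skew-symmetry of $[X_1,s_2]_{\cA M}$ from \eqref{eq:symmetrization} together with $(X_1,s_2)=0$. The only minor point you leave implicit is the absence of a $TM$-component in $[X_1,s_2]_{\cA M}$, which the paper dispatches via Proposition~\ref{Leibniz_anchor}(2); your remark that the bracket lands in $\ker\alpha$ covers this once you note $\alpha([X_1,s_2]_{\cA M})=[X_1,0]=0$.
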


\begin{proof}
It is clear from Proposition \ref{Leibniz_anchor}(2) that $[X_1,s_2]_{\cA M}$ and $[s_1,s_2]_{\cA M}$ cannot have any terms in $\Gamma(TM)$, so the first terms of each identity are as shown above. It remains to show that the 1-form is defined as such for each identity. Suppose $\phi(X_1,s_2)$ is the 1-form in $[X_1,s_2]_{\cA M}$, and let $X_3\in \Gamma(TM)$. Then we have
\begin{align*}
\langle\phi(X_1,s_2),X_3\rangle &= (\phi(X_1,s_2),\hat{\gamma}(X_3))\\
&= ([\hat{\gamma}(X_1),s_2]_{\cA M}, \hat{\gamma}(X_3))\\
&= (X_1)(\hat{\gamma}(X_3),s_2) - ([\hat{\gamma}(X_1),\hat{\gamma}(X_3)]_{\cA M},s_2)\\
&= -(\hat{\gamma}([X_1,X_3]) + \widehat{R^{\gamma}(X_1,X_3)} + \cH(X_1,X_3,-),s_2)\\
&= -(\widehat{R^{\gamma}(X_1,X_3)},s_2),
\end{align*}
as required. Since $(X_1,s_2) = 0$, it follows from \eqref{eq:symmetrization} that
\[[X_1,s_2]_{\cA M} + [s_2,X_1]_{\cA M} = 0,\]
so $[X_1,s_2]_{\cA M} = -[s_2,X_1]_{\cA M}$. For the second identity, let $\varphi(s_1,s_2)$ be the 1-form. It follows that
\begin{align*}
\langle \varphi(s_1,s_2),X_3 \rangle &= (\varphi(s_1,s_2),\hat{\gamma}(X_3))\\
&= ([s_1,s_2]_{\cA M},\hat{\gamma}(X_3))\\
&= -(s_2, [s_1,\hat{\gamma}(X_3)]_{\cA M})\\
&= ([\hat{\gamma}(X_3),s_1]_{\cA M}, s_2)\\
&= ([\gamma(X_3),s_1]_{\cA'}, s_2),
\end{align*}
as required. This completes the proof.
\end{proof}

Finally, we need to find the conditions that guarantee the Jacobi identity for the bracket defined by the formulas. We first make this observation.

\begin{lem}
The map $\nabla \colon \Gamma(TM) \otimes \Gamma(\cE) \to \Gamma(\cE)$, defined for any $X\in\Gamma(TM)$ and $s\in\Gamma(\cE)$ by
\[\nabla_X s := [\gamma(X),s]_{\cA'},\]
is a Koszul connection on $\cE$, so it satisfies, for any $f\in\smooth(M)$, the equations
\[\nabla_{(fX)} s = f\nabla_X s \quad \textrm{and} \quad \nabla_X(fs) = f\nabla_X s + X(f)s\]
\end{lem}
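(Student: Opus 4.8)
The plan is to deduce both identities purely from the Lie algebroid axioms for $\cA'$ (Definition \ref{dfn:Lie_algebroid}), combined with two structural facts about the Weyl connection $\gamma$: first, that $\gamma\colon TM \to \cA'$ is a vector bundle morphism and is therefore $\smooth(M)$-linear, so that $\gamma(fX) = f\gamma(X)$; and second, that $\gamma$ is a section of the anchor of $\cA'$ (by Corollary \ref{Atiyah_of_G0}), so that $\alpha(\gamma(X)) = X$, where $\alpha$ now denotes the anchor of the reduced Lie algebroid $\cA' \cong TM \oplus \cE$. I shall also use repeatedly that $\cE = \ker\alpha$ as a sub-bundle of $\cA'$, whence $\alpha(s) = 0$ for every $s\in\Gamma(\cE)$.

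Before verifying the two equations, I would first check that $\nabla$ is well defined, i.e. that $\nabla_X s = [\gamma(X),s]_{\cA'}$ indeed lands in $\Gamma(\cE)$. This is immediate from the compatibility of the bracket with the anchor (Definition \ref{dfn:Lie_algebroid}(3)): since $\alpha(\gamma(X)) = X$ and $\alpha(s) = 0$, we obtain $\alpha([\gamma(X),s]_{\cA'}) = [X,0] = 0$, so the bracket takes values in $\ker\alpha = \cE$.

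For the second identity I would apply the Leibniz rule (Definition \ref{dfn:Lie_algebroid}(2)) directly in the second slot of the bracket, getting $[\gamma(X), fs]_{\cA'} = f[\gamma(X), s]_{\cA'} + \alpha(\gamma(X))(f)\, s$; substituting $\alpha(\gamma(X)) = X$ yields exactly $\nabla_X(fs) = f\nabla_X s + X(f)\, s$. For the first identity I would use the $\smooth(M)$-linearity of $\gamma$ to write $\nabla_{fX} s = [f\gamma(X), s]_{\cA'}$, then transport the Leibniz rule to the first slot via skew-symmetry: $[f\gamma(X), s]_{\cA'} = -[s, f\gamma(X)]_{\cA'} = -\bigl(f[s,\gamma(X)]_{\cA'} + \alpha(s)(f)\,\gamma(X)\bigr) = f[\gamma(X),s]_{\cA'} - \alpha(s)(f)\,\gamma(X)$. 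Because $s\in\Gamma(\cE) = \ker\alpha$, the correction term $\alpha(s)(f)\,\gamma(X)$ vanishes, leaving $\nabla_{fX} s = f\nabla_X s$.

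The computation is essentially routine once the right facts are assembled; the only point requiring a moment's care, and hence the closest thing to an obstacle, is that the Leibniz rule is stated only for the second argument of the bracket, so the homogeneity of $\nabla$ in the vector-field slot is not completely immediate and must be obtained by first passing through skew-symmetry and then invoking the vanishing of $\alpha$ on $\cE$ to kill the resulting correction term.
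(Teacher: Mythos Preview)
Your proof is correct and follows essentially the same approach as the paper's own argument, which simply says that the result ``is a consequence of the Leibniz rule for Lie algebroids and the fact that the projection of $s$ under the anchor vanishes.'' You have spelled out in full the details that the paper leaves implicit, including the check that $\nabla_X s$ lands in $\Gamma(\cE)$ and the use of skew-symmetry to transfer the Leibniz rule to the first slot.
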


\begin{proof}
This is a consequence of the Leibniz rule for Lie algebroids and the fact that the projection of $s$ under the anchor vanishes.
\end{proof}

\begin{rem}
To simplify notation, we shall write, for any $X\in\Gamma(TM)$ and $s\in\Gamma(\cE)$, the expression $[\gamma(X),s]_{\cA'}$ as $\nabla_X s$.
\end{rem}

Next, we recall the following definition of the first Pontrjagin class.

\begin{dfn}(Bressler~\cite{Bressler07}, Chen et al.~\cite{ChenStienonXu09})\label{dfn:Pontrjagin}
Let $\cA$ be a transitive Lie algebroid with bracket $[-,-]_{\cA}$, anchor $\alpha$ and $\cE := \ker \alpha$. Suppose $\cA$ is endowed with an $\ad$-invariant fiber-wise non-degenerate bilinear pairing $(-,-)_{\cE}$ on $\cE$ such that the bracket is compatible with the pairing. Then $(\cA, (-,-)_{\cE})$ is called a \emph{quadratic} Lie algebroid by Chen et al.~\cite{ChenStienonXu09}.

Let $\gamma$ be a section of the anchor $\alpha$ and $R^{\gamma}$ be its curvature (Definition \ref{dfn:Lie_curvature}). Then the cohomology class of the 4-form $\langle R^{\gamma} \wedge R^{\gamma} \rangle \in \Omega^4(M)$, which is defined for any $X_1,X_2,X_3,X_4 \in \Gamma(TM)$ by
\[\langle R^{\gamma} \wedge R^{\gamma} \rangle (X_1,X_2,X_3,X_4) := \frac{1}{4} \sum_{\tau\in S_4} (R^{\gamma}(X_{\tau(1)}, X_{\tau(2)}), R^{\gamma}(X_{\tau(3)}, X_{\tau(4)}))_{\cE}\]
is called the \emph{first Pontrjagin class} associated to $(\cA, (-,-)_{\cE})$.
\end{dfn}

\begin{rem}
Note that $\langle R^{\gamma} \wedge R^{\gamma} \rangle$ is a closed 4-form and the first Pontrjagin class does not depend on $\gamma$. We also remark that the reduced Lie algebroid $(\cA', (-,-)_{\cE})$ of $\cA M$ satisfies the conditions of Definition \ref{dfn:Pontrjagin}, so it has an associated first Pontrjagin class.
\end{rem}

We are now ready to state the conditions that must hold for the Jacobi identity to be true for the bracket just defined.

\begin{pro}
Let $X_1,X_2,X_3\in \Gamma(TM)$ and $s_1,s_2,s_3\in\Gamma(\cE)$. For the formulas given in Corollary \ref{formula_forms} and Propositions \ref{formula_vectors} and \ref{fomulas_endo} to define a Courant bracket on $\cA M$, the following compatibility equations must be satisfied:
\[\nabla_{X_1} [s_2,s_3]_{\cE} = [\nabla_{X_1} s_2, s_3]_{\cA'} + [s_2, \nabla_{X_1} s_3]_{\cA'};\]
\[\nabla_{X_1} \nabla_{X_2} s_3 - \nabla_{X_2} \nabla_{X_1} s_3 - \nabla_{[X_1,X_2]} s_3 =[R^{\gamma}(X_1,X_2),s_3]_{\cE};\]
\[\set{\nabla_{X_1} R^{\gamma}(X_2,X_3) + R^{\gamma}([X_1,X_2],X_3)} + \textrm{cyclic permutations} = 0;\]
the condition that
\[d\cH = \frac{1}{2} \langle R^{\gamma} \wedge R^{\gamma} \rangle,\]
where $\cH$ is the 3-form defined in Proposition \ref{formula_vectors} and $\langle R^{\gamma} \wedge R^{\gamma} \rangle$ is the closed 4-form defined in Definition \ref{dfn:Pontrjagin}; and the Cartan relations.
\end{pro}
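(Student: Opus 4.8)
The plan is to substitute the bracket formulas of Corollary \ref{formula_forms} and Propositions \ref{formula_vectors} and \ref{fomulas_endo} into the Courant Jacobi identity \eqref{eq:Courant_Jacobi_identity}, and to evaluate the Jacobiator $\cJ$ on triples of homogeneous generators, i.e. sections drawn from the three summands $\Gamma(TM)$, $\Gamma(\cE)$ and $\Gamma(T^*M)$ of $\cA M \cong TM \oplus \cE \oplus T^*M$. By $\RR$-bilinearity and the tensoriality of the Jacobiator (a consequence of the Leibniz rule, Proposition \ref{Leibniz_anchor}(1)), it suffices to evaluate on such generators and then to read off the three components of $\cJ$ under the splitting, setting each to zero. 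One structural simplification organises everything: since the anchor is a bracket homomorphism (Proposition \ref{Leibniz_anchor}(2)) and vector fields already satisfy the Jacobi identity, the $TM$-component of $\cJ$ vanishes identically, so no condition ever arises from that projection. The equations to be extracted therefore live entirely in $\ker\alpha = \cE \oplus T^*M$.

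I would then proceed case by case on the types of the three arguments, recalling that the first three equations are nothing but the components of the Jacobi identity for the reduced Lie algebroid $\cA' = TM \oplus \cE$. The triple $(X_1, s_2, s_3)$, projected onto $\Gamma(\cE)$, yields the first equation, i.e. that $\nabla_{X_1} = [\gamma(X_1), -]_{\cA'}$ is a derivation of the fibrewise bracket $[-,-]_{\cE}$. The triple $(X_1, X_2, s_3)$, projected onto $\Gamma(\cE)$, yields the second equation, identifying the curvature of the Koszul connection $\nabla$ with the adjoint action in $\cE$ of the Lie algebroid curvature $R^{\gamma}$. The all-vector-field triple $(X_1, X_2, X_3)$ splits into two pieces: its $\cE$-projection is the Bianchi identity (the third equation), while its $T^*M$-projection is the relation $d\cH = \tfrac{1}{2}\langle R^{\gamma}\wedge R^{\gamma}\rangle$. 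The remaining triples --- those with at least one argument in $\Gamma(T^*M)$ --- collapse by means of \eqref{eq:Courant_for_del} and the vanishing and Lie-derivative formulas of Corollary \ref{formula_forms} to the identities of the Cartan calculus (such as $\cL_{[X_1,X_2]}\eta = \cL_{X_1}\cL_{X_2}\eta - \cL_{X_2}\cL_{X_1}\eta$ on forms), which are the stated Cartan relations.

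The main obstacle is the $T^*M$-projection of the triple $(X_1, X_2, X_3)$, where the first Pontrjagin form appears. Here one must track every $T^*M$-valued contribution: the terms $\cH([X_i,X_j],-)$ and $\cL_{X_i}\cH(X_j,X_k,-)$ coming from Proposition \ref{formula_vectors}, and the terms $-(R^{\gamma}(X_i,-), R^{\gamma}(X_j,X_k))_{\cE}$ coming from the $T^*M$-valued part of $[X_i, s]_{\cA M}$ in Proposition \ref{fomulas_endo} when $s = R^{\gamma}(X_j,X_k)$. Pairing with a fourth field $X_4$ and antisymmetrising, the Lie-derivative and bracket terms reassemble, via Cartan's formula for the exterior derivative of a $3$-form, into $(d\cH)(X_1,X_2,X_3,X_4)$, while the curvature-pairing terms reassemble into the normalised sum $\tfrac{1}{4}\sum_{\tau\in S_4}(R^{\gamma}(X_{\tau(1)},X_{\tau(2)}), R^{\gamma}(X_{\tau(3)},X_{\tau(4)}))_{\cE}$ of Definition \ref{dfn:Pontrjagin}. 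The delicate points are purely combinatorial: the cyclic sum produced by the Jacobiator, the sign and symmetry conventions, and the factor $\tfrac{1}{2}$ relating $d\cH$ to $\langle R^{\gamma}\wedge R^{\gamma}\rangle$; I expect the bookkeeping of these coefficients, rather than any conceptual step, to require the most care. Finally, once the first three equations hold, $\cA'$ is a genuine Lie algebroid and the structure is a pre-Courant algebroid, so by the remark following Definition \ref{dfn:Courant_algebroid} the residual Jacobiator is forced into $T^*M$; the fourth equation is then exactly the condition for this remaining $1$-form to vanish, its defect being represented by the first Pontrjagin class. Retracing each projection shows these conditions are also sufficient, so they characterise precisely when the formulas define a Courant bracket.
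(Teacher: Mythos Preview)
Your proposal is correct and follows essentially the same approach as the paper: a case-by-case evaluation of the Jacobiator on homogeneous triples from $\Gamma(TM)$, $\Gamma(\cE)$, and $\Gamma(T^*M)$, reading off the compatibility equations from the $\cE$- and $T^*M$-projections. The only notable difference is one of detail: the paper disposes of the $(X_1,X_2,X_3)$ case by citing \cite[Lemma 3.8]{Bressler07} for the full computation, whereas you outline explicitly how the $T^*M$-projection assembles into $d\cH - \tfrac{1}{2}\langle R^{\gamma}\wedge R^{\gamma}\rangle$ via Cartan's formula and the curvature-pairing terms from Proposition \ref{fomulas_endo}.
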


\begin{proof}
It is clear from Corollary \ref{formula_forms} that the Jacobi identity is trivially satisfied when it involves more than one 1-form, or one 1-form and one element from $\Gamma(\cE)$. For the cases where the Jacobi identity involves two vector fields and one 1-form, the Cartan relations ensure that the Jacobi identity is satisfied. It remains to consider the remaining cases.

For the case
\[[X_1, [s_2,s_3]_{\cA M} ]_{\cA M} = [[X_1,s_2]_{\cA M}, s_3]_{\cA M} + [s_2, [X_1, s_3]_{\cA M} ]_{\cA M},\]
it is easy to see that this reduces to the equation
\[\nabla_{X_1} [s_2,s_3]_{\cE} = [\nabla_{X_1} s_2, s_3]_{\cA'} + [s_2, \nabla_{X_1} s_3]_{\cA'}.\]
In the case where
\[[X_1, [X_2,s_3]_{\cA M} ]_{\cA M} = [[X_1,X_2]_{\cA M}, s_3]_{\cA M} + [X_2, [X_1, s_3]_{\cA M} ]_{\cA M},\]
we see that
\begin{align*}
[X_1, [X_2,s_3]_{\cA M} ]_{\cA M} &= [X_1, \nabla_{X_2} s_3 - (R(X_2, -),s_3)_{\cE}]_{\cA M}\\
&= \nabla_{X_1}\nabla_{X_2} s_3
\end{align*}
and, similarly, $[X_2, [X_1, s_3]_{\cA M} ]_{\cA M} = \nabla_{X_2}\nabla_{X_1} s_3$. Since
\[[[X_1,X_2]_{\cA M}, s_3]_{\cA M} = \nabla_{[X_1,X_2]} s_3 + [R^{\gamma}(X_1,X_2), s_3]_{\cE},\]
we see that the Jacobi identity in this case is equivalent to
\[\nabla_{X_1} \nabla_{X_2} s_3 - \nabla_{X_2} \nabla_{X_1} s_3 - \nabla_{[X_1,X_2]} s_3 =[R^{\gamma}(X_1,X_2),s_3]_{\cE}.\]
The last two identities, together with the Jacobi identity for vector fields on $M$, are equivalent to the Jacobi identity
\[[X_1, [X_2,X_3]_{\cA M} ]_{\cA M} = [[X_1,X_2]_{\cA M}, X_3]_{\cA M} + [X_2, [X_1, X_3]_{\cA M} ]_{\cA M}.\]
The verification is a long but straightforward calculation, and we refer to \cite[Lemma 3.8]{Bressler07} for the full details of that calculation.
\end{proof}

We collect the formulas for the Courant bracket in the Theorem below.

\begin{thm}\label{main_theorem}
Let $(\cP\to M, \omega)$ be a parabolic geometry of type $(G,P)$ over a smooth manifold $M$ with a Cartan connection $\omega$. Suppose there is a Weyl structure $\sigma$ on $\cP$, so that we have a Weyl connection $\gamma$, which is also a principal connection on $\cG_0 := \cP / P_+$, with curvature $R^{\gamma}$. Let 
\[\cA M\cong TM \oplus \cE \oplus T^*M\] 
be the adjoint tractor bundle of $\cP$ with an $\ad$-invariant metric $(-,-)$ induced from the Killing form on $\fg$. For any $X_1,X_2,X_3\in \Gamma(TM)$, $s_1,s_2,s_3\in\Gamma(\cE)$ and $\eta_1,\eta_2\in\Omega^1(M)$, the formulas
\begin{align*}
[X_1,X_2]_{\cA M} &= [X_1,X_2] + R^{\gamma}(X_1,X_2) + \cH(X_1,X_2,-),\\
[X_1,s_2]_{\cA M} &= [\gamma(X_1),s_2]_{\cA'} - (R^{\gamma}(X_1,-),s_2) = -[s_2,X_1]_{\cA M},\\
[s_1,s_2]_{\cA M} &= [s_1,s_2]_{\cE} + ([\gamma(-),s_1]_{\cA'},s_2),\\
[X_1, \eta_2]_{\cA M} &= \cL_{X_1}\eta_2,\\
[\eta_1, X_2]_{\cA M} &= -\cL_{X_2}\eta_1 + d(\eta_1, X_2),\\
[\eta_1, s_2]_{\cA M} &= [s_1, \eta_2]_{\cA M} = [\eta_1, \eta_2]_{\cA M} = 0,
\end{align*}
together with the compatibility equations
\[\nabla_{X_1} [s_2,s_3]_{\cE} = [\nabla_{X_1} s_2, s_3]_{\cA'} + [s_2, \nabla_{X_1} s_3]_{\cA'},\]
\[\nabla_{X_1} \nabla_{X_2} s_3 - \nabla_{X_2} \nabla_{X_1} s_3 - \nabla_{[X_1,X_2]} s_3 =[R^{\gamma}(X_1,X_2),s_3]_{\cE},\]
\[\set{\nabla_{X_1} R^{\gamma}(X_2,X_3) + R^{\gamma}([X_1,X_2],X_3)} + \textrm{cyclic permutations} = 0,\]
the topological condition
\begin{equation}\label{eq:exact_Pontrjagin}
d\cH = \frac{1}{2} \langle R^{\gamma} \wedge R^{\gamma} \rangle,
\end{equation}
and the Cartan relations, define a Courant bracket $[-,-]_{\cA M}$ on $\cA M$.
\end{thm}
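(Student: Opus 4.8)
The plan is to show that the $\RR$-bilinear operation $[-,-]_{\cA M}$ obtained by extending the listed formulas over the decomposition $\cA M \cong TM \oplus \cE \oplus T^*M$ of Corollary \ref{AM_with_Gr} satisfies the three axioms of Definition \ref{dfn:Courant_algebroid}, with anchor $\alpha$ the projection onto $TM$ and metric $(-,-)$ induced from the Killing form. Since the formulas in Corollary \ref{formula_forms} and Propositions \ref{formula_vectors} and \ref{fomulas_endo} were obtained as necessary consequences of those axioms, the content of the theorem is the converse assertion of sufficiency: that a bracket prescribed by these formulas, subject to the stated compatibility equations, the topological condition $d\cH = \frac12 \langle R^{\gamma} \wedge R^{\gamma}\rangle$, and the Cartan relations, genuinely defines a Courant bracket. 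I would therefore verify the axioms in turn on homogeneous arguments drawn from $\Gamma(TM)$, $\Gamma(\cE)$ and $\Omega^1(M)$.

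First I would dispose of the two metric axioms \eqref{eq:non-skew-symmetry} and \eqref{eq:metric_preserving_property}. For \eqref{eq:non-skew-symmetry} it suffices to confirm that the symmetric part of the bracket agrees with $\partial$, i.e. that the formulas reproduce the symmetrisation identity \eqref{eq:symmetrization}; this is immediate since $TM$ and $T^*M$ are isotropic, $\cE$ is orthogonal to both, and the only nonvanishing pairing is the duality $\langle -,- \rangle$ of $TM$ with $T^*M$, against which the pair $[X_1,\eta_2]_{\cA M}=\cL_{X_1}\eta_2$ and $[\eta_1,X_2]_{\cA M}=-\cL_{X_2}\eta_1 + d(\eta_1,X_2)$ is built to be compatible. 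For the metric-preserving axiom \eqref{eq:metric_preserving_property} I would check each homogeneous triple separately; the essential inputs are the $\ad$-invariance of $(-,-)_{\cE}$ and the fact that the Koszul connection $\nabla_X s = [\gamma(X),s]_{\cA'}$ is metric-compatible, because $\gamma$ is the principal Weyl connection on $\cG_0$ and $(-,-)_{\cE}$ is the induced $\ad$-invariant fibre metric on $\cE = \cG_0\times_{G_0}\fg_0$. The curvature and $\cH$ contributions then pair correctly against the precise definitions of $R^{\gamma}$ and $\cH$ given in Proposition \ref{formula_vectors}.

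For the Jacobi identity \eqref{eq:Courant_Jacobi_identity} I would appeal directly to the preceding Proposition, which reduces the vanishing of the Jacobiator on the surviving triples to exactly the listed compatibility equations --- the $\nabla$-derivation property of $[-,-]_{\cE}$, the relation of the curvature of $\nabla$ to $[R^{\gamma},-]_{\cE}$, the Bianchi-type identity for $R^{\gamma}$, and the exactness relation $d\cH = \frac12 \langle R^{\gamma}\wedge R^{\gamma}\rangle$ --- together with the Cartan relations handling the two-vector, one-form triples; triples with two or more one-forms are trivial by Corollary \ref{formula_forms}. Assembling these, the Jacobiator vanishes identically and the bracket is Courant.

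I expect the main obstacle to be the bookkeeping in the metric-preserving axiom and in the genuinely three-vector case of the Jacobi identity, where the failure of the bracket to take values in $\cA'$ is measured by $\cH$ and one must track how $d\cH$ enters. The decisive structural facts that make everything close up are that $(-,-)$ is $\ad$-invariant and $\nabla$ is a metric connection, so that all curvature contributions are controlled by the single closed $4$-form $\langle R^{\gamma}\wedge R^{\gamma}\rangle$ representing (twice) the first Pontrjagin class of $(\cA',(-,-)_{\cE})$; this is precisely why the topological condition \eqref{eq:exact_Pontrjagin} is the only genuine obstruction separating a Courant bracket from a pre-Courant bracket.
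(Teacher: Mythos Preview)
Your proposal is correct and follows essentially the same approach as the paper: the theorem there is presented as a summary of the preceding results, with the bracket formulas derived as necessary consequences of the Courant axioms in Corollary \ref{formula_forms} and Propositions \ref{formula_vectors}--\ref{fomulas_endo}, and the Jacobi identity reduced to the listed compatibility equations in the unnamed Proposition immediately before the theorem. Your framing as an explicit sufficiency check of the three axioms on homogeneous arguments is slightly more systematic than the paper's presentation, but the substance is the same.
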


\begin{rem}
Condition \eqref{eq:exact_Pontrjagin} states that $\langle R^{\gamma} \wedge R^{\gamma} \rangle$ is an exact 4-form. In other words, the first Pontrjagin class of $(\cA', (-,-)_{\cE})$ must vanish for the bracket $[-,-]_{\cA M}$ to define a Courant bracket on $\cA M$. 

If the first Pontrjagin class does not vanish, the adjoint tractor bundle then becomes what Hansen and Strobl~\cite[Definition 1]{HansenStrobl10} call an $H_4$-twisted Courant algebroid, where $H_4$ is the closed 4-form given by
\[H_4(X_1,X_2,X_3,-) = (-\frac{1}{2}\langle R^{\gamma} \wedge R^{\gamma} \rangle + dH)(X_1,X_2,X_3,-).\]
This is equivalent to the notion of a pre-Courant algebroid and it has been shown by Sheng and Liu that this gives rise to what is called a Leibniz 2-algebra in~\cite[Theorem 5.3]{ShengLiu10}.

Classes in degree 4 cohomology can also be associated to bundle 2-gerbes, as shown by Carey et al.~\cite{CJMSW05} and Stevenson~\cite{Stevenson04}, but we shall postpone the investigation of these objects to future work.
\end{rem}

We also get an interesting characterisation of parabolic geometries with flat Weyl connections.

\begin{cor}
Let $(\cP\to M, \omega)$ be a parabolic geometry of type $(G,P)$ over a smooth manifold $M$ with a Cartan connection $\omega$. Let $\sigma$ be a Weyl structure on $\cP$ and $\gamma$ be the associated Weyl connection. If $\gamma$ is flat, then the bracket $[-,-]_{\cA M}$ defined in Theorem \ref{main_theorem} endows the adjoint tractor bundle $\cA M$ of the parabolic geometry with the structure of a Courant algebroid.
\end{cor}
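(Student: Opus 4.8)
The strategy is to run through the conditions of Theorem \ref{main_theorem} and show that flatness of $\gamma$ forces each of them to hold. Recall that, by that theorem, the formulas define a genuine Courant bracket on $\cA M$ precisely when the three compatibility equations, the topological condition \eqref{eq:exact_Pontrjagin}, and the Cartan relations are simultaneously satisfied. So the plan is first to dispose of the conditions that hold irrespective of $\gamma$, and then to concentrate on the single place where the curvature $R^{\gamma}$ enters in an essential way.

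For the first step, I would observe that the three compatibility equations are exactly the components of the Jacobi identity for the reduced Lie algebroid $\cA'$, which is a genuine Lie algebroid --- the Atiyah algebroid of $\cG_0$ --- by Corollary \ref{Atiyah_of_G0} and Remark \ref{rem:reduced_Lie_algebroid}. Concretely, the first equation says that $\nabla$ differentiates $[-,-]_{\cE}$, the second is the standard relation $R^{\nabla} = [R^{\gamma},-]_{\cE}$ between the connection $\nabla$ on $\cE$ and the curvature $R^{\gamma}$, and the third is the second Bianchi identity for $R^{\gamma}$; none of these uses flatness. The Cartan relations are identities of the Lie-derivative and exterior-derivative calculus and likewise hold unconditionally. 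Setting $R^{\gamma} = 0$ only simplifies matters further: the second equation now asserts that $\nabla$ is a flat connection on $\cE$, and the third collapses to $0 = 0$.

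The crux is therefore the topological condition \eqref{eq:exact_Pontrjagin}, $d\cH = \frac{1}{2}\langle R^{\gamma} \wedge R^{\gamma} \rangle$, and this is where flatness does the real work. Since $R^{\gamma} = 0$, the right-hand side vanishes identically as a $4$-form, so the first Pontrjagin class of $(\cA', (-,-)_{\cE})$ is represented by the zero form and \eqref{eq:exact_Pontrjagin} reduces to the assertion that the $3$-form $\cH$ is closed. The remaining, and main, obstacle is to prove $d\cH = 0$ on the nose, not merely that its class is trivial. To do this I would return to the lift $\hat{\gamma}$ of the Weyl connection from Proposition \ref{formula_vectors}: its curvature $R^{\hat{\gamma}}$, a $\ker\alpha = \cE \oplus T^*M$-valued $2$-form, splits as $\widehat{R^{\gamma}} + \cH$, and with $R^{\gamma} = 0$ the $\cE$-valued part $\widehat{R^{\gamma}}$ drops out, leaving $R^{\hat{\gamma}}$ purely $T^*M$-valued and equal to $\cH$. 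The cleanest route is then to feed $R^{\gamma} = 0$ into the computation of the Jacobiator of three vector fields (the calculation of Bressler~\cite{Bressler07} underlying Theorem \ref{main_theorem}), which identifies that Jacobiator with $(d\cH - \frac{1}{2}\langle R^{\gamma} \wedge R^{\gamma}\rangle)(X_1,X_2,X_3,-)$; equivalently, to read $d\cH = 0$ off the Bianchi identity $d^{\hat{\gamma}} R^{\hat{\gamma}} = 0$ for the flat lifted connection. The book-keeping here --- matching the covariant exterior derivative of the $T^*M$-valued form $\cH$, taken with the flat connection that $\gamma$ induces on $T^*M$, with the ordinary exterior derivative $d\cH$ of the scalar $3$-form --- is the only genuine computation. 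Once $d\cH = 0$ is in hand, every condition of Theorem \ref{main_theorem} holds, the bracket on the $TM$-summand reduces to $[X_1,X_2]_{\cA M} = [X_1,X_2]$, and $[-,-]_{\cA M}$ is a Courant bracket, so $\cA M$ is a Courant algebroid.
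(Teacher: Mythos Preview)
Your verification that the three compatibility equations and the Cartan relations hold --- the former as components of the Jacobi identity for the genuine Lie algebroid $\cA'$, the latter universally --- is correct and in fact more explicit than the paper, which offers no proof and simply treats the corollary as immediate from the remark following Theorem~\ref{main_theorem}: once $R^\gamma=0$, the Pontrjagin form $\langle R^\gamma\wedge R^\gamma\rangle$ vanishes identically and is therefore exact, so condition~\eqref{eq:exact_Pontrjagin} can be met.

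Where you go astray is in your handling of the topological condition. You regard $\cH$ as a fixed $3$-form determined by the parabolic geometry and then set yourself the task of proving $d\cH=0$ for that specific form. But in Theorem~\ref{main_theorem} the $3$-form $\cH$ is an \emph{input}: the theorem asserts that \emph{for any} $\cH$ with $d\cH=\tfrac12\langle R^\gamma\wedge R^\gamma\rangle$ the displayed formulas define a Courant bracket. The apparently specific $\cH$ of Proposition~\ref{formula_vectors} arises only by assuming a Courant bracket $[-,-]_{\cA M}$ already exists and reading off what $\cH$ must then be; it is not a construction of $\cH$ from $\gamma$ and $\hat\gamma$ alone. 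With $R^\gamma=0$ the condition becomes $d\cH=0$, and one simply \emph{chooses} $\cH$ closed --- for instance $\cH=0$, which is in any case what your final sentence tacitly assumes when it asserts $[X_1,X_2]_{\cA M}=[X_1,X_2]$.

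Both of your proposed routes to $d\cH=0$ for a predetermined $\cH$ fail for this reason. Feeding $R^\gamma=0$ into Bressler's Jacobiator computation yields only $\cJ=d\cH$, which is the \emph{statement} that Jacobi holds iff $d\cH=0$, not a proof that it does; this is circular. And the lifted connection $\hat\gamma$ is not flat even when $\gamma$ is: by your own decomposition its curvature is $R^{\hat\gamma}=\widehat{R^\gamma}+\cH=\cH$, so there is no ``Bianchi identity for the flat lifted connection'' available in the form you invoke.
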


\subsection{(Pre-)Courant structures on Atiyah algebroids}\label{subsec:Courant_Atiyah}

In the following, we shall write $(\fg,\fh)$ for a pair of Lie algebras, where $\fh$ is a Lie subalgebra of $\fg$ and $\fg$ is assumed to be semisimple, and let $(G,H)$ denote the corresponding pair of connected Lie groups. 

We shall show that if the Atiyah algebroid of a principal $H$-bundle $\cP$ over a ($\dim\fg -\dim\fh$)-dimensional manifold $M$, with model fibre $\fg$, admit the structure of a pre-Courant algebroid, then $\fh$ is a parabolic subalgebra of $\fg$. Equivalently, this shows that for any Cartan geometry of type $(G,H)$ with $G$ semisimple, the adjoint tractor bundle can be equipped with a pre-Courant algebroid structure if and only if $H$ is a parabolic subgroup of $G$.

It is clear from the $|k|$-grading on $\fg$ (Definition \ref{dfn:k-grading}) induced by the choice of a parabolic subalgebra that parabolic subalgebras of semisimple Lie algebras are coisotropic. Our desired result follows from the converse, which turns out to be an old result of Dixmier~\cite[Lemme 1.1(i)]{Dixmier76}, who attributed the proof to P. Tauvel. We shall require Dixmier's result in a form (Proposition \ref{coisotropic_parabolic}) somewhat different from the version in the literature.

Dixmier's result requires the following observation. For any $h\in \fg$ with $\eta=B(h,-)\in \fg^*$, there is an associated skew-symmetric bilinear form given by
\[B_h(X,Y) = B(h,[X,Y]) = \eta([X,Y]), \quad \forall\, X,Y\in\fg.\]
Suppose $\fp$ is any subalgebra of $\fg$. Then we write
\[\fp^{\perp_h} := \set{X\in\fg\: | \: B(h,[X,Y])=0 \quad \forall\, Y\in\fp}.\]
We note that $h\in\fg^{\perp_h} \subset \fp^{\perp_h}$ by the invariance of the Killing form.

\begin{rem}
The notion of coisotropy used by Dixmier~\cite{Dixmier76} in the hypothesis for his Lemme 1.1 is that $\fp^{\perp_h} \subset \fp$. Similar notions have arisen, for example, in Poisson geometry and Lie bialgebras (see Li-Bland and Meinrenken~\cite{LiBlandMeinrenken09}, Weinstein~\cite{Weinstein88}, Zambon~\cite{Zambon11}). Our notion of coisotropy is somewhat different, but produces the desired result. For completeness, we give details of the proof, which closely follow the original, since it does not seem to be well-known.
\end{rem}

\begin{pro}\label{coisotropic_parabolic}
Let $\fg$ be a complex semisimple Lie algebra and $\fp$ be a subalgebra of $\fg$ such that $\fp^{\perp} \subset \fp$. Then $\fp$ is a parabolic subalgebra of $\fg$.
\end{pro}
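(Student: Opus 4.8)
The plan is to prove the statement exactly as the definition of parabolicity demands, namely by showing that $\fp$ contains a Borel subalgebra $\fb$ of $\fg$. The clean endgame I have in mind is a duality argument. Since $B$ is non-degenerate and $\fg$ is finite-dimensional, $(\fp^{\perp})^{\perp}=\fp$; and for any Borel $\fb=\fh\oplus\bigoplus_{\alpha\succ 0}\fg_{\alpha}$ one checks directly from the orthogonality $B(\fg_{\alpha},\fg_{\beta})=0$ for $\alpha+\beta\neq 0$ of root spaces under the Killing form that $\fb^{\perp}=\fn$, where $\fn=\bigoplus_{\alpha\succ 0}\fg_{\alpha}=[\fb,\fb]$ is the nilradical of $\fb$. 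Consequently the inclusion $\fb\subseteq\fp$ is equivalent to $\fp^{\perp}\subseteq\fn$, so the whole problem reduces to showing that $\fp^{\perp}$ is contained in the nilradical of some Borel subalgebra.

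First I would record the elementary structural facts about $\fp^{\perp}$. The invariance of $B$ gives, for $x\in\fp^{\perp}$ and $y,z\in\fp$, the identity $B([y,x],z)=-B(x,[y,z])=0$, since $[y,z]\in\fp$; hence $\fp^{\perp}$ is an ideal of $\fp$. Moreover the coisotropy hypothesis $\fp^{\perp}\subseteq\fp$ forces $B$ to vanish identically on $\fp^{\perp}\times\fp^{\perp}$, so $\fp^{\perp}$ is a totally isotropic subalgebra. Cartan's solvability criterion then shows that $\ad_{\fg}(\fp^{\perp})$ is solvable; in fact, using the $B$-induced isomorphism $\fg/\fp\cong(\fp^{\perp})^{*}$ of $\fp^{\perp}$-modules one even finds that the Killing form of the abstract Lie algebra $\fp^{\perp}$ is identically zero.

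The heart of the argument, and what I expect to be the main obstacle, is upgrading this to the statement that every element of $\fp^{\perp}$ is $\ad$-nilpotent, so that $\fp^{\perp}$ is a nil subalgebra. Isotropy alone is genuinely insufficient here: in $\mathfrak{sl}_{3}$ the semisimple element $\mathrm{diag}(1,\zeta,\zeta^{2})$, with $\zeta$ a primitive cube root of unity, spans an isotropic subalgebra on which $B$ vanishes, yet it is not nilpotent. The point is that one must use that $\fp$ itself (equivalently $(\fp^{\perp})^{\perp}$) is a subalgebra, which is exactly the feature that example lacks. Concretely, for $x\in\fp^{\perp}$ with abstract Jordan decomposition $x=x_{s}+x_{n}$ in $\fg$, the operator $\ad(x)$ preserves both $\fp$ and $\fp^{\perp}$ (the latter being an ideal of $\fp$); since $\ad(x)x=0$ forces $0$ to be an eigenvalue, $\ad(x_{s})=(\ad x)_{s}$ is a polynomial in $\ad(x)$ with vanishing constant term, so $x_{s}$ normalises both $\fp$ and $\fp^{\perp}$. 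Combining $B(x_{s},x_{s})=B(x,x)=0$ with a root-space analysis of the $\ad(x_{s})$-eigenspace decomposition adapted to the filtration $\fp^{\perp}\subseteq\fp$ should force $x_{s}=0$. This is precisely the delicate content of Dixmier's lemma \cite{Dixmier76}, and it is where I expect the argument to require the most care.

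Once $\fp^{\perp}$ is known to be a nil subalgebra, I would conclude using the standard fact that a subalgebra consisting of $\ad$-nilpotent elements exponentiates to a unipotent subgroup of $G$, which is contained in the unipotent radical of some Borel subgroup; passing to Lie algebras yields $\fp^{\perp}\subseteq\fn$ for a Borel $\fb$. Dualising as in the first paragraph then gives $\fb=\fn^{\perp}\subseteq(\fp^{\perp})^{\perp}=\fp$, so $\fp$ contains a Borel subalgebra and is therefore parabolic.
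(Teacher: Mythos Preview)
Your overall strategy---show that $\fp^{\perp}$ consists of $\ad_{\fg}$-nilpotent elements, embed it in the nilradical $\fn$ of some Borel, then dualise via $\fb=\fn^{\perp}\subseteq(\fp^{\perp})^{\perp}=\fp$---is sound and genuinely different from the paper's route. The paper works instead with Dixmier's auxiliary alternating form $B_h(X,Y)=B(h,[X,Y])$ attached to an element $h\in\fg$: it establishes $[h,\fp^{\perp_h}]=\fp^{\perp}$ by a dimension count, and then argues directly that this image lands inside $\fh_+=\bigoplus_{\eta\succ0}\fg_{\eta}$ for a suitable choice of Cartan subalgebra and positive system. Your approach avoids the auxiliary $h$ entirely and is conceptually more direct, at the cost of concentrating all the difficulty into the nilpotency step.

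That step is where your sketch does not close. You correctly observe that $\ad(x_s)$ is a polynomial in $\ad(x)$ without constant term, so it preserves $\fp$ and $\fp^{\perp}$; since $\ad(x)(\fp)\subseteq\fp^{\perp}$ (the ideal property), the same holds for $\ad(x_s)$, hence $\ad(x_s)$ vanishes on $\fp/\fp^{\perp}$. By the duality $\fg/\fp\cong(\fp^{\perp})^{*}$ the eigenvalues of $\ad(x_s)$ on $\fg$ are then those on $\fp^{\perp}$ together with their negatives, whence $B(x_s,x_s)=2\sum_i\lambda_i^{2}$. But over $\CC$ the vanishing of $\sum_i\lambda_i^{2}$ does not force the $\lambda_i$ to vanish---your own $\mathfrak{sl}_{3}$ example is precisely this phenomenon---so the ``root-space analysis adapted to the filtration'' still has real work to do that you have not indicated. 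Note also that you have only shown $x_s$ \emph{normalises} $\fp$ and $\fp^{\perp}$, not that $x_s\in\fp^{\perp}$ or even $x_s\in\fp$; and the citation to Dixmier is not quite on point, since his Lemme~1.1 is formulated for the $B_h$-orthogonal $\fp^{\perp_h}$ rather than the Killing-orthogonal $\fp^{\perp}$, so it cannot be invoked as a black box for this step without further adaptation.
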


\begin{proof}
We use the invariance of the Killing form to see that
\[B([h,\fp^{\perp_h}],\fp) = B(h,[\fp^{\perp_h},\fp]) = 0 \]
This shows that $[h,\fp^{\perp_h}]\subset \fp^{\perp}$. Swapping $\fp^{\perp_h}$ and $\fp$ in the bracket, we find that $(\fp^{\perp_h})^{\perp_h}= \fp$, so $h\in\fp$ as well.

To see the reverse inclusion, note that the map $\ad(h)\colon \fp^{\perp_h} \to [h, \fp^{\perp_h}]$ vanishes precisely over $\fg^{\perp_h}$. It follows that $\dim ([h,\fp^{\perp_h}]) = \dim (\fp^{\perp})$, so we have 
\[[h,\fp^{\perp_h}] = \fp^{\perp}.\]
It follows from the invariance of the Killing form that $[\fp^{\perp},\fp] \subset \fp^{\perp}$, so $\fp^{\perp}$ is an ideal in $\fp$, and hence a subalgebra of $\fg$. 

Since $\fg$ is a complex semisimple Lie algebra, we can choose a Cartan subalgebra $\fh_0$ and a root system $\Delta$ of $\fg$ with a subsystem of positive roots $\Delta^+$. Note that if
\[\fh_+ := \bigoplus_{\eta\in\Delta^+} \fg_{\eta},\]
then the Killing form is degenerate on $\fh_+$ and $\fb = \fh_0\oplus \fh_+$ is a Borel subalgebra of $\fg$. It follows that $[h, \fp^{\perp_h}] = \fp^{\perp}\subset \fh_+$, since $\fh_+$ is maximal amongst the subalgebras in $\fg$ for which the Killing form is degenerate. Now,
\begin{align*}
B(h, [\fp^{\perp_h}, \fh_0\oplus \fh_+]) &= B([h, \fp^{\perp_h}], \fh_0\oplus \fh_+)\\
& \subset B(\fh_+, \fh_0\oplus \fh_+),
\end{align*}
but $B(\fh_+, \fh_0\oplus \fh_+) = 0$, which implies that $\fh_0\oplus \fh_+ \subset (\fp^{\perp_h})^{\perp_h}= \fp$. Since $\fp$ contains a Borel subalgebra of $\fg$, it follows that $\fp$ is a parabolic subalgebra of $\fg$.
\end{proof}

\begin{rem}
The result also holds when we work over the reals, since a parabolic subalgebra $\fp$ of a real semisimple Lie algebra $\fg$ is defined to be a subalgebra such that its complexification $\fp_{\CC}$ is a parabolic subalgebra of $\fg_{\CC}$.
\end{rem}

We now state the following theorem. The real case follows from complexification.

\begin{thm}\label{classification}
Let $\fg$ be a complex semisimple Lie algebra and $\fh\subset \fg$ be a Lie subalgebra. Let $(G,H)$ be the corresponding pair of connected Lie groups and $M$ be a manifold of dimension $n=\dim \fg - \dim \fh$. Then the Atiyah algebroid of a principal $H$-bundle $\cP$ over $M$
\[0 \to \cP\times_{H}\fh \to \cP\times_{H}\fg \to TM \to 0\]
admits the structure of a transitive (pre-)Courant algebroid if and only if $\fh$ is a parabolic subalgebra of $\fg$.
\end{thm}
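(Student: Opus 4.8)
The statement is a biconditional, and its two directions draw on quite different parts of the preceding development. The plan is to dispatch the ``if'' direction as an essentially immediate consequence of the explicit construction already carried out, and to treat the ``only if'' direction as the substantial part, reducing it to the coisotropy of the kernel of the anchor (Proposition \ref{coisotropic_kernel}) together with Dixmier's characterisation of parabolic subalgebras (Proposition \ref{coisotropic_parabolic}). Throughout I would fix the model fibre to be $\fg$, so that the hypothesis provides an isomorphism $\cP\times_H\fg\cong T\cP/H$, equivalently a Cartan connection making $\cP$ a Cartan geometry of type $(G,H)$.

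For the ``if'' direction, suppose $\fh$ is parabolic. Then by Proposition \ref{parabolic_k-grading} the choice of $\fh$ endows $\fg$ with a $|k|$-grading with $\fh=\fg^0$, so $\cP$ is a parabolic geometry and $\cP\times_H\fg$ is its adjoint tractor bundle. I would then equip it with the metric induced from the Killing form and with the bracket $[-,-]_{\cA M}$ written down in Theorem \ref{main_theorem}. Since the anchor is surjective onto $TM$ the structure is transitive, the pre-Courant axioms hold unconditionally, and the bracket is a genuine Courant bracket precisely when the topological condition \eqref{eq:exact_Pontrjagin} holds; in either case a transitive (pre-)Courant structure exists, which is all that is required.

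For the ``only if'' direction, suppose $\cP\times_H\fg$ carries a transitive pre-Courant structure with metric $(-,-)$ and anchor $\alpha$ whose kernel is the adjoint bundle $\cP\times_H\fh$. First I would invoke Proposition \ref{coisotropic_kernel} to conclude that $\ker\alpha$ is coisotropic, $(\ker\alpha)^{\perp}\subseteq\ker\alpha$; restricting to a single fibre this says exactly that $\fh^{\perp}\subseteq\fh$ in $\fg$ with respect to the fibre metric. The aim is to feed this into Proposition \ref{coisotropic_parabolic}, for which I must know the coisotropy holds with respect to the \emph{Killing} form. Here I would use the metric-preservation axiom \eqref{eq:metric_preserving_property}: with the first argument taken in $\ker\alpha$ the anchor term drops out, so the bracket-action of $\ker\alpha$ on the whole of $\cP\times_H\fg$ is skew-adjoint for $(-,-)$, and after identifying this action with the adjoint representation of $\fh$ on $\fg$ one obtains that the fibre metric is $\ad(\fh)$-invariant. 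For semisimple $\fg$ this should pin the fibre metric down tightly enough to transfer the coisotropy of $\fh$ to the Killing form $B$, at which point Proposition \ref{coisotropic_parabolic} gives that $\fh$ is parabolic; the real case then follows by complexification as remarked after that proposition.

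The hard part will be precisely this last transfer step. Coisotropy of $\fh$ with respect to an \emph{arbitrary} non-degenerate form is genuinely weaker than parabolicity: a Levi factor such as $\mathfrak{gl}_2\subset\mathfrak{sl}_3$ can be made Lagrangian by some non-degenerate symmetric form while failing to be parabolic, so it is essential to exploit that the Courant metric is forced to be $\ad$-invariant. Writing the fibre metric as $B(\phi\,\cdot\,,\cdot)$ for a $B$-self-adjoint, $\fh$-equivariant endomorphism $\phi$ of $\fg$, one has $\fh^{\perp}=\phi^{-1}(\fh^{\perp_B})$, so the equivalence of the two coisotropy conditions amounts to showing $\phi$ may be arranged to preserve $\fh$ — equivalently, that no non-parabolic subalgebra admits an $\ad(\fh)$-invariant non-degenerate form for which it is coisotropic. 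This is exactly where the semisimplicity of $\fg$ and the special invariance of $B$ must be brought to bear, and it is where I would expect to spend most of the effort.
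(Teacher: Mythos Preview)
Your strategy coincides with the paper's: the ``if'' direction is the construction of \S\ref{subsec:Courant_bracket}, and the ``only if'' direction goes through Proposition~\ref{coisotropic_kernel} (coisotropy of $\ker\alpha$) followed by Proposition~\ref{coisotropic_parabolic} (Dixmier).

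The substantive difference is that the paper does not confront the ``hard part'' you isolate. It simply takes the metric on $\cP\times_H\fg$ to be the one induced by the Killing form---this is treated as part of the hypothesis rather than something to be argued---so the fibrewise coisotropy $\fh^\perp\subseteq\fh$ coming out of Proposition~\ref{coisotropic_kernel} is already coisotropy with respect to $B$, and Proposition~\ref{coisotropic_parabolic} applies with no further work. Your concern about transferring coisotropy from an arbitrary $\ad(\fh)$-invariant non-degenerate form to the Killing form is legitimate and more scrupulous than what the paper actually establishes; the paper's proof does not address it. One minor structural point you omit: the paper inserts a short case split, disposing of the exact case $\fh^\perp=\fh$ separately (arguing it would force $\fh$ abelian, contradicting semisimplicity of $\fg$) before invoking Proposition~\ref{coisotropic_parabolic} in the genuinely coisotropic case. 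Your single inclusion $\fh^\perp\subseteq\fh$ absorbs this, which is fine since the proof of Proposition~\ref{coisotropic_parabolic} does not use strictness.
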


\begin{proof}
We have established the converse in \S \ref{subsec:Courant_bracket}. Now recall from Proposition \ref{coisotropic_kernel} that if $E$ is a pre-Courant algebroid with
\[0 \to \ker\alpha \to E \xrightarrow{\alpha} TM \to 0,\]
then $(\ker\alpha)^{\perp}\subseteq \ker\alpha$ with respect to the metric on $E$. If $E$ is an Atiyah algebroid, this reduces to finding all subalgebras $\fh$ of $\fg \cong (\fg/\fh) \oplus \fh$ such that, with respect to the Killing form, either $\fh$ is coisotropic (and $E$ is non-exact) or $\fh^{\perp} = \fh$ (and $E$ is exact).

If $\fh^{\perp} = \fh$, the properties of the exact pre-Courant algebroid imply that $\fh$ and $\fg/\fh$ are dual to each other, i.e. $\fg/\fh \cong \fh^*$. However, since the fibres of $TM$ are supposed to be modelled by $\fg/\fh$, this implies that $\fg/\fh$, and therefore $\fh$, must be isomorphic to (a direct sum of copies of) $\CC^n$. Hence, $\fg$ cannot be semisimple, which contradicts our hypotheses.

Proposition \ref{coisotropic_parabolic} then shows that the coisotropic case is equivalent to $\fh$ being a parabolic subalgebra of $\fg$. The results in \S \ref{subsec:Courant_bracket} show that, in this case, the Atiyah algebroid admits the structure of a non-exact transitive (pre-)Courant algebroid, which concludes the proof.
\end{proof}

\begin{rem}
We note that if we do not require $\fg$ to be semisimple, then $\fh$ need only be coisotropic in $\fg$. One could conceivably consider Cartan geometries in which this condition holds (see Xu~\cite{Xu12} for example), but since parabolic geometries are already fairly abundant and extensively studied in the literature, we have chosen to focus on these geometries.
\end{rem}

\section{Concluding remarks}\label{sec:Conclusion}

In this paper, we have shown that parabolic geometries are a source of transitive (pre-)Courant algebroids which are \emph{a priori} non-exact. The first Pontrjagin class of the associated Atiyah algebroid, represented by a closed 4-form, naturally appears in the process and would be interesting to study in relation to the parabolic geometry, particularly since classes in degree 4 cohomology can also be associated to bundle 2-gerbes (see, for example, the papers by Carey et al.~\cite{CJMSW05} and Stevenson~\cite{Stevenson04}).

Just as exact Courant algebroids have played a major role in the study of T-duality, via generalised complex geometry, non-exact Courant algebroids have also begun to appear in recent works generalising T-duality. For example, recent work by Hohm and Kwak~\cite{HohmKwak11}, who were extending the double field theory of Hull and Zwiebach~\cite{HullZwiebach09a,HullZwiebach09b} to heterotic strings, has shown that the gauge algebra, in this case, turns out to be a non-exact Courant algebroid.

The notion of a Courant algebroid has also been generalised in many directions. For example, Sheng and Liu~\cite{ShengLiu10} have studied Leibniz 2-algebras, which can arise from the pre-Courant algebroids studied here. The notion of a Leibniz algebroid, which sacrifices a weak skew symmetry for a strict Jacobi identity, has been introduced by Baraglia~\cite{Baraglia11a}. Interestingly, Baraglia~\cite{Baraglia11b} and Li-Bland~\cite{Li-Bland11} have introduced more exotic generalisations of Courant algebroids to study generalised versions of parabolic geometries, which may hint at a relation with the pre-Courant algebroids studied here.

After a draft of this paper was completed and circulated, several preprints have appeared that build upon the work done here, including those by Liu et al.~\cite{LiuShengXu12} and Xu~\cite{Xu12}. We are heartened by the growing interest in the topic.

\section*{Acknowledgements}

The authors would like to thank the Erwin Schr\"odinger Institute (ESI) in Vienna, where this work was begun, for its hospitality and excellent working conditions. The second author also gratefully acknowledges the receipt of an ESI Junior Research Fellowship. 

An earlier version of this work was presented by the second author at the Higher Structures in China II conference, held at the Mathematics School and Institute of Jilin University in Changchun, China. We warmly thank the organisers for the opportunity to present this work and the hospitality during the conference.

We are deeply grateful to Pavol \v{S}evera for pointing out the work of Li-Bland and Meinrenken~\cite{LiBlandMeinrenken09} and making the observation contained in a result of Dixmier~\cite{Dixmier76}, which resulted in the addition of \S \ref{subsec:Courant_Atiyah}.

We would like to thank David Baraglia for useful discussions and comments on an earlier version of this paper. Thanks are also due to David Roberts and Raymond Vozzo for additional comments.

\bibliographystyle{amsplain}
\bibliography{workbib}

\end{document}